\def\confversion{0}
\def\ifconf{\ifnum\confversion=1}
\def\ifnotconf{\ifnum\confversion=0}
\def\isthesis{0}
\def\showauthornotes{0}
\def\showkeys{0}
\def\showdraftbox{0}
\definecolor{darkred}{rgb}{0.5,0,0}
\definecolor{darkgreen}{rgb}{0,0.35,0}
\definecolor{darkblue}{rgb}{0,0,0.55}
\newcommand{\Authornote}[2]{{\sf\small\color{red}{[#1: #2]}}}
\newcommand{\Authorcomment}[2]{{\sf \small\color{gray}{[#1: #2]}}}
\newcommand{\Authorfnote}[2]{\footnote{\color{red}{#1: #2}}}
\newcommand{\Authornote}[2]{}
\newcommand{\Authorcomment}[2]{}
\newcommand{\Authorfnote}[2]{}
\newcommand{\draftbox}{\begin{center}
  \fbox{%
    \begin{minipage}{2in}%
      \begin{center}%
        \begin{Large}%
          \textsc{Working Draft}%
        \end{Large}\\
        Please do not distribute%
      \end{center}%
    \end{minipage}%
  }%
\end{center}
\vspace{0.2cm}}
\newcommand{\draftbox}{}
\newtheorem{theorem}{Theorem}[section]
\newtheorem{definition}[theorem]{Definition}
\newtheorem{lemma}[theorem]{Lemma}
\newtheorem{remark}[theorem]{Remark}
\newtheorem{corollary}[theorem]{Corollary}
\newtheorem{claim}[theorem]{Claim}
\newtheorem{fact}[theorem]{Fact}
\theoremstyle{remark}
\newtheorem{algo}[theorem]{Algorithm}
\def\FullBox{\hbox{\vrule width 6pt height 6pt depth 0pt}}
\def\qedsketch{\ifmmode\Box\else{\unskip\nobreak\hfil
\penalty50\hskip1em\null\nobreak\hfil$\Box$
\parfillskip=0pt\finalhyphendemerits=0\endgraf}\fi}
\def\to{\rightarrow}
\def\epsilon{\varepsilon}
\def\cal{\mathcal}
\newcommand{\etal}{et al.\xspace}
\newcommand{\E}{{\mathbb E}}
\newcommand{\F}{{\mathbb F}}
\newcommand{\cB}{\mathcal{B}}
\newcommand{\cF}{\mathcal{F}}
\newcommand{\rzero}{r_{in}}
\newcommand{\rone}{r_{out}}
\newcommand{\delzero}{\delta_{in}}
\newcommand{\delone}{\delta_{out}}
\newcommand{\qzero}{q_{in}}
\newcommand{\qone}{q_{out}}
\newcommand{\ip}[2] {\ensuremath{\left\langle #1 , #2 \right\rangle}}
\newcommand{\Esymb}{\mathbb{E}}
\def\Ex#1{%
    \ProbabilityRender{\Esymb}{#1}%
}
\def\ProbabilityRender#1#2{
  \@ifnextchar\bgroup%
  {\renderwithdist{#1}{#2}}
   {\singlervrender{#1}{#2}}
}
\def\singlervrender#1#2{%
   \ensuremath{\mathchoice
       {{#1}\left[ #2 \right]}
       {{#1}[ #2 ]}
       {{#1}[ #2 ]}
       {{#1}[ #2 ]}
   }
}
\def\renderwithdist#1#2#3{%
   \@ifnextchar\bgroup
   {\superfancyrender{#1}{#2}{#3}}
   {\ensuremath{\mathchoice
      {\underset{#2}{#1}\left[ #3 \right]}
      {{#1}_{#2}[ #3 ]}
      {{#1}_{#2}[ #3 ]}
      {{#1}_{#2}[ #3 ]}
     }
   }
}
\def\superfancyrender#1#2#3#4#5{
   \ensuremath{\mathchoice
      {\underset{#1}{{#1}}\left#4 #3 \right#5}
      {{#1}_{#2}#4 #3 #5}
      {{#1}_{#2}#4 #3 #5}
      {{#1}_{#2}#4 #3 #5}
   }
}
\newfont{\inhead}{eufm10 scaled\magstep1}
\newcommand{\calB}{{\cal B}}
\newcommand{\calO}{{\cal O}}
\newcommand{\calX}{{\cal X}}
\newcommand{\poly}{{\mathrm{poly}}}
\newcommand{\inbraces}[1]{\left\{#1\right\}}           
\NewCommandCopy{\oldsection}{\section}
\NewCommandCopy{\oldsubsection}{\subsection}
\renewcommand{\section}[1]{\chapter{#1}}
\renewcommand{\subsection}[1]{\oldsection{#1}}
\DeclareSymbolFont{extraup}{U}{zavm}{m}{n}
\DeclareMathSymbol{\varheart}{\mathalpha}{extraup}{86}
\DeclareMathSymbol{\vardiamond}{\mathalpha}{extraup}{87}
\def\Ins{\mathfrak I}
\algrenewcommand{\algorithmicrequire}{\textbf{Input:}}
\algrenewcommand{\algorithmicensure}{\textbf{Output:}}
\begin{document}
\sloppy

\title{List Decoding Expander-Based Codes via Fast Approximation of Expanding CSPs: I}

\author{
Fernando Granha Jeronimo\thanks{{\tt University of Illinois Urbana-Champaign}. {\tt granha@illinois.edu}. } \and
Aman Singh\thanks{{\tt University of Illinois Urbana-Champaign}. {\tt aman14@illinois.edu}. Work done toward a MS.}}



\maketitle
\draftbox
\thispagestyle{empty}

\begin{abstract}
  We present near-linear time list decoding algorithms (in the block-length $n$) for expander-based code constructions. More precisely, we show that 
  \begin{itemize}
    \item[(i)] For every $\delta \in (0,1)$ and $\epsilon > 0$, there is an explicit family of good Tanner LDPC codes of (design) distance $\delta$ that is $(\delta - \epsilon, O_\varepsilon(1))$ list decodable in time $\widetilde{\mathcal{O}}_{\varepsilon}(n)$ with alphabet size $O_\delta(1)$,
    \item[(ii)] For every $R \in (0,1)$ and $\epsilon > 0$, there is an explicit family of AEL codes of rate $R$, distance $1-R -\varepsilon$ that is $(1-R-\epsilon, O_\varepsilon(1))$ list decodable in time $\widetilde{\mathcal{O}}_{\varepsilon}(n)$ with alphabet size $\exp(\poly(1/\epsilon))$, and
    \item[(iii)] For every $R \in (0,1)$ and $\epsilon > 0$, there is an explicit family of AEL codes of rate $R$, distance $1-R-\varepsilon$ that is $(1-R-\epsilon, O(1/\epsilon))$ list decodable in time $\widetilde{\mathcal{O}}_{\varepsilon}(n)$ with alphabet size $\exp(\exp(\poly(1/\epsilon)))$ using recent near-optimal list size bounds from~\cite{JMST25}.
  \end{itemize}
  
  Our results are obtained by phrasing the decoding task as an agreement CSP \cite{RWZ20,DinurHKNT19} on expander graphs and using the fast approximation algorithm for $q$-ary expanding CSPs from~\cite{Jer23}, which is based on weak regularity decomposition~\cite{JST21,FK96:focs}. Similarly to list decoding $q$-ary Ta-Shma's codes in~\cite{Jer23}, we show that it suffices to enumerate over assignments that are constant in each part (of the constantly many) of the decomposition in order to recover all codewords in the list.
\end{abstract}

\newpage

\ifnotconf
\pagenumbering{roman}
\tableofcontents
\clearpage
\fi

\pagenumbering{arabic}
\setcounter{page}{1}

\section{Introduction}

Expander graphs are fundamental objects in computer science and mathematics~\cite{HooryLW06}. They combine two seemingly contradictory properties of being well-connected while being sparse. Expanders have found applications in numerous areas~\cite{Vadhan12} such as coding theory, pseudorandomness, complexity theory, number theory, etc. In particular, their pseudorandom properties together with their sparsity have led to many important code constructions such as the Sipser--Spielman codes~\cite{SS96}, the distance amplified codes of Alon--Edmonds--Luby (AEL)~\cite{AEL95}, Alon \etal~\cite{ABNNR92}, and Ta-Shma~\cite{Ta-Shma17}, as well as, locally testable codes~\cite{DELLS22}, quantum LDPC codes~\cite{PK22} and codes with near optimal list sizes~\cite{JMST25}. While in many cases their unique decoding properties are relatively well understood now, their list decoding properties are far from fully understood. In general, a central question in coding theory is the following:
\begin{center}
  When does a family of codes admit efficient list decoding?
\end{center}
Here, we investigate this question in the context of expander-based code construction. Many such constructions are based on local-to-global phenomena, where the local properties of a constant sized code are made global using expanders. For instance, in a Tanner code construction, edges of an expander graph are associated with bits of a codeword and the local edge views at each vertex are constrained to be a codeword from a fixed code. Then, expansion ensures that the ``local'' distance of this fixed code gives rise to the ``global'' distance of the resulting code. 

In unique decoding this kind of Tanner code, one first uniquely decodes the local views and proceeds to find a globally consistent codeword from them. On the other hand, in the list decoding regime, the local views can no longer be locally decoded to a single codeword, but rather one is forced to consider local lists. The challenge is to glue these local lists to compose the global codewords in the global list. In the case of 
list decoding Tanner codes against erasures, Wootters \etal~\cite{RWZ20} use 
a propagation rounding strategy reminiscent of approximating a constraint satisfaction problem (CSP) on a natural agreement CSP whose constraints enforce that local views are consistent. With this simpler propagation strategy alone, it was then unclear how to extend the results to errors.

One notable fact about expanders is that approximating CSPs on them can be done efficiently~\cite{BarakRS11, AJT19,Jer23} provided sufficient expansion is available. This allows for more sophisticated algorithms for approximating and understanding the structure of the agreement CSPs arising from the list decoding task. In particular, we can use the weak regularity framework for CSPs from~\cite{Jer23} that runs in near-linear time. Weak regularity is a powerful graph partition technique developed by Frieze and Kannan~\cite{FK96:focs} originally conceived to approximate dense CSPs, and it has now become a central tool in the study of regularity lemmas~\cite{RodlS10}.

The larger is the alphabet size of a CSP, the stronger is the expansion requirement in order to obtain a good approximation from these algorithms. By choosing local codes whose
list sizes are independent of the block-length, we can ensure that the alphabet size remains ``constant'' while we are allowed to increase the expander degree and strengthen 
the expansion. This decoupling of alphabet size and expansion is crucial. Moreover, imposing a stronger expansion requirement does not degrade the parameters of some expander-based constructions such as AEL and Tanner codes.

Roughly speaking, the weak regularity decomposition is a suitable partition of the variables of the CSP into a constant number of parts (depending only on the approximation error). In the case of list decoding Ta-Shma's code, it was shown in~\cite{Jer23} that it suffices to consider assignments in which each part of the weak regularity partition receives a 
single alphabet value.  Here, a similar behavior is established for other expander-based code constructions. In particular, this also implies a constant bound on number of codewords in the list, which becomes more interesting beyond the trivial Johnson bound.

Our first main result is a near-linear time list decoding algorithm for Tanner codes up to the designed distance. More precisely, we have the following statement.

\begin{theorem}[Main I - Informal version of \cref{thm:tanner-main}]
    Let $q \ge 2$ be a prime power and $\delta_{0} \in (0, 1-1/q)$.
    For any $\epsilon > 0$, there exists an explicit family of Tanner codes $\{C_n \subseteq \F_q^n \}_{n}$ such that their
    \begin{enumerate}
        \item rate is $\geq 1-2H_q(\delta_{0}) - \varepsilon$,
        \item design distance is $\delta_0^2 - \varepsilon^{\Theta(1)}$, and
        \item it is $(\delta_{0}^2-\varepsilon, 2^{2^{\varepsilon^{-\Theta(1)}}})$ list decodable with a $2^{2^{\varepsilon^{-\Theta(1)}}}  \cdot \widetilde{\mathcal{O}}( n)$ time (probabilistic) decoding algorithm.
    \end{enumerate}
\end{theorem}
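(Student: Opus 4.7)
The plan is to realize the Tanner code as the set of edge-labelings of a spectral expander whose restriction at each vertex is a codeword of a constant-sized inner code, phrase list decoding as an agreement CSP on the same expander, and then invoke the fast $q$-ary expanding CSP approximation of \cite{Jer23} to show that every codeword within radius $\delta_0^2 - \epsilon$ can be read off from a constant-on-parts assignment of the weak regularity decomposition.

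\textbf{Construction.} I fix an inner code $C_0 \subseteq \F_q^d$ with rate $\geq 1 - H_q(\delta_0) - O(\epsilon)$ and distance $\geq \delta_0$, and a $d$-regular $\lambda$-spectral expander $G = (V,E)$ with $|V| = n/d$ and $\lambda \leq \epsilon^{\Theta(1)}$, where $d$ is a sufficiently large constant (depending on $\epsilon$). The Tanner code $C$ consists of all $x \in \F_q^E$ whose local view $x|_{E(v)}$ is in $C_0$ at every vertex $v$. The rate bound $1 - 2H_q(\delta_0) - \epsilon$ is standard, and the design distance $\delta_0^2 - \epsilon^{\Theta(1)}$ follows from the expander mixing lemma applied to the support of a nonzero codeword (Sipser--Spielman style).

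\textbf{Agreement CSP for list decoding.} Given a received word $\tilde y \in \F_q^E$, for each vertex $v$ I compute the list $L_v \subseteq C_0$ of local codewords of $C_0$ whose relative distance from $\tilde y|_{E(v)}$ is at most $\delta_0 - \gamma$, where $\gamma = \gamma(\epsilon)$ is chosen so that a Johnson-bound argument yields $|L_v| = O_\epsilon(1)$. I form the CSP $\Psi$ with variables $\{x_v\}_{v \in V}$, alphabet $L_v$ at $v$, and for every edge $e = \{u,v\}$ a constraint asserting that $x_u$ and $x_v$ agree on their common coordinate. A Markov/averaging argument shows that any global codeword $c \in C$ with $\dist(c,\tilde y) \leq \delta_0^2 - \epsilon$ induces an assignment satisfying a $(1 - \epsilon')$-fraction of $\Psi$'s constraints, for some $\epsilon' = \Theta(\epsilon)$.

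\textbf{Weak regularity and enumeration.} Because the alphabet size is bounded by a constant while $G$ is a strong enough expander, I apply the near-linear-time weak regularity decomposition for $q$-ary expanding CSPs from \cite{Jer23}. In time $\widetilde{\mathcal O}_\epsilon(n)$ it produces a partition $V = V_1 \sqcup \cdots \sqcup V_T$ with $T = \exp(\poly(1/\epsilon))$ such that the CSP value is approximated to within $\epsilon'/2$ by assignments that are \emph{constant on each part}. Mirroring the Ta-Shma list-decoding argument in \cite{Jer23}, the core claim I must establish is that for every codeword $c$ in the decoding list there is a constant-on-parts assignment $\sigma^\star$ correlated enough with $c$ that $c$ can be recovered from $\sigma^\star$ by a local propagation step (assign each $v$ the element of $L_v$ closest to $\sigma^\star$, then check membership in $C$). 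Given this, I enumerate all $|L|^T = 2^{2^{\poly(1/\epsilon)}}$ candidate constant-on-parts assignments, and for each candidate run the $\widetilde{\mathcal O}(n)$-time propagate-and-verify step; the union of outputs is the list, and the total running time is $2^{2^{\poly(1/\epsilon)}} \cdot \widetilde{\mathcal O}(n)$.

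\textbf{Main obstacle.} The hard step is the structural claim that every list codeword is captured by a constant-on-parts assignment; weak regularity on its own only gives an approximation-level guarantee on the CSP value, not a per-codeword structural handle. I expect the argument to go through the cut-norm characterization underlying the Frieze--Kannan/\cite{Jer23} decomposition: the indicator of agreement with $c$ defines a function on $V$ whose inner product with every ``step function'' on the parts is preserved up to $\epsilon'$, and combining this with the spectral expansion of $G$ (and hence of the Tanner constraint graph) should force $c$ to be close to one of the enumerated candidates. Assembling this with the construction-level rate/distance bounds and the near-linear runtime of \cite{Jer23} yields the theorem.
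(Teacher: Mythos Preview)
Your framework matches the paper's: agreement CSP on the expander, weak regularity from \cite{Jer23}, enumeration over assignments constant on the regularity atoms. The genuine gap is in the recovery step. After a single enumeration round, the constant-on-parts assignment $\sigma^\star$ can at best agree with the true assignment $x_c$ on the set of vertices where $c$'s local view lies in the local list; the complement of that set already has size $\approx(\delta_0-\epsilon)n$ by averaging (these are vertices where $\tilde y$ and $c$ differ on more than $\delta_0 d$ incident edges). Hence the edge-labeling $\tilde z$ you build from $\sigma^\star$ sits at relative distance roughly $\delta_0$ from $c$---far beyond the Tanner code's unique-decoding radius $\approx\delta_0^2/2$. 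Your ``propagate and check membership in $C$'' cannot close this gap; it will simply fail the membership check, and replacing the check by a unique decoder does not help either.

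The paper fixes this with a \emph{nested} two-round procedure that exploits the bipartite structure $G=(L,R,E)$. Round one enumerates $\mathcal{B}_R$-measurable assignments and produces $\tilde z$ with $\tilde z_v=c_v$ on a $(1-(\delta_0-\epsilon/2))$-fraction of $R$. Round two (\textsc{CustomDecode}) treats $\tilde z$ as a fresh received word, builds a \emph{new} AEL-style CSP with one-sided constraints $c^{x(u)}_{u,v}=\tilde z_{u,v}$, reruns weak regularity, and enumerates $\mathcal{B}_L$-measurable assignments. The crucial gain is that the bad left set is now $\bar S_c=\{u:|N(u)\setminus T_c|>(\delta_0-\epsilon/4)d\}$ with $T_c=\{v:\tilde z_v=c_v\}$, and expander mixing against the large $T_c$ forces $|\bar S_c|\le(\lambda/d)^2\cdot O(1/\epsilon^2)\cdot n$, which is tiny. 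After round two the candidate $z'$ satisfies $\Delta(c,z')\le(\epsilon^2/6)nd$, comfortably inside unique-decoding radius. Your sketch of the structural claim is also too coarse: the paper needs not merely ``$(1-\epsilon')$-satisfaction'' but \emph{full} satisfaction of $x_c$ on the subgraph $E(S_c,T_c)$, so that any within-atom rearrangement of $x_c$ (which weak regularity says leaves $\text{val}_{S_c,T_c}$ essentially unchanged) must strictly break at least $(\epsilon/2)d$ constraints per altered vertex---this contradiction is what forces the concentration you need.
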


Our second main result is a near-linear time list decoding algorithm for the AEL codes up to capacity
\if\isthesis0\footnote{We note that the list size can be improved, but we leave the details for a future version. For instance, using an outer code that has good list recovery properties leads to a smaller lists.}\fi.

\begin{theorem}[Main II - Informal version of \cref{thm:ael-main}]
    Let $\qzero \ge 2$ be an integer and $\delzero \in (0, 1-1/\qzero)$.
    For any $\epsilon > 0$, there exists an explicit family of AEL codes $\{C_n \subseteq \F_q^n \}_{n}$ such that their
    \begin{enumerate}
        \item alphabet size is $q=\qzero^{\varepsilon^{-\Theta(1)}}$,
        \item rate is $\geq 1-H_{\qzero}(\delta)-\varepsilon$,
        \item design distance is $\delzero - \varepsilon^{\Theta(1)}$,
        \item it is $(\delzero-\varepsilon, 2^{2^{\varepsilon^{-\Theta(1)}}})$ list decodable with a $2^{2^{\varepsilon^{-\Theta(1)}}} \cdot \widetilde{\mathcal{O}}( n)$ time (probabilistic) decoding algorithm.
    \end{enumerate}
\end{theorem}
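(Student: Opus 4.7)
The plan is to mimic the list decoding paradigm for Ta-Shma's codes in~\cite{Jer23}: build the AEL family with appropriate parameters, phrase the list-decoding task as an agreement $2$-CSP on the AEL expander, apply the near-linear time weak regularity decomposition for expanding CSPs from~\cite{Jer23}, prove a structural lemma that every codeword within the list-decoding radius induces an assignment (essentially) constant on each part of the partition, and finally enumerate over constant-on-each-part assignments to recover the entire list. The main obstacle will be the structural lemma.

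For the construction, I fix a bipartite $d$-regular expander $G$ on $n+n$ vertices with normalized second eigenvalue $\lambda \leq \varepsilon^{\Theta(1)}$ and degree $d = \varepsilon^{-\Theta(1)}$; an outer code $C_{\mathrm{out}} \subseteq \F_{\qzero}^{nd}$ of rate $1-\varepsilon/2$, constant distance, and unique decodable in $\widetilde{O}(n)$ time (e.g.\ a Sipser--Spielman expander code); and an inner code $C_{\mathrm{in}} \subseteq \F_{\qzero}^d$ of rate $1 - H_{\qzero}(\delzero) - \varepsilon/2$ and relative distance $\delzero - \varepsilon^{\Theta(1)}$ (from Gilbert--Varshamov on a random linear code over $\F_{\qzero}$). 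The standard AEL distance-amplification analysis then gives a code over $\Sigma = \F_{\qzero}^d$ of size $q = \qzero^{\varepsilon^{-\Theta(1)}}$, rate $\geq 1 - H_{\qzero}(\delzero) - \varepsilon$, and design distance $\delzero - \varepsilon^{\Theta(1)}$. Given a received word $y \in \Sigma^V$, at each right-vertex $v$ I brute-force enumerate the Johnson-radius list $L_v \subseteq C_{\mathrm{in}}$ around $y_v$ at radius $\delzero - \varepsilon/2$, which has size $|L_v| = O_\varepsilon(1)$. I then define the agreement $2$-CSP with domain $L_v$ at vertex $v$ and a constraint on every pair of right-vertices $(u,v)$ sharing a common left neighbor, enforcing that the two local views agree on the shared outer symbol. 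Every AEL codeword within Hamming distance $\delzero-\varepsilon$ of $y$ induces a CSP assignment of value $\geq 1 - \varepsilon^{\Theta(1)}$.

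The central step invokes the fast weak regularity decomposition for expanding CSPs from~\cite{Jer23}: exploiting the $\lambda$-expansion of $G$ and the constant alphabet $|L_v|$, in time $\widetilde{O}_\varepsilon(n)$ one computes a partition of the variable set into $K = 2^{\varepsilon^{-\Theta(1)}}$ blocks $V_1 \sqcup \cdots \sqcup V_K$ such that the CSP-value functional is approximated to additive error $\varepsilon^{\Theta(1)}$ by a function depending only on the block-profile of the assignment. The main obstacle, and the key structural lemma, is to show that any near-satisfying assignment $\sigma^{c}$ induced by a codeword $c$ is $\varepsilon^{\Theta(1)}$-close to a function that is \emph{constant} on each block. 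I argue this in the spirit of the analogous statement for $q$-ary Ta-Shma's codes in~\cite{Jer23}: if two distinct values $a \neq b \in L_v$ both appeared on an $\Omega(\varepsilon)$-fraction of some $V_i$ under $\sigma^c$, then by the expander mixing lemma applied to the induced constraint graph on right-vertices, an $\Omega(\varepsilon^2)$-fraction of constraints internal to $V_i$ would carry disagreeing endpoints, contradicting the $(1 - \varepsilon^{\Theta(1)})$-satisfaction value of $\sigma^c$.

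Finally, I enumerate over all $(\max_v |L_v|)^K \leq 2^{2^{\varepsilon^{-\Theta(1)}}}$ constant-on-each-part assignments; each such assignment yields a candidate symbol string of length $nd$ that I feed to the unique decoder of $C_{\mathrm{out}}$ (in $\widetilde{O}(n)$ time), retaining those outputs at Hamming distance $\leq \delzero - \varepsilon$ from $y$. The total running time is dominated by the inner list computation ($O_\varepsilon(n)$), the weak regularity decomposition ($\widetilde{O}_\varepsilon(n)$), and the enumeration loop ($2^{2^{\varepsilon^{-\Theta(1)}}} \cdot \widetilde{O}(n)$), matching the stated bound. The list-size bound follows from the same count, because the structural lemma guarantees that every codeword in the list maps to one of the enumerated constant-on-each-part assignments.
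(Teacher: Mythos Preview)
Your high-level pipeline (local lists $\to$ agreement CSP $\to$ weak regularity $\to$ enumerate constant-on-parts assignments $\to$ unique-decode the outer code) matches the paper, but two of the steps are off in ways that matter.

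First, the AEL structure is inverted. In AEL the inner code is applied at each \emph{left} vertex $u$ (encoding the outer symbol $x_u$ onto the $d$ incident edges); a right vertex $v$ merely aggregates its $d$ incoming edge labels into a super-symbol in $[\qzero]^d$, which is \emph{not} a codeword of $C_{in}$. So ``list-decode $C_{in}$ around $y_v$'' at a right vertex is not meaningful, and your constraint graph on pairs of right vertices sharing a left neighbor is not the object the paper analyzes. The paper list-decodes $C_{in}$ at each \emph{left} vertex (from the received edge labels $\widetilde y_{u,v}$, $v\in N(u)$) and places the $2$-CSP constraints directly on the edges of $G$: the constraint at $(u,v)\in E$ is satisfied iff the chosen candidate $c^i_u$ matches $\widetilde y_{u,v}$. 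This keeps the constraint graph equal to the expander $G$ itself, so the weak-regularity theorem applies verbatim. Relatedly, you need $C_{in}$ to be $(\delzero,\,O_\varepsilon(1))$-\emph{list-decodable}, not merely of GV distance $\approx\delzero$; at radius $\delzero-\varepsilon/2$ a code only guaranteed distance $\delzero-\varepsilon^{\Theta(1)}$ need not have bounded lists. The paper takes a random (linear) inner code, which has both distance $\ge\delzero$ and list size $O(1/\varepsilon)$ at radius $\delzero$.

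Second, and more fundamentally, your argument for the structural lemma does not go through. You assert that if two list-indices $a\neq b$ each occupy an $\Omega(\varepsilon)$-fraction of some part $V_i$ under the codeword-induced assignment $\sigma^c$, then expander mixing produces $\Omega(\varepsilon^2)$-many constraints with ``disagreeing endpoints'', hence violated. But endpoints carrying different list-indices does \emph{not} mean the constraint is violated: $\sigma^c$ is the assignment induced by the true codeword, so it satisfies essentially all constraints regardless of which local indices it happens to use. The paper's argument is genuinely different and pivots on the \emph{distance} of $C_{in}$. Assuming $\sigma^c$ is not $\eta$-concentrated on the factor, one constructs a second assignment $\sigma'$ that has the \emph{same histogram on every atom} (so the weak-regularity approximation forces $\textnormal{val}(\sigma')\approx\textnormal{val}(\sigma^c)$) yet differs from $\sigma^c$ on at least an $\eta$-fraction of left vertices in $S_z$. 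At each such vertex the two chosen inner codewords differ in $\ge \delzero d$ edge positions, and since $\sigma^c$ \emph{fully} satisfies all constraints in $E(S_z,T_z)$ while each such vertex has at most $(\delzero-\varepsilon/2)d$ edges leaving $T_z$, the swap drops the value on $E(S_z,T_z)$ by at least $(\varepsilon/2)\eta|S_z|d$. This contradicts the $\pm\,\ell^2\gamma nd$ regularity error once $\eta\gtrsim \ell^2\gamma/\varepsilon$. Without this ``swap while preserving the atom-profile'' argument that exploits the inner-code distance, there is no mechanism forcing concentration, and your expander-mixing sketch does not supply one.
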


\begin{remark}
  Using an alphabet of size $q=\exp(\exp(\poly(1/\epsilon))))$ and invoking the near-optimal list size of $O(1/\epsilon)$ for AEL from~\cite{JMST25}, the list size in the theorem
  above can be further improved to near-optimal $O(1/\epsilon)$.
\end{remark}

This allows us to obtain near-linear time list decoding with bounded list sizes for capacity achieving codes in the following widely studied capacity regime~\cite{GuruswamiR06}
with near-optimal list sizes $O(1/\epsilon)$.

\begin{corollary}
    \label{cor:ael-opt-list}
    Let $R \in (0,1)$.
    For any $\epsilon > 0$, there exists an explicit family of AEL codes $\{C_n \subseteq \F_q^n \}_{n}$ such that their
    \begin{enumerate}
        \item alphabet size is $q=2^{2^{\varepsilon^{-\Theta(1)}}}$,
        \item rate is $\geq R - \varepsilon$ while distance $\geq 1-R$ (near MDS), and
        \item it is $(1-R-\varepsilon, O(1/\epsilon))$ list decodable with a $\widetilde{\mathcal{O}}_{\varepsilon}(n)$ time (probabilistic) decoding algorithm.
    \end{enumerate}
\end{corollary}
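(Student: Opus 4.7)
The plan is to derive the corollary as a direct specialization of \cref{thm:ael-main} to the large-alphabet regime, combined with the near-optimal combinatorial list-size bound of~\cite{JMST25}. Fix $R \in (0,1)$ and $\epsilon > 0$; the goal is to construct an AEL family meeting all three conditions.

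First, I would choose the inner alphabet $\qzero$ of the AEL construction large enough that the $\qzero$-ary entropy is close to the Singleton/MDS bound. Since $H_q(\delta) \to \delta$ as $q \to \infty$, it suffices to take $\qzero = 2^{\poly(1/\epsilon)}$ so that $H_{\qzero}(1-R) \leq 1-R + \epsilon/2$. Setting $\delzero = 1-R$ in \cref{thm:ael-main} and using error parameter $\epsilon' = \epsilon^{c}$ for a suitable constant $c > 0$ chosen to absorb the $\epsilon^{\Theta(1)}$ slack in the design distance, I obtain an explicit AEL family with alphabet size $q = \qzero^{(\epsilon')^{-\Theta(1)}} = 2^{2^{\epsilon^{-\Theta(1)}}}$, rate at least $R-\epsilon$, distance at least $1-R$, and a probabilistic $\widetilde{\mathcal{O}}_{\epsilon}(n)$-time algorithm that outputs a list $L_0$ of size at most $2^{2^{\epsilon^{-\Theta(1)}}}$ covering every codeword within relative distance $1-R-\epsilon$ of the received word.

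Second, I would invoke the combinatorial list-size bound of~\cite{JMST25}: for AEL codes with alphabet size at least $\exp(\exp(\poly(1/\epsilon)))$, the number of codewords within relative radius $1-R-\epsilon$ of any received word is $O(1/\epsilon)$. Our alphabet $q = 2^{2^{\epsilon^{-\Theta(1)}}}$ meets this lower bound by the choice of $\qzero$ above. A final filtering pass (in $\widetilde{\mathcal{O}}_{\epsilon}(n)$ time) that discards any element of $L_0$ whose distance from the received word exceeds $1-R-\epsilon$ then leaves exactly the set of genuine list-decoded codewords, whose cardinality is $O(1/\epsilon)$ by the combinatorial bound.

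The main obstacle, already flagged in the remark following \cref{thm:ael-main}, is verifying that the alphabet size required by~\cite{JMST25} and the one produced by \cref{thm:ael-main} can simultaneously be made $\exp(\exp(\poly(1/\epsilon)))$ while preserving the near-MDS rate--distance tradeoff. This is essentially an accounting task: one must check that the AEL family used in~\cite{JMST25} is compatible with (or realizable as an instance of) the one produced by \cref{thm:ael-main}, so that the combinatorial bound applies to the same code our algorithm decodes, and that the various polynomial losses from entropy approximation, distance amplification, and the Main II slack compose consistently under a single choice of $\epsilon' = \poly(\epsilon)$.
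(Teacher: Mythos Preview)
Your overall strategy—run the near-linear time AEL decoder from this paper and then appeal to the combinatorial $O(1/\epsilon)$ bound of~\cite{JMST25} to shrink the output list—is the right idea, and it is also what the paper does. But the paper arrives at it by a different route than you propose, and the difference is exactly the ``compatibility'' issue you flag at the end.

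You instantiate via \cref{thm:ael-main}, i.e., you keep the random inner code and push $\qzero$ large so that $H_{\qzero}(1-R)\approx 1-R$. The paper instead goes back to the more general \cref{thm:AEL-listdec} and \emph{changes the inner code} to a folded Reed--Solomon code (so $\delzero = 1-\rzero-\varepsilon_0$ directly, without any entropy approximation), with inner list size $\ell=O(1/\varepsilon_0)$. This is precisely the AEL instantiation analyzed in~\cite{JMST25}, so their $O(1/\epsilon)$ combinatorial list bound applies to the very code being decoded, and the algorithmic guarantee comes from \cref{thm:AEL-listdec} applied to that same instantiation. In other words, the paper resolves your ``accounting task'' not by checking that \cite{JMST25} covers the random-inner-code family of \cref{thm:ael-main}, but by building the \cite{JMST25} family and plugging it into \cref{thm:AEL-listdec}.

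Two smaller points. First, your alphabet computation is off: with $\qzero = 2^{\poly(1/\epsilon)}$ and $q=\qzero^{(\epsilon')^{-\Theta(1)}}$ you get $q=2^{\poly(1/\epsilon)}$, not $2^{2^{\epsilon^{-\Theta(1)}}}$; the double exponential in the corollary is driven by the alphabet requirement in~\cite{JMST25}, not by \cref{thm:ael-main}. Second, treating the \cite{JMST25} bound as a black box that holds for ``AEL codes with alphabet at least $\exp(\exp(\poly(1/\epsilon)))$'' is not quite safe: their bound is proved for a specific inner-code choice, which is why the paper aligns the construction with theirs rather than hoping the bound transfers to the random-inner-code family.
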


\noindent \textbf{Related Work}: The first efficient list decoding algorithm for Tanner codes against errors was given in \cite{JST23} using the Sum-of-Squares (SOS) SDP hierarchy. There, it was shown that the distance of a Tanner code can be transformed into a SOS proof which was used in a SOS list decoding framework from~\cite{AJQST19}. Also in~\cite{JST23}, analogous ideas were used in the list decoding of AEL codes. In both cases, the list decoding radius from~\cite{JST23} was up to the Johnson bound and the running time was of the form $O(n^{\poly(1/\epsilon)})$. The first list decoding algorithm for AEL up to capacity (implied by approaching the generalized Singleton bound) was given in~\cite{JMST25} using SOS proofs also having (polynomial) running time of the form $O(n^{\poly(1/\epsilon)})$.

The connection between CSPs and list decoding has appeared in many works. As mentioned above, Wootters \etal~\cite{RWZ20} used a (simple) propagation round procedure for (implicit) CSPs arising in the list decoding of erasures in the setting of Tanner codes. In~\cite{DinurHKNT18}, Dinur \etal used approximation algorithms for unique games (UG) on expanders in order to list decode direct product code constructions. Their UG instances are also agreement CSPs. In~\cite{AJQST19}, a SOS list decoding framework leveraged the CSP perspective from \cite{AJT19}. In particular, this framework handles direct sum codes whose decoding naturally corresponds to a $k$-XOR CSP. In~\cite{JQST20}, this framework was carefully applied to the decoding of Ta-Shma's breakthrough near-optimal $\varepsilon$-balanced binary codes using the CSP perspective. Subsequently, the algorithmic list decoding radius for Ta-Shma's codes was improved up to the Johnson bound by Richelson and Roy using SOS distance proofs in \cite{RR23} and the framework of~\cite{AJQST19}, and the running time is again of the form $O(n^{\poly(1/\epsilon)})$.

A new weak regularity decomposition for $k$-XOR was developed in~\cite{JST21} to give a near-linear time algorithm for Ta-Shma's code. In~\cite{Jer23}, the weak regularity was extended to
general $k$-CSPs (in order to give a near-linear time CSP algorithm analogous to the SOS results in~\cite{AJT19}) and also to provide near-linear time list decoding algorithms for $q$-ary versions of Ta-Shma's codes. There it was shown that it suffices to enumerate over CSP assignments that are constant on each factor in order to recover the list. 

This line of work started with a CSP perspective on list decoding. Subsequently, a seemingly stronger SOS distance certificate perspective 
emerged. Now, with this work, we are back to a CSP perspective for expander-based codes~\footnote{The list decoding of Ta-Shma's code is more delicate since the quality of the CSP approximation and the code parameters race with each other.}. The simplicity of this new CSP
perspective on expanders is quite illuminating in understanding what enables efficient list decoding.

\noindent \textbf{Concurrent Work}: In a concurrent work by Srivastava and Tulsiani~\cite{ST25}, similar list decoding results for AEL and Tanner codes were obtained also via weak
regularity techniques~\cite{JST21} and with similar parameter trade-offs.

\noindent \textbf{Future Work}: We leave the decoding tasks of other code families to future work.

\section{Preliminaries}

\subsection{Expander Graphs}

We denote bipartite graphs by $G = (L, R, E)$ where $L$ is the set of left vertices, $R$ is the set of right vertices and $E$ is the set of edges between them.
We often use the symbols $u$ and $v$ for vertices in $L$ and $R$ respectively.
$w$ however is used to refer to an arbitrary vertex.
$S \subseteq L$ and $T\subseteq R$ are frequently used notation for subsets on either sides.
In these terms, $E(S,T) = \{(u,v) \;|\; u \in S, v \in T, (u,v)\in E\}$.
Also, for any vertex $w \in L \cup R$, $N(w)$ is used to denote the set of neighboring vertices.
\begin{definition}[Expander Graphs]
    Let $G = (L, R, E)$ be a $d$-regular bipartite graph where $|L| = |R| = n$. We say that $G$ is an $(n, d, \lambda)$-expander if
    $\lambda = \max(|\lambda_2|, |\lambda_n|)$ where $\lambda_1\geq\lambda_2\geq \ldots\geq \lambda_n$ are the singular values of the bi-adjacency matrix of $G$.
\end{definition}

\begin{theorem}[Expander Mixing Lemma]
    Suppose that $G = (L, R, E)$ is an $(n, d, \lambda)$-expander. Then for any $S\subseteq L$, and $T \subseteq R$,
    $$
        \left|E(S,T) - \dfrac{d}{n}|S||T|\right| \leq \lambda\sqrt{|S||T|}
    $$
\end{theorem}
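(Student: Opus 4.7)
The plan is to prove the Expander Mixing Lemma via the standard spectral argument: write $E(S,T)$ as a bilinear form in the bi-adjacency matrix, project onto the top singular direction to extract the expected main term, and bound the remainder using the spectral gap $\lambda$.

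First, let $A \in \{0,1\}^{n \times n}$ be the bi-adjacency matrix of $G$, so that $A_{u,v} = 1$ iff $(u,v) \in E$. Let $\one_S \in \{0,1\}^L$ and $\one_T \in \{0,1\}^R$ denote the indicator vectors of $S$ and $T$. The key identity is
$$
  E(S,T) \;=\; \sum_{u \in S, v \in T} A_{u,v} \;=\; \one_S^{\top} A\, \one_T.
$$
Since $G$ is $d$-regular bipartite with $|L| = |R| = n$, the all-ones vector $\one/\sqrt{n}$ is both the top left and top right singular vector of $A$ with singular value $\sigma_1 = d$. By definition of $\lambda$, the operator norm of $A$ restricted to the orthogonal complement of $\one/\sqrt{n}$ is at most $\lambda$.

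Next, I would decompose
$$
  \one_S \;=\; \alpha_S\, \tfrac{\one}{\sqrt{n}} + \one_S^{\perp}, \qquad \one_T \;=\; \alpha_T\, \tfrac{\one}{\sqrt{n}} + \one_T^{\perp},
$$
with $\alpha_S = \langle \one_S, \one/\sqrt{n}\rangle = |S|/\sqrt{n}$ and analogously $\alpha_T = |T|/\sqrt{n}$, and each $\one_{\cdot}^{\perp}$ orthogonal to $\one/\sqrt{n}$. Plugging in and using that $A(\one/\sqrt{n}) = d\,(\one/\sqrt{n})$ and that $\langle \one/\sqrt{n},\, \one_T^{\perp}\rangle = 0$, $\langle \one_S^{\perp},\, \one/\sqrt{n}\rangle = 0$, I get
$$
  E(S,T) \;=\; \alpha_S \alpha_T \cdot d \;+\; (\one_S^{\perp})^{\top} A\, \one_T^{\perp} \;=\; \frac{d}{n}|S||T| \;+\; (\one_S^{\perp})^{\top} A\, \one_T^{\perp}.
$$

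Finally, Cauchy--Schwarz together with the spectral bound on the orthogonal complement gives
$$
  \bigl|(\one_S^{\perp})^{\top} A\, \one_T^{\perp}\bigr| \;\leq\; \lambda \cdot \|\one_S^{\perp}\|_2 \cdot \|\one_T^{\perp}\|_2 \;\leq\; \lambda \sqrt{|S|}\sqrt{|T|},
$$
where the last step uses $\|\one_S^{\perp}\|_2^2 = \|\one_S\|_2^2 - \alpha_S^2 = |S| - |S|^2/n \leq |S|$, and similarly for $T$. Rearranging yields the claimed inequality. There is no real obstacle here; the only subtlety is recording that $A$ acts as a linear map from $\R^R$ to $\R^L$ (so the singular value decomposition, rather than an eigendecomposition, is the right tool in the bipartite setting), but the bound on the orthogonal complement follows directly from the definition of $\lambda$ given in the preliminaries.
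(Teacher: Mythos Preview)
Your proof is correct and is the standard spectral argument for the Expander Mixing Lemma. The paper itself does not give a proof of this statement --- it is quoted in the preliminaries as a known fact --- so there is nothing to compare against; your write-up would serve perfectly well as a self-contained justification.
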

In a bipartite expander, the neighborhood $N(u)$ of any vertex $u \in L$ looks almost like a random sample of $R$ in the sense that if there is a subset $T \subseteq R$ such that $|T| = \alpha n$, then most vertices $u \in L$ will have $(\alpha\pm\varepsilon)d$ neighbors in $T$.
The same can be said about the neighborhood $N(v)$ of any vertex $v \in R$.
This property will be very useful in the following sections and hence we formalize it here.
\begin{claim}
    \label{clm:product-like-expansion}
    Let $G = (L, R, E)$ be a $(n, d, \lambda)$ bipartite expander, and $T \subseteq R$ be a subset with $|T| \geq \alpha n$. Then, the set of vertices in $L$ which have at least $(\alpha-\varepsilon) d$ neighbors in $T$, i.e., $S = \{u\;|\;u\in L, |N(u)\cap T| \geq (\alpha-\varepsilon) d\}\text{, is such that } |S|>\left(1-(\lambda/d)^2\cdot1/\varepsilon^2\right)n$.
    Similarly, if $S\subseteq L$ and $|S|\geq \alpha n$ then $|T| = |\{v\;|\;v\in R, |N(v)\cap S| \geq (\alpha-\varepsilon) d\}| > \left(1-(\lambda/d)^2\cdot1/\varepsilon^2\right)n$.
\end{claim}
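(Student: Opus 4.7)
The plan is to apply the Expander Mixing Lemma to the ``bad'' set of left vertices and derive an upper bound on its size by double counting the edges to $T$. Let $B = L \setminus S = \{u \in L : |N(u) \cap T| < (\alpha - \varepsilon)d\}$; we want to show $|B| < (\lambda/d)^2 \cdot (1/\varepsilon^2) \cdot n$.

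First I would upper bound $|E(B, T)|$ directly from the definition of $B$: each $u \in B$ contributes fewer than $(\alpha - \varepsilon)d$ edges to $T$, so $|E(B, T)| < (\alpha - \varepsilon) d |B|$. Next I would apply the Expander Mixing Lemma to get the lower bound $|E(B, T)| \geq (d/n)|B||T| - \lambda \sqrt{|B||T|}$. Using $|T| \geq \alpha n$, the first term is at least $\alpha d |B|$, so combining the two bounds yields
\[
\alpha d |B| - \lambda \sqrt{|B||T|} < (\alpha - \varepsilon) d |B|,
\]
which rearranges to $\varepsilon d |B| < \lambda \sqrt{|B||T|}$. Squaring and using $|T| \leq n$ gives $\varepsilon^2 d^2 |B|^2 < \lambda^2 |B| n$, hence $|B| < (\lambda/d)^2 (1/\varepsilon^2) n$, and therefore $|S| = n - |B| > (1 - (\lambda/d)^2 / \varepsilon^2) n$, as claimed.

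For the symmetric direction, the same argument applies verbatim with the roles of the two sides swapped: define the bad set $B' \subseteq R$ of vertices with fewer than $(\alpha - \varepsilon) d$ neighbors in $S$, apply the Expander Mixing Lemma to $E(S, B')$, and conclude the analogous size bound. There is essentially no obstacle here — the Expander Mixing Lemma is a two-sided statement, and the computation is symmetric in $L$ and $R$. The only subtlety worth flagging is making sure the direction of the inequality (bad vertices have \emph{few} neighbors in $T$, so we use the \emph{lower} bound from the mixing lemma against a trivial \emph{upper} bound) is applied consistently; this is routine rather than a genuine difficulty.
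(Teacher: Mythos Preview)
Your proposal is correct and follows essentially the same approach as the paper: bound the bad set $B = L \setminus S$ via the Expander Mixing Lemma and a direct edge count. The only cosmetic difference is that the paper applies EML to $E(B, \overline{T})$ (using the upper bound there against a trivial lower bound), whereas you apply it to $E(B, T)$ (using the lower bound against a trivial upper bound); the two computations are dual and yield the identical conclusion.
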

\begin{proof}
    Let $\overline{S} = L\setminus S$ and $\overline{T} = R\setminus T$. Now, we only need to analyze $\left|E(\overline{S}, \overline{T})\right|$ using the expander mixing lemma to prove the claim. Notice that from the definition, vertices in $\overline{S}$ must have $> \left(1-(\alpha-\varepsilon)\right) d$ neighbors in $\overline{T}$. So,
\begin{align*}
    \left(1-(\alpha-\varepsilon)\right) d \cdot \left|\overline{S}\right|\;<\;\left|E\left(\overline{S}, \overline{T}\right)\right| &\;\leq\; \dfrac{d}{n} \left|\overline{S}\right| \left|\overline{T}\right| + \lambda \sqrt{\left|\overline{S}\right|\left|\overline{T}\right|} \\
    \left(1-(\alpha-\varepsilon)\right) d \cdot \left|\overline{S}\right| &\;<\; (1-\alpha) d \cdot \left|\overline{S}\right| + \lambda\sqrt{\left|\overline{S}\right| n} \\
    \left|\overline{S}\right| &\;<\; (\lambda/d)^2\cdot (1/\varepsilon^2)\cdot n
\end{align*}
And equivalently, $|S|>\left(1-(\lambda/d)^2\cdot1/\varepsilon^2\right)n$. The other case can also be proved similarly.
\end{proof}

\subsection{Codes}

We now define some coding theory concepts.
\begin{definition}[Code, distance and rate]
    A $[n,\delta, r]_q$ code $C$ over an alphabet of size $q$ with distance $\delta$, rate $r$ and block length $n$ is a subset $C \subseteq [q]^n$ such that
    $$
        r = \frac{\log_q{|C|}}{n} \quad\quad\quad \delta = \frac{\min_{f, g \in C, f \neq g} \Delta(f, g)}{n}
    $$
    We say that a family of codes, $\{C_n\}_n$, has rate $r$ and distance $\delta$ if $r(C_n) \ge r$ and $\delta(C_n) \ge \delta$ for every $n$.
\end{definition}
\begin{definition}[List decodable code]
    A code $C$ with block length $n$ is $(\alpha, \ell)$ list-decodable if at every $\widetilde{y}$ in the ambient space, the list $\{z \;|\; z \in C, \Delta(z, \widetilde{y}) \leq \alpha n\}$ has size at most $\ell$. 
\end{definition}
For the $d$-regular graph based codes that we discuss below, it is convenient to talk about the local view induced at a vertex $w$ of the graph.
So, let  
$$ z_{w} = (z_{w,N_1(w)}, z_{w,N_2(w)}, \ldots, z_{w,N_d(w)}) $$
where the neighbor functions $N_1, N_2, \ldots, N_d$ specify an arbitrary order over the $d$-sized neighbor set $N(w)$ of $w$.
The notation $z_{u,v}$ is used to indicate the alphabet associated with the edge along $v$ at $u$.
Note that in this notation $z_{u,v} = z_{v,u}$.
\begin{definition}[AEL Code]
    An AEL code $C^{\textnormal{AEL}} = C(C_{in}, C_{out}, G) \subseteq [\qzero^d]^n$ where $G = (L,R,E)$ is a $(n, d, \lambda)$ bipartite expander, $C_{in} \subseteq [\qzero]^d$ and $C_{out} \subseteq [\qone]^n$ with $|C_{in}| = \qone$.
    Let $\textnormal{Enc} : [\qone] \to [\qzero]^d$ be the encoder for the inner code $C_{in}$, extended vertex-wise on $L$ to $\textnormal{Enc} : [\qone]^n \to [\qzero]^{nd}$ by
    $$\textnormal{Enc}(x) = (\textnormal{Enc}(x_u))_{u \in L} \quad \text{for } x \in [\qone]^n$$
    For each edge $(u, v) \in E$, define the label on edge $(u,v)$ by $\textnormal{Enc}(x)_{u,v} = \textnormal{Enc}(x_u)_v$.
    Then, the AEL code is given by
    $$ C^{\textnormal{AEL}} = \left\{ (z_v)_{v \in R} \;|\; z = \textnormal{Enc}(x), \; x \in C_{out} \right\}$$
    where each $z_v \in [\qzero]^d \equiv [\qzero^d]$ aggregates the incoming labels to vertex $v\in R$.
\end{definition}

\begin{fact}
    If $C_{in}$ is a $[d, \delzero, \rzero]_{\qzero}$ code, $C_{out}$ is a $[n, \delone, \rone]_{\qone}$ code and $G$ is a $(n, d, \lambda)$ bipartite expander, then $C^{\textnormal{AEL}} = C(C_{in}, C_{out}, G)$ has rate $r = \rzero\rone$ and distance $\delta$ such that $ \delta \geq \delzero - \tfrac{\lambda/d}{\delone}$. For this reason, $\delzero - \tfrac{\lambda/d}{\delone}$ is called the design distance of $C^{\textnormal{AEL}}$.
\end{fact}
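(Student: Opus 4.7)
The plan is to verify the rate and the distance claims separately, with the distance being the only substantive step. For the \emph{rate}, codewords of $C^{\textnormal{AEL}}$ are in bijection with codewords of $C_{out}$, since each $x \in C_{out}$ determines a unique $z = \textnormal{Enc}(x)$ and therefore a unique right-vertex view $(z_v)_{v \in R}$. The block length is $n$ and the alphabet is $[\qzero]^d \equiv [\qzero^d]$, so the rate is $\log_{\qzero^d}|C_{out}|/n$. Using $|C_{in}| = \qone$ (so $\log_{\qzero}\qone = \rzero d$) together with $|C_{out}| = \qone^{\rone n}$, a one-line calculation gives $r = \rzero \rone$.

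For the \emph{distance}, I would take two distinct codewords $z, z' \in C^{\textnormal{AEL}}$ arising from $x, x' \in C_{out}$ with $x \neq x'$, and introduce the sets
\[
  S \;=\; \{\, u \in L : x_u \neq x'_u \,\} \subseteq L, \qquad T \;=\; \{\, v \in R : z_v \neq z'_v \,\} \subseteq R.
\]
By the distance of $C_{out}$, $|S| \geq \delone n$. For each $u \in S$, the inner encodings $\textnormal{Enc}(x_u)$ and $\textnormal{Enc}(x'_u)$ are distinct codewords of $C_{in}$, so they disagree on at least $\delzero d$ of the $d$ neighbors of $u$. Since any disagreement on an edge $(u,v)$ already forces $z_v \neq z'_v$, each such neighbor lies in $T$; hence $|N(u) \cap T| \geq \delzero d$ for every $u \in S$, which yields $|E(S, T)| \geq \delzero d |S|$.

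The key step is then to combine this lower bound with the expander mixing lemma $|E(S,T)| \leq d|S||T|/n + \lambda \sqrt{|S||T|}$. Dividing by $d|S|$ and using $|S| \geq \delone n$ together with $|T| \leq n$ gives
\[
  \delzero \;\leq\; \frac{|T|}{n} \;+\; \frac{\lambda/d}{\sqrt{\delone}},
\]
and therefore $|T|/n \geq \delzero - (\lambda/d)/\sqrt{\delone}$. Since $\delone \leq 1$ one has $1/\sqrt{\delone} \leq 1/\delone$, which gives the stated (weaker) design-distance bound $\delta \geq \delzero - (\lambda/d)/\delone$.

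There is no real obstacle in this argument; the only point requiring some care is the passage from the edge-level guarantee supplied by the inner code to the right-vertex-level guarantee on which the expander mixing lemma operates. This conversion relies on the observation that $z_v = z'_v$ requires \emph{simultaneous} agreement of all $d$ edge labels incident to $v$, so even a single disagreeing incident edge places $v$ into $T$. Without this step one could not turn ``each $u \in S$ has $\geq \delzero d$ bad incident edges'' into the bound $|E(S, T)| \geq \delzero d |S|$ that drives the EML computation.
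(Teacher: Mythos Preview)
The paper states this result as a \emph{Fact} without proof, treating it as a standard property of the AEL construction. Your argument is correct and is the usual proof: the rate computation is straightforward, and for the distance you correctly combine the inner-code distance (to lower bound $|E(S,T)|$) with the expander mixing lemma, even obtaining the slightly sharper bound $\delta \geq \delzero - (\lambda/d)/\sqrt{\delone}$ before weakening to the stated form.
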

\begin{definition}[Tanner Code]
    A Tanner code $C^{\textnormal{TAN}} = C(C_0, G) \subseteq [q]^{nd}$ where $G$ is a $(n, d, \lambda)$ bipartite expander and $C_0 \subseteq [q]^d$ is a local code.
    The code is defined as 
    $$C^{\textnormal{TAN}} = \{z \;|\; z\in [q]^{nd}, \forall w\in L\cup R \;:\, z_w \in C_0\}$$
\end{definition}
\begin{fact}
    If $C_0$ is a $[d, \delta_0, r_0]_q$ code and $G$ is a $(n, d, \lambda)$ bipartite expander, then $C^{\textnormal{TAN}} = C(C_0, G)$ has rate $r \geq 2r_0-1$ and distance $\delta$ such that $\delta \geq \delta_0 (\delta_0 - \tfrac{\lambda}{d})$. For this reason, $\delta_0 (\delta_0 - \tfrac{\lambda}{d})$ is called the design distance of $C^{\textnormal{TAN}}$.
\end{fact}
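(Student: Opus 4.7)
The plan is to verify the rate bound via a parameter count and to establish the distance bound via the standard Sipser–Spielman expander argument combined with the expander mixing lemma.

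For the rate, I would observe that $C^{\textnormal{TAN}}$ has $nd$ coordinates (one per edge) and is cut out by the local-view constraints at the $2n$ vertices of $L \cup R$. Since $C_0$ is an $[d,\delta_0,r_0]_q$ code, its defining parity-check system has at most $(1-r_0)d$ independent linear constraints, so each vertex imposes at most $(1-r_0)d$ constraints on the edges incident to it. Summing over all $2n$ vertices yields at most $2n(1-r_0)d$ linear constraints on the $nd$ edge variables, so the dimension is at least $nd - 2n(1-r_0)d = nd(2r_0-1)$, giving rate at least $2r_0 - 1$.

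For the distance, let $z \in C^{\textnormal{TAN}}$ be any nonzero codeword and define
\[
S = \{u \in L : z_u \neq 0\}, \qquad T = \{v \in R : z_v \neq 0\}, \qquad E' = \{(u,v) \in E : z_{u,v} \neq 0\}.
\]
Every edge with a nonzero label is incident to nonzero local views on both endpoints, so $E' \subseteq E(S,T)$. Because each nonzero local view lies in $C_0$, it has Hamming weight at least $\delta_0 d$, which gives the bounds $|E'| \geq \delta_0 d |S|$ and $|E'| \geq \delta_0 d |T|$, and hence $|S|,|T| \leq |E'|/(\delta_0 d)$. Writing $|E'| = \mu n d$ and applying the expander mixing lemma to $S,T$ yields
\[
\mu n d \;=\; |E'| \;\leq\; \frac{d}{n}|S||T| + \lambda\sqrt{|S||T|} \;\leq\; \frac{d}{n}\cdot\frac{\mu^2 n^2}{\delta_0^2} + \lambda\cdot\frac{\mu n}{\delta_0}.
\]
Dividing through by $\mu n d$ and rearranging gives $\mu \geq \delta_0(\delta_0 - \lambda/d)$, which is the claimed lower bound on the normalized weight of any nonzero codeword, and hence on the minimum distance.

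There is no serious obstacle here; the only subtlety is being careful that $E'$ is contained in $E(S,T)$ (which is immediate from the definitions of $S$ and $T$) so that the expander mixing lemma can be applied directly, and that both one-sided bounds $|S|,|T| \leq |E'|/(\delta_0 d)$ are used to control the product $|S||T|$ under the square root. The argument is symmetric in $L$ and $R$, which is why the two factors of $\delta_0$ (rather than one) appear in the final bound $\delta_0(\delta_0 - \lambda/d)$.
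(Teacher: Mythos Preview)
Your argument is correct and is exactly the standard Sipser--Spielman proof of the Tanner code parameters. The paper states this as a \emph{Fact} without proof, so there is nothing to compare against; your rate count (assuming $C_0$ is linear, which is implicit in the notation and necessary for the bound to make sense) and your distance argument via the expander mixing lemma are both the canonical ones.
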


\subsection{Constraint Satisfaction Problems}

\begin{definition}[2-CSP]
    An instance $\Ins = (V, D, \Psi)$ of a $2$-CSP (Constraint Satisfaction Problem) on $n$ variables $V$ over a domain $D = [\ell]$ is specified by a set of constraints $\Psi$ over tuples in $W \subseteq [n]^2$ where $\Psi = \{\psi_{(i, j)}\;|\;(i,j) \in W\}$ such that $\psi_{(i, j)} \subseteq [\ell]^2$. An assignment $b : V\to D$ is a function indicating which domain value is set at each variable. A constraint $\psi_{(i, j)}$ is said to be satisfied by an assignment $b$ if $(b(i), b(j)) \in \psi_{(i,j)}$.
    The total number of constraints satisfied by an assignment is denoted as $\textnormal{val}(\Ins, b)$ where 
    $$\textnormal{val}(\Ins, b) = \left|\left\{(i,j) \;|\; (b(i), b(j)) \in \psi_{(i,j)}, \;\psi_{(i,j)}\in \Psi\right\}\right|$$
\end{definition}

\subsection{Weak Regularity Decomposition}

For functions $f,g$ over $\calX$, $\langle f, g\rangle = \sum_{x\in\calX} f(x)g(x)$ represents inner product over the counting measure.
Also, we use the notation where $\langle f, g\rangle_W = \sum_{x\in W} f(x)g(x)$ if $W \subseteq \calX$.
The following theorem is the main external result we use. Note that we use the notation $\mathcal{\widetilde{O}}(t(n))$ to hide the polylogarithmic factors in $t(n)$.
\begin{theorem}[Simplified form of Efficient Weak Regularity \cite{Jer23}]
    \label{thm:weak-reg}
    Let $E \subseteq[n]^2$ be the set of edges of a $(n,d,\lambda)$ bipartite expander. Also suppose $\mathcal{F} = \text{CUT}^{\;\otimes2}$ is the set of cut functions, i.e., $\{\mathbf{1}_S \otimes \mathbf{1}_T \;|\; S, T \subseteq [n]\}$ and $g : [n]^2 \to \{0,1\}$ be such that it is supported on $E$ with $\langle g, g\rangle \leq |E|$. Then for every $\gamma > 0$, if $\lambda/d < \gamma^2/2^{23}$, there exists $h = \sum_{t=1}^{p} k_t \cdot f_t$ with $p = \mathcal{O}(1/\gamma^2)$, scalars $k_1,\ldots,k_p$ and functions $f_1,\ldots,f_p \in \mathcal{F}$ such that
    $$\max_{f\in \mathcal{F}}\left|\left\langle g - h, f\right\rangle\right| \leq \gamma \cdot nd$$ Moreover, $h$ can be found in $\widetilde{\mathcal{O}}(2^{2^{{\mathcal{O}}(1/\gamma^2)}}\cdot nd)$ time with high probability by a randomized algorithm.
\end{theorem}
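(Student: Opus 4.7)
The plan is to adapt the classical Frieze--Kannan weak regularity boosting argument to the sparse setting supported on expander edges. A first observation is that the natural inner product to run the analysis in is the \emph{edge-normalized} one, $\langle u, v\rangle_E := \tfrac{1}{|E|}\langle u, v\rangle$, since then the hypothesis $\langle g, g\rangle \le |E|$ becomes $\|g\|_E^2 \le 1$, the target conclusion becomes $\max_{f\in\mathcal F}|\langle g-h, f\rangle_E| \le \gamma$ (using $|E|=nd$), and every cut function satisfies $\|f\|_E^2 \le 1$. In this normalization the arithmetic of the greedy boosting matches the dense Frieze--Kannan case almost verbatim.

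The algorithm itself is the standard greedy loop. Initialize $h_0 = 0$. At round $t$: if no $f\in\mathcal F$ correlates beyond $\gamma$ with $g-h_t$, output $h_t$; otherwise pick $f_t$ achieving $|\langle g-h_t, f_t\rangle_E| > \gamma$, set the projection coefficient $k_t := \langle g-h_t, f_t\rangle_E / \|f_t\|_E^2$, and update $h_{t+1} := h_t + k_t f_t$. The Pythagorean identity gives
\[
\|g - h_{t+1}\|_E^2 \;=\; \|g - h_t\|_E^2 \;-\; k_t^2\,\|f_t\|_E^2 \;\le\; \|g - h_t\|_E^2 \;-\; \gamma^2,
\]
so the loop terminates in $p = O(1/\gamma^2)$ rounds and outputs the desired decomposition.

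The main obstacle, and the source of both the $\widetilde{\mathcal O}(nd)$ near-linear factor and the doubly-exponential dependence, is implementing the cut-finding oracle efficiently. A brute-force search over $\mathcal F = \text{CUT}^{\;\otimes 2}$ is infeasible ($2^{2n}$ cuts). Instead, I would use a sampling/SVD approach adapted to expanders: the maximum cut-correlation of the residual $r = g-h_t$ is, up to constants, the top singular value of $r$ viewed as a matrix supported on $E$, and this can be approximated by a few power-iteration steps on the sparse expander support in $\widetilde{\mathcal O}(nd)$ time. The expansion hypothesis $\lambda/d < \gamma^2/2^{23}$ enters exactly here: it controls the expander-mixing-lemma error so that the singular-value estimate is tight up to additive $\gamma$. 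Rounding the approximate singular vectors into an explicit pair $(S_t, T_t)$ is done by enumerating a $\gamma$-net of threshold/sign patterns on a small sampled subset of vertices, contributing a $2^{O(1/\gamma^2)}$ per-round enumeration; iterating over $p = O(1/\gamma^2)$ rounds, where each round may refine the partition induced by all previous cuts into doubly many atoms, yields the overall $2^{2^{\mathcal O(1/\gamma^2)}}$ factor.

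The hardest step is really this efficient cut-finding: the greedy loop and the potential decrement are standard, but passing from the $n^2$-time dense algorithm of Frieze--Kannan to a near-linear $\widetilde{\mathcal O}(nd)$-time algorithm on a sparse expander (without losing accuracy beyond the sampled net) is exactly where one must exploit the spectral gap, and it is the step one would spend most of the technical effort on.
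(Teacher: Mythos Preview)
The paper does not prove this theorem: it is explicitly introduced as ``the main external result we use'' and is cited from \cite{Jer23} without proof. There is therefore no proof in the paper to compare your proposal against.

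For what it is worth, your sketch follows the expected shape of the argument in the cited work: the greedy Frieze--Kannan boosting with a potential decrement of $\gamma^2$ per round is exactly the existential backbone, and you correctly identify that the real content is the near-linear-time cut-finding oracle on a sparse expander, where the spectral hypothesis $\lambda/d < \gamma^2/2^{23}$ is consumed. Your account of where the doubly-exponential $2^{2^{O(1/\gamma^2)}}$ comes from is somewhat loose (it is not simply ``each round doubles the atoms''; the actual mechanism in \cite{Jer23,JST21} involves a more delicate recursive approximation of the correlation oracle), but as a high-level plan your proposal is pointed in the right direction. Since the present paper treats the theorem as a black box, no further comparison is possible here.
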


It will be convenient to use the language of factors to search the
decompositions identified by regularity lemmas.

\begin{definition}[Factors, atoms, measurable functions]
Let $\calX$ be a finite set called the domain.
A factor $\calB = \{\cal{P}^{(i)}\}_i$ is a decomposition or a partition of the set $\calX = \sqcup_i \; \cal{P}^{(i)}$.
The parts of the factor ($\mathcal{P}^{(i)}$'s) are referred to as atoms.
A function $f$ on $\calX$ is said to $\calB$-measurable if it is constant over each atom of $\calB$.
\end{definition}
The factors we will consider will be defined by a finite collection of cut functions.
\begin{definition}[Factors from cut functions]
  Let $\calX$ be a finite set, and let $\cF = \inbraces{f_1, \ldots, f_r}$ be a finite collection of cut functions ($f_i:\calX \to \{0,1\}$).
  We consider the factor $\cB$ defined by the functions in $\cF$, as the factor with the atoms $\inbraces{x ~|~ f_1(x) = a_1, \ldots, f_p(x) = a_p}$ for all $(a_1,\ldots, a_p) \in \{0,1\}^p$.
\end{definition}
Additionally, we will need the following simple observation regarding conditional averages.
\begin{definition}[Conditional averages]
If $f$ is a function and $\calB$ is a factor, then we define the  conditional average function $\Ex{f|\calB}$ as
$$
\Ex{f|\calB}(x) ~=~ \Ex{y \sim \calB(x)}{f(y)}
$$
where $\calB(x)$ denotes the atom containing $x$. Note that the function $\Ex{f|\cB}$ is measurable with respect to $\cB$.
\end{definition}

\begin{lemma}\label{lem:measurable-inner-product}
  Let $h$ be a function over the domain $\calX$ that is $\calB$-measurable, and $f$ be an arbitrary function over the same domain. Then, for the counting measure over $\calX$, we have
  $$
    \ip{h}{f} ~=~ \ip{h}{\Ex{f|\calB}} 
  $$
\end{lemma}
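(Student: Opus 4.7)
The plan is to prove the identity by directly decomposing the counting-measure inner product over the atoms of $\mathcal{B}$ and exploiting the fact that $h$ is constant on each atom.

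First, I would write $\calX = \sqcup_{i} \calP^{(i)}$ using the atoms of $\calB$, and split the sum
$$
\ip{h}{f} ~=~ \sum_{x \in \calX} h(x)\, f(x) ~=~ \sum_{i} \sum_{x \in \calP^{(i)}} h(x)\, f(x).
$$
Since $h$ is $\calB$-measurable, there exists a constant $c_i$ such that $h(x) = c_i$ for every $x \in \calP^{(i)}$. Therefore the inner sum simplifies to $c_i \cdot \sum_{x \in \calP^{(i)}} f(x)$.

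Next, I would rewrite this sum in terms of the conditional average. By definition, for any $x \in \calP^{(i)}$,
$$
\Ex{f|\calB}(x) ~=~ \Ex{y \sim \calP^{(i)}}{f(y)} ~=~ \frac{1}{|\calP^{(i)}|} \sum_{y \in \calP^{(i)}} f(y),
$$
so that $\sum_{x \in \calP^{(i)}} f(x) = |\calP^{(i)}| \cdot \Ex{f|\calB}(x)$ for any single representative $x \in \calP^{(i)}$ (and the latter value is constant on the atom). Substituting back,
$$
\sum_{x \in \calP^{(i)}} h(x)\, f(x) ~=~ c_i \cdot |\calP^{(i)}| \cdot \Ex{f|\calB}\bigl(x_i\bigr) ~=~ \sum_{x \in \calP^{(i)}} h(x) \cdot \Ex{f|\calB}(x),
$$
where the last equality uses that both $h$ and $\Ex{f|\calB}$ are constant on $\calP^{(i)}$. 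Summing over $i$ yields $\ip{h}{f} = \ip{h}{\Ex{f|\calB}}$.

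There is essentially no obstacle here; the only thing to be careful about is keeping track of the counting measure versus uniform averaging when passing between the sum $\sum_{x \in \calP^{(i)}} f(x)$ and the conditional expectation $\Ex{f|\calB}(x)$, which differ by the factor $|\calP^{(i)}|$ that is absorbed when converting back to a sum. The proof is a one-line observation once the partition is written down.
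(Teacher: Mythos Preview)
Your proof is correct and is essentially the same argument as the paper's: both decompose the sum over atoms and use that $h$ is constant on each atom to replace $f$ by its atom-average. The only cosmetic difference is that the paper keeps the outer sum over all $x\in\calX$ and writes the atom-average as $\frac{1}{|\calB(x)|}\sum_{y\in\calB(x)}h(y)f(y)$ (a double-counting reformulation), whereas you index the atoms explicitly; the content is identical.
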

\begin{proof}
By definition of the $\calB$-measurability, $h$ is constant on each atom, and thus we can write $h(x)$ as $h(\calB(x))$. 
  \begin{align*}
    \ip{h}{f} ~=~ \sum_{x \in \calX}{h(x) \cdot f(x)}
    &~=~ \sum_{x \in \calX} \frac{1}{|\calB(x)|} \sum_{y \sim \calB(x)}{h(y) \cdot f(y)} \\
    &~=~ \sum_{x \in \calX}{h(\calB(x)) \cdot \frac{1}{|\calB(x)|} \sum_{y \sim \calB(x)}{f(y)}} \\
    &~=~ \sum_{x \in \calX}{h(x) \cdot \Ex{f|\calB}(x)} ~=~ \ip{h}{\Ex{f|B}} \qedhere 
  \end{align*}
\end{proof}

\begin{definition}[Restricted factors]
    The restriction of a factor $\mathcal{B}$ on a subset $W$ of the domain $\mathcal{X}$ is denoted as $\mathcal{B}_{|W} = \{\mathcal{P} \cap W\;|\; \mathcal{P} \in \mathcal{B}\}$.
\end{definition} 
Note $\mathcal{B}_{|W}$ is itself a factor as it is a partition of the restricted domain $W$. Also, if a function is $\calB$-measurable it is also $\calB_{|W}$-measurable.
\begin{definition}[Concentrated functions]
    A function is said to be $\eta$-concentrated on a factor $\mathcal{B}$ over domain $\mathcal{X}$ if it is measurable on the factor $\mathcal{B}_{|W}$ (over $W$) for some $W\subseteq \mathcal{X}$ with $|W|\geq(1-\eta)|\mathcal{X}|$.
\end{definition}

\section{Technical Overview}

In the expander-based codes we study, the local views of the codewords on vertices of the expander graph belong to a base code. For these codes, the list decoding task at a corrupted codeword $\widetilde{y}$ has two different steps.
In step $(1)$, we retrieve a list of (say $\ell$) candidate local codewords at each vertex.
This step is straightforward as it just involves list decoding around the local view $\widetilde{y}_u$ at each vertex $u$ according to a local code with a constant block length.
Say that after this step, we have $\Theta(n)$ different $\ell$-sized lists of local codewords where $n$ is the block length of the final code.
Now notice that for any codeword $z$ that is close to $\widetilde{y}$, the local views $z_u$ must also be close to $\widetilde{y}_u$ at most vertices $u$ in the graph.
This observation leads us to a naive list decoding algorithm where, in step $(2)$, we enumerate over the $\ell^{\Theta(n)}$ possible choices of local codewords and hence land close to every $z$ we are interested in.
This enumeration does not directly give us every $z$ because at some vertices the local view $z_u$ can be far away from $\widetilde{y}_u$, i.e. local list at $u$ may not contain $z_u$.
Even in such cases, some $\widetilde{z}$ that is enumerated over will be close enough that standard unique decoding algorithms are able to recover $z$ from $\widetilde{z}$.

The naive way of executing step $(2)$ is inefficient as it requires us to go over exponentially many choices.
This is also redundant as most local views in codewords close to $\widetilde{y}$ must be consistent with each other.
So we can rule out many inconsistent combinations of local views and save time by avoiding checking them.
The main technique used in the paper essentially only checks a few (actually constant) relevant combinations of choices.
This enables step $(2)$ to run in $\mathcal{\widetilde{O}}(n)$ time.
On a high level, this is done by encoding the consistency checks in a CSP and then using the weak regularity decomposition described in \cite{Jer23} to retrieve a ``simple'' proxy to the CSP. 
This ``simple'' proxy can then be used to efficiently enumerate the space of good solutions to the CSP which contains the list we want to recover.\\
\\
\textbf{Efficient Weak Regularity Decompositions}.
We take a detour to recall how the weak regularity framework can be used to understand the value of certain CSPs.
We will do this by describing how the efficient weak regularity framework of \cite{Jer23} works within a particular set of parameters.
Essentially, for the case where $E \subseteq [n]^2$ is derived from the edges of an expander and $g : E \to \{0,1\}$ is an arbitrary function, the framework will allow us to find a ``simple'' function $h$ that closely approximates $g$.
The notion of approximation is with respect to a set of test functions $\mathcal{F}$ and an inner product $\langle\cdot,\cdot\rangle$ on the space of functions, where, given some approximation error $\gamma > 0$, we can find $h$ such that
$$\max_{f \in \mathcal{F}}\; \left|\langle g-h, f\rangle\right| \leq \gamma$$
The sense in which $h$ is ``simple'' is that there exists scalars $k_t$ and functions $f_t \in \mathcal{F}$ such that $h = \sum_{t=1}^{p} k_t\cdot f_t$ where the number of functions in the sum is a constant $p = \mathcal{O}(1/\gamma^2)$.
This decomposition is algorithmically useful too as \cref{thm:weak-reg} states that the functions $f_t$ and constants $k_t$ can be found efficiently. 

In our particular example, we will take $\mathcal{F}$ to be the set of cut functions $\text{CUT}^{\otimes 2} = \{\mathbf{1}_S \otimes \mathbf{1}_T : [n]^2 \to \{0,1\}\;|\; S, T \subseteq[n]\}$ where $\mathbf{1}_W(w) = w\in W$ and use the counting measure where $$\langle f, g\rangle = \sum_{(u,v)\in [n]^2} f(u,v)g(u,v)$$ 
We will now consider a 2-CSP $\Ins$ where the variables are two distinct sets of $[n]$, the constraints are over $E \subseteq [n]^2$ and the domain is $[\ell]$.
In this setup, the number of constraints satisfied by an assignment $b$ is denoted as $\text{val}(\Ins, b)$.
To express this value in a way amenable to the weak regularity framework, we will have to define some more terms.
Let $g_{(\alpha_L, \alpha_R)}:[n]^2\to\{0,1\}$ be a cut function such that $g_{(\alpha_L, \alpha_R)}(u,v) = 1$ if $(\alpha_L, \alpha_R)$ satisfies the $(u,v) \in E$ constraint and $0$ otherwise.
Also, let $\mathbf{1}_{b, \alpha}$ denote a cut function where $\mathbf{1}_{b, \alpha}(w) = 1$ iff $b(w) = \alpha$.
Then,
$$
    \text{val}(\Ins, b) = \sum_{\alpha =(\alpha_L, \alpha_R) \in [\ell]^2} \langle g_{\alpha}, \mathbf{1}_{b, \alpha_L} \otimes \mathbf{1}_{b, \alpha_R} \rangle
$$
We can now approximate each function $g_{\alpha}$ by a simpler $h_{\alpha}$ with precision $\gamma$ so that 
$$
\max_{f \in \text{CUT}^{\otimes 2}}\; \left|\langle g_{\alpha}-h_{\alpha}, f\rangle\right| \leq \gamma
$$
Observe that the functions $h_\alpha = \sum_{t=1}^{p_\alpha} k_{\alpha,t} \cdot \mathbf{1}_{\mathcal{S}_{\alpha, t}} \otimes \mathbf{1}_{\mathcal{T}_{\alpha, t}}$ can be used as a simple proxy to approximate the whole CSP's value, as $\mathbf{1}_{b, \alpha_L} \otimes \mathbf{1}_{b, \alpha_R} $ belongs to the set of fooled test functions $\text{CUT}^{\otimes 2}$ too.
\begin{align*}
    \text{val}(\Ins, b) &= \sum_{\alpha \in [\ell]^2} \langle h_{\alpha}, \mathbf{1}_{b, \alpha_L} \otimes \mathbf{1}_{b, \alpha_R} \rangle \;\pm\; \gamma\ell^2\\
    &= \sum_{\alpha \in [\ell]^2}\sum_{t=1}^{p_\alpha} k_{\alpha,t} \langle \mathbf{1}_{\mathcal{S}_{\alpha, t}} \otimes \mathbf{1}_{\mathcal{T}_{\alpha, t}}, \mathbf{1}_{b, \alpha_L} \otimes \mathbf{1}_{b, \alpha_R} \rangle \;\pm\; \gamma\ell^2 \\
    &= \sum_{\alpha \in [\ell]^2}\sum_{t=1}^{p_\alpha} k_{\alpha,t} \langle \mathbf{1}_{\mathcal{S}_{\alpha, t}}, \mathbf{1}_{b, \alpha_L}\rangle\langle \mathbf{1}_{\mathcal{T}_{\alpha, t}},  \mathbf{1}_{b, \alpha_R} \rangle \;\pm\; \gamma\ell^2
\end{align*}
Ignoring the approximation error $\gamma\ell^2$ which can be made small, the form of the final equation above implies that $\text{val}(\Ins, b)$ is just controlled by the following set of values
$$
\left\{\langle \mathbf{1}_{\mathcal{S}_{\alpha, t}}, \mathbf{1}_{b, \alpha_L}\rangle\right\}_{\alpha\in[\ell]^2, t\in[p_\alpha]} 
\quad\cup\quad
\left\{\langle \mathbf{1}_{\mathcal{T}_{\alpha, t}}, \mathbf{1}_{b, \alpha_R}\rangle\right\}_{\alpha\in[\ell]^2, t\in[p_\alpha]}
$$
Notice that the size of this set is $\mathcal{O}(\ell^2/\gamma^2)$ which is a constant.
To understand this decomposition a bit better, build factors such that for every $\mathcal{P}\in\mathcal{B}_L$ or $\mathcal{P}\in\mathcal{B}_R$, the functions $\{\mathbf{1}_{\mathcal{S}_{\alpha, t}} \}_{\alpha\in[\ell]^2, t\in[p_\alpha]}$ or $\{\mathbf{1}_{\mathcal{T}_{\alpha, t}} \}_{\alpha\in[\ell]^2, t\in[p_\alpha]}$ are all constant on $\mathcal{P}$ respectively.
The number of such parts is again a constant $2^{\mathcal{O}(\ell^2/\gamma^2)}$.
It directly follows that $\text{val}(\Ins, b)$ is controlled by the set of values
$$
\Big\{\langle \mathbf{1}_{\mathcal{P}}, \mathbf{1}_{b, \alpha}\rangle= \text{number of } w\in \mathcal{P} \text{ where } b(w) = \alpha\Big\}_{\mathcal{P}\in\mathcal{B}_L \cup \mathcal{B}_R, \alpha\in[\ell]}
$$
But, partitioning the input space as such allows us to see that the CSP's value is only dependent on the distribution over $[\ell]$ inside each independent part.
Through a more careful analysis, it can also be shown that the number of satisfied constraints in any sub-portion of the CSP is also only dependent on the distributions inside the parts.\\ 
\\
\textbf{Efficiently enumerating stitched local views}.
Let's go back to the list decoding problem for expander-based codes, where we had $\ell$ candidate codewords at each vertex that need to be efficiently stitched together.
We will do this by constructing a CSP where the variables are on the vertices of the graph and the choice of its value determines which candidate codeword is chosen.
The constraints are consistency checks along the edges of the graph.
The assignments to the CSP which correspond to codewords within the list decoding radius satisfy many of these consistency checks and thus constitute a good assignment.
We can also show that some sub-portion of the CSP is fully satisfied by such assignments, i.e., for some $S,T \subseteq [n]$ if $\text{val}_{S,T}(\Ins, b)$ represents the number of satisfied $(u,v)\in E(S,T)$ constraints, then
$$\text{val}_{S,T}(\Ins, b) = |E(S, T)|$$
This observation is interesting as, if we change the assignment to some $b'$ that differs from $b$ at a few vertices, we will start dissatisfying a lot of connected constraints in $E(S,T)$.
The reason for this is the fact that changing assignments corresponds to choosing different local codewords with significantly different opinions on the edges.
This will automatically start failing the consistency checks as all were satisfied before the change.
But looking at the value of this fully satisfied sub-portion of the CSP from weak regularity's perspective, the value should not vary more than the slack granted by the approximation factor $\gamma$ if the assignment change is made carefully so as to not alter the distribution induced by $b$ on any part.
As $\gamma$ can be made extremely small, this must mean that the wiggle room for changing assignments while keeping the parts' distributions the same must also be small.
This can only happen if there is mostly just one option to choose from in each part, i.e., the distribution induced by $b$ is almost delta function like on all the parts.

We can exploit the property of good assignments inducing delta function like distributions on the parts (i.e. being $\eta$-concentrated) to efficiently enumerate the space and cover all the codewords within the list decoding radius.
Recall that the number of parts is a constant $2^{\mathcal{O}(\ell^2/\gamma^2)}$.
So, enumerating all the combinations of delta function distributions on parts will just require a constant $\ell^{2^{\mathcal{O}(\ell^2/\gamma^2)}}$ number of iterations.
This enumeration again lands us close enough to each codeword we want to recover such that a unique decoder can complete the job.

\usetikzlibrary{shapes.geometric, arrows.meta, positioning}

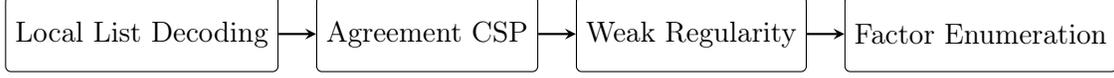
\begin{figure}[h!]
  \centering
\tikzstyle{block} = [
  rectangle,
  rounded corners=2pt,
  minimum width=1.5cm,
  minimum height=1cm,
  text centered,
  draw=black,
  line width=0.4pt
]
\tikzstyle{arrow} = [->, thick, >=stealth]

\begin{tikzpicture}[node distance=0.5cm and 0.5cm]

\node (local)   [block] {Local List Decoding};
\node (csp)     [block, right=of local] {Agreement CSP};
\node (weakreg) [block, right=of csp] {Weak Regularity};
\node (enum)    [block, right=of weakreg] {Factor Enumeration};

\draw [arrow] (local) -- (csp);
\draw [arrow] (csp) -- (weakreg);
\draw [arrow] (weakreg) -- (enum);
\end{tikzpicture}
\caption{List decoding pipeline via expanding CSPs.}
\end{figure}

\section{List Decoding \text{AEL} Codes}
Consider an \text{AEL} code $C^{\textnormal{AEL}}=C(C_{in}, C_{out}, G) \subseteq [\qzero^d]^n$ where $C_{out} \subseteq [\qone]^n$ is the outer code, $C_{in} \subseteq [\qzero]^d$ is the inner code with $|C_{in}| = \qone$ and $G = (L, R, E)$ is a $(n, d, \lambda)$ bipartite expander. Codewords in $C^{\text{AEL}}$ belong to the $[\qzero^d]^R$ space, but they also naturally induce a mapping over the edges $E \to [\qzero]$.
In this section, we will think of them as belonging to $[\qzero]^E$.
A corrupted codeword can also similarly be described as an element of $[\qzero]^E$.
We define the left and right folded views of $z \in [\qzero]^E$ as 
$$
    z^L = (z_{1_L}, z_{2_L}, \ldots, z_{n_L}) \in [\qzero^d]^L \quad\quad z^R = (z_{1_R}, z_{2_R}, \ldots, z_{n_R}) \in [\qzero^d]^R
$$
respectively.
Note that $L = \{1_L, 2_L, \ldots, n_L\}$ and $R = \{1_R, 2_R, \ldots, n_R\}$, where the numbers specify an arbitrary order on the sets of left and right vertices respectively.
In this notation, if $z\in C^{\text{AEL}}$ then $z^L \in C_{in}^L$ and 
$$\text{Enc}^{-1}(z^L) = \left(\text{Enc}^{-1}(z_{1_L}), \text{Enc}^{-1}(z_{2_L}), \ldots, \text{Enc}^{-1}(z_{n_L})\right) \in C_{out}$$
In this section, we will show that there exists a family of \text{AEL} codes that are list decodable with a radius close to the (design) distance by a near-linear time algorithm (\cref{thm:ael-main}).

This will be shown by first proving \cref{thm:AEL-listdec} and then later describing an appropriate instantiation of codes in \cref{subsec:ael-instantiate}.
\begin{theorem}
    \label{thm:AEL-listdec}
    Let $C_{in}$ be a $(\delzero, \ell)$ list decodable code and $C_{out}$ be uniquely decodable from $\delone$ fraction of errors, then the \text{AEL} code $C^{\textnormal{AEL}}=C(C_{in}, C_{out}, G)$ is $(\delzero-\varepsilon, \ell^{2^{\mathcal{O}(\ell^6/\varepsilon^2)}})$ list decodable for any constant $0<\varepsilon<\delzero$, if $\tfrac{\lambda}{d} < \tfrac{\varepsilon^2\delone^2}{2^{31}\ell^4}$. Furthermore, if $C_{out}$ is uniquely decodable from $\delone$ fraction of errors in time $t$, then this can be done in $\widetilde{\cal{O}}_{\varepsilon,\ell}(n+t)$ time.
\end{theorem}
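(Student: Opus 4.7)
The plan is to reduce list decoding to efficiently approximating an agreement CSP on the bipartite expander $G$ and then enumerate the few distinguishable solutions using the weak regularity structure of \cref{thm:weak-reg}. First, at each left vertex $u \in L$, I run brute-force list decoding of the constant-sized inner code $C_{in}$ on the folded view $\widetilde{y}^L_u$ to obtain a list $\calL_u \subseteq C_{in}$ of size at most $\ell$, taking $O(n)$ time overall. I then set up a 2-CSP $\Ins$ on $G$ whose variable at $u \in L$ takes values in $[\ell]$ (indexing $\calL_u$) and whose variable at $v \in R$ takes values in a bounded-size domain, with the constraint at edge $(u,v) \in E$ demanding that the chosen value at $u$ agrees with the chosen value at $v$ on the shared edge label.

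The key structural observation is that every codeword $z \in C^{\textnormal{AEL}}$ with $\Delta(\widetilde{y}, z) \leq (\delzero - \varepsilon) n$ gives rise to an assignment $b_z$ that fully satisfies the constraints on a large sub-portion $E(S_z, T_z)$ of the expander's edges. Indeed, set $T_z = \{ v \in R : z^R_v = \widetilde{y}^R_v\}$ (large by assumption) and $S_z = \{ u \in L : z^L_u \in \calL_u\}$; the latter is close to all of $L$ by applying \cref{clm:product-like-expansion} to $T_z$, since for most $u \in L$ the view $z^L_u$ differs from $\widetilde{y}^L_u$ on few enough edges to land within the local list decoding radius of $C_{in}$. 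I then apply \cref{thm:weak-reg} with $\gamma = \Theta(\varepsilon/\ell^2)$ to each of the $O(\ell^2)$ constraint indicator functions $g_\alpha$ to obtain joint factors $\calB_L, \calB_R$ with at most $2^{O(\ell^2/\gamma^2)} = 2^{O(\ell^6/\varepsilon^2)}$ atoms. By construction, the value of $\Ins$ on any cut-like sub-portion is determined, up to additive $O(\gamma \cdot nd)$ error, by the conditional averages of the assignment on these atoms.

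The crux is a concentration lemma: each such $b_z$ is constant per atom of $\calB_L \cup \calB_R$ on a subset of vertices of measure $(1 - \eta)$. If $b_z$ were not $\eta$-concentrated on some atom, a perturbation $b'$ that preserves all conditional averages would, by \cref{lem:measurable-inner-product}, still satisfy $|\textnormal{val}(\Ins, b') - \textnormal{val}(\Ins, b_z)| = O(\gamma \cdot nd)$; however, swapping indices within the atom forces $b'$ to select distinct codewords of $C_{in}$ at a positive density of left vertices, and by the $\delzero$-distance of $C_{in}$ each such swap dissatisfies $\Omega(d)$ edges within $E(S_z, T_z)$, producing a total violation exceeding the slack, a contradiction. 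Given concentration, I enumerate all $\calB_L$-measurable left assignments, of which there are $\ell^{2^{O(\ell^6/\varepsilon^2)}}$; for each candidate I invert $C_{in}$'s encoder vertex-wise to obtain a proposed outer word, apply the $C_{out}$ unique decoder, re-encode, and retain the result whenever it lies within $(\delzero - \varepsilon) n$ of $\widetilde{y}$.

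The main obstacle will be the quantitative concentration argument: precisely balancing the $O(\gamma \cdot nd)$ weak regularity slack against the distance-driven per-swap violation, and using the expander mixing form of \cref{clm:product-like-expansion} to control the contribution from edges outside $E(S_z, T_z)$ so that the boundary contribution is absorbed within the hypothesis $\lambda/d < \varepsilon^2 \delone^2 / (2^{31} \ell^4)$. Once concentration is established the running time follows: $O(n)$ for local decoding, $\widetilde{O}(2^{2^{O(\ell^2/\gamma^2)}} \cdot nd)$ for weak regularity, and an $(\ell, \varepsilon)$-dependent number of $C_{out}$ unique decodings at cost $t$ each, giving the claimed $\widetilde{O}_{\varepsilon,\ell}(n + t)$ bound.
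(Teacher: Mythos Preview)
Your proposal is correct and follows essentially the same pipeline as the paper: local list decoding, agreement CSP on $G$, weak regularity via \cref{thm:weak-reg}, a concentration lemma, and enumeration over factor-measurable left assignments followed by outer unique decoding. The only implementation differences worth flagging are that the paper's constraint at $(u,v)$ asks the chosen left codeword to agree with $\widetilde y_{u,v}$ (so right variables are dummy and concentration is needed only on $\calB_{L|S_z}$), that the paper pre-shrinks $S_z$ to those $u$ with $|N(u)\setminus T_z|\le(\delzero-\varepsilon/2)d$ so each swap immediately violates at least $(\varepsilon/2)d$ constraints inside $E(S_z,T_z)$ rather than handling boundary edges separately, and that $\gamma$ is set to $\varepsilon\delone/(2^4\ell^2)$ so that the final discrepancy $\eta|S_z|+|\overline{S_z}|$ lands below the outer decoding radius $\delone n$.
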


\subsection{Setting up the CSP}
\label{subsec:ael-csp-setup}
At each vertex $u \in L$, we can use the list decoding property of $C_{in}$ to retrieve a list of $\leq \ell$ candidate local codewords at most $\delzero d$ away from the local view $\widetilde{y}_{u}$. Using these lists, we can set up a 2-CSP $\Ins_{\widetilde{y}} = (V,D,\Psi)$ with $2n$ variables, i.e., $V = \{x_w \;|\; w \in L \cup R\}$ and domain $D = [\ell]$. Each variable represents the choice of a candidate codeword from the local list at a vertex. The locally decoded list at each vertex $u \in L$ is arbitrarily ordered like $c^{1}_u, c^{2}_u, \ldots, c^{\ell}_u$ so that every choice of a domain value matches to a candidate local codeword. If there are $<\ell$ elements in the local list, the remaining domain values are set to represent arbitrary codewords in the rest of $C_{in}$. This way, the choices $c^{1}_u, c^{2}_u, \ldots, c^{\ell}_u$ are all unique. The domain values associated with vertices in $R$, however, do not represent local codewords. As we use the notation $c^{i}_{u,v}$ to represent the symbol in $c^{i}_u$ along vertex $v \in N(u)$, note that in this notation $c^{i}_{u,v} = c^{i}_{v,u}$ is not generally true.

Each edge in the graph $G$ represents a constraint in the CSP, i.e., $\Psi = \{\psi_e \; | \; e \in E\}$. At an edge $(u, v) \in E$, the assignment $x_u = i, x_v = j$ satisfies the constraint iff $c^{i}_{u,v}$ agrees with $\widetilde{y}_{u,v}$, i.e., $\psi_{(u,v)} = \{(i,j) \; | \; c^{i}_{u, v} = \widetilde{y}_{u, v}\}$. Notice that any choice of the value of variables associated with $R$ yields the same set of satisfied constraints. This implies that this CSP's value solely depends on the choice of variables associated with $L$. 

\begin{algorithm}
\caption{List Decoding of \text{AEL} code}
\label{alg:AEL}
\begin{algorithmic}
\Require $\widetilde{y} \in [\qzero]^E$
\Ensure $\mathcal{L} \subseteq C^{\text{AEL}}$
\\
\\
\begin{itemize}
    \item Construct a CSP $\Ins_{\widetilde{y}} = (V,D=[\ell],\Psi = \{\psi_e\;|\; e \in E\})$ as described in \cref{subsec:ael-csp-setup}
    \item For each $\alpha \in [\ell]^2$, obtain the weak regularity approximation $h_\alpha$ of $g_\alpha : [n]^2 \to \{0,1\}$ as defined below, with $\gamma = \dfrac{\varepsilon\delone}{2^4\ell^2}$
\[g_\alpha(u, v) = \begin{cases}
    1 &\quad \text{if }\,(u,v) \in E \,\text{ and }\,\alpha \in \psi_{(u,v)} \\
    0 &\quad \text{otherwise}
\end{cases}\]
    \item Let $h_\alpha = \sum_{t=1}^{p_\alpha} k_{\alpha,t} \cdot \mathbf{1}_{\mathcal{S}_{\alpha, t}} \otimes \mathbf{1}_{\mathcal{T}_{\alpha, t}}$. Then, compute the factor $\mathcal{B}_L$ determined by the set of cut functions $\{\mathbf{1}_{\mathcal{S}_{\alpha, t}} \;|\; \alpha\in[\ell]^2, t\in[p_\alpha]\}$ over $L$
    \item $\mathcal{L} \gets \{\}$
    \item For every $\mathcal{B}_L$ measurable function $x:L \to[\ell]$ \begin{itemize}
        \item[-] Set $\widetilde{z}$ such that $\widetilde{z}^L = \left(c^{x(1)}_1, c^{x(2)}_2, \ldots, c^{x(n)}_n\right)$
        \item[-] Set $z$ such that $z^L \gets \text{Enc}\left(\textsc{UniqueDecode}(\text{Enc}^{-1}(\widetilde{z}^L))\right)$
        \item[-] If $z \neq \textsc{Reject}$ and $\Delta(z^R, \widetilde{y}^R) \leq (\delzero - \varepsilon)n$
        \begin{itemize}
            \item[-] $\mathcal{L}\gets\mathcal{L} \cup \{z\}$
        \end{itemize}
    \end{itemize}
    \item Return $\mathcal{L}$
\end{itemize}
\end{algorithmic}
\end{algorithm}

\subsection{Weak Regularity Decomposition}
\label{subsec:ael-weak-reg}
We will now use the weak regularity decomposition to analyze the structure of the CSP. Notice that with $g_\alpha$ as defined in \cref{alg:AEL}, the number of constraints satisfied by an assignment $x : V \to [\ell]$ in the set up CSP $\Ins_{\widetilde{y}} = (V, D, \Psi)$ can be written as
$$
    \text{val}(\Ins_{\widetilde{y}}, x) = \sum_{\alpha=(\alpha_L, \alpha_R) \in [\ell]^2} \left\langle g_\alpha, \mathbf{1}_{x,{\alpha_L}} \otimes \mathbf{1}_{x,{\alpha_R}} \right\rangle
$$
where $\mathbf{1}_{x,{\alpha_L}}$ and $\mathbf{1}_{x,{\alpha_R}}$ are cut functions where $\mathbf{1}_{x,{\alpha_L}}(u) = 1$ iff $x(u) = \alpha_L$ for $u \in L$ and $\mathbf{1}_{x,{\alpha_R}}(u) = 1$ iff $x(v) = \alpha_R$ for $v \in R$. 
It is also possible to write the number of satisfied constraints of the form $\{\psi_e\;|\;e \in E(S, T)\;\}$ for any $S\subseteq L$ and $T\subseteq R$ in a similar way,
$$
    \text{val}_{S,T}(\Ins_{\widetilde{y}}, x) = \sum_{\alpha=(\alpha_L, \alpha_R) \in [\ell]^2} \left\langle g_\alpha, \mathbf{1}_{x,{\alpha_L}} \cdot \mathbf{1}_{S} \otimes \mathbf{1}_{x,{\alpha_R}} \cdot \mathbf{1}_{T} \right\rangle
$$
Using the decomposition from \cref{thm:weak-reg}, each $g_\alpha$ can be approximated by $h_\alpha$, giving us the following expression \footnote{As the CSP only depends on the variables in $L$, we can further optimize the dependence of $\ell$ in the error term. But we skip this optimization here for consistency with the other sections.}
$$
    \text{val}_{S,T}(\Ins_{\widetilde{y}}, x) = \sum_{\alpha=(\alpha_L, \alpha_R) \in [\ell]^2} \left\langle h_\alpha, \mathbf{1}_{x,{\alpha_L}} \cdot \mathbf{1}_{S} \otimes \mathbf{1}_{x,{\alpha_R}} \cdot \mathbf{1}_{T} \right\rangle \;\pm\; \ell^2 \cdot \gamma \cdot nd 
$$
Assuming $h_\alpha$ has the form $\sum_{t=1}^{p_\alpha} k_{\alpha,t} \cdot \mathbf{1}_{\mathcal{S}_{\alpha, t}} \otimes \mathbf{1}_{\mathcal{T}_{\alpha, t}}$, the factor $\mathcal{B}_L$ is determined by the set $\{\mathbf{1}_{\mathcal{S}_{\alpha, t}} \;|\; \alpha\in[\ell]^2, t\in[p_\alpha]\}$ and $\mathcal{B}_R$ by $\{\mathbf{1}_{\mathcal{T}_{\alpha, t}} \;|\; \alpha\in[\ell]^2, t\in[p_\alpha]\}$.
So,
\begin{align*}
    \text{val}_{S,T}(\Ins_{\widetilde{y}}, x) &= \sum_{\alpha \in [\ell]^2} \sum_{t=1}^{p_\alpha} k_{\alpha,t} \left\langle \mathbf{1}_{\mathcal{S}_{\alpha, t}} \otimes \mathbf{1}_{\mathcal{T}_{\alpha, t}}, \; \mathbf{1}_{x,{\alpha_L}} \cdot \mathbf{1}_{S} \otimes \mathbf{1}_{x,{\alpha_R}} \cdot \mathbf{1}_{T} \right\rangle \;\pm\; \ell^2 \gamma nd \\
    &= \sum_{\alpha, t} k_{\alpha,t} \left\langle \mathbf{1}_{\mathcal{S}_{\alpha, t}} , \mathbf{1}_{x,{\alpha_L}} \right\rangle_S\left\langle \mathbf{1}_{\mathcal{T}_{\alpha, t}}, \mathbf{1}_{x,{\alpha_R}} \right\rangle_T \;\pm\; \ell^2 \gamma nd \\
    &= \sum_{\alpha, t} k_{\alpha,t} \left\langle \mathbf{1}_{\mathcal{S}_{\alpha, t}} , \E[\mathbf{1}_{x,{\alpha_L}} \,| \,\mathcal{B}_{L|S}] \right\rangle_S\left\langle \mathbf{1}_{\mathcal{T}_{\alpha, t}}, \E[\mathbf{1}_{x,{\alpha_R}} \,|\, \mathcal{B}_{R|T}] \right\rangle_T \;\pm\; \ell^2 \gamma nd
\end{align*}
The last step is possible because all the functions $\{\mathbf{1}_{\mathcal{S}_{\alpha, t}}\}_{\alpha,t}$ and $\{\mathbf{1}_{\mathcal{T}_{\alpha, t}}\}_{\alpha,t}$ are measurable on $\mathcal{B}_{L|S}$ and $\mathcal{B}_{R|T}$ respectively.
Furthermore, due to the nature of the constraints, the value above does not depend on the value of assignment function on $R$.
So, $\text{val}_{S,T}(\Ins_{\widetilde{y}}, x) = \text{val}_{S,T}(\Ins_{\widetilde{y}}, x[R\to1])$, where $x[R\to1]$ maps the variables in $u\in L$ to $x(u)$, but $v\in R$ to a $1$. 
As $\left\langle \mathbf{1}_{\mathcal{T}_{\alpha, t}}, \E[\mathbf{1}_{x[R\to1],{\alpha_R}} \,|\, \mathcal{B}_{R|T}] \right\rangle$ is a constant independent of $x$, it can be incorporated under $k_{\alpha,t}$
$$
    \text{val}_{S,T}(\Ins_{\widetilde{y}}, x) = \sum_{\alpha, t} k'_{\alpha,t} \cdot\left\langle \mathbf{1}_{\mathcal{S}_{\alpha, t}} , \E[\mathbf{1}_{x,{\alpha_L}} \,|\, \mathcal{B}_{L|S}] \right\rangle_S \;\pm\; \ell^2 \gamma nd
$$
We can see that
$$\E[\mathbf{1}_{x,{\alpha_L}} \,|\, \mathcal{B}_{L|S}](u) = \dfrac{\# w \in \calB_{L|S}(u) \text{ where } x(w) = \alpha_L}{\left|\calB_{L|S}(u)\right|}$$
which implies that $\text{val}_{S,T}(\Ins_{\widetilde{y}}, x)$ is dependent only on the distribution induced by the assignment function $x$ on the factor $\mathcal{B}_{L|S}$.

\subsection{Concentration over Factors}
In this section, we will see how codewords within the list-decoding radius impose ``concentration'' over the factors.
Due to the fact that these codewords highly satisfy the CSP, we will prove that they must essentially be locked to a single choice within each part of the factor decomposition.
This concentrated structure imposed by such codewords will allow us to efficiently find them.
More formally, for any $z$ in the ambient space $[\qzero]^E$, define the subsets $S_z\subseteq L$ and $T_z \subseteq R$ as
\begin{align*}
    &S_z^* = \{u \;|\; u \in L,\Delta(z_u, \widetilde{y}_u)\leq \delzero d\} \\
    &S_z = \{u \;|\; u \in S_z^*, |N(u) \setminus T_z|\leq(\delzero-\varepsilon/2)d\} \quad\quad T_z = \{v \;|\; v \in R, z_v = \widetilde{y}_v\}
\end{align*}
And the assignment function $x_z : V \to [\ell]$ as
$$
    x_z(w) = \begin{cases}
    i &\quad \text{if } w \in S_z \text{ and } z_w = c^i_w\\
    \text{arbitrary} &\quad \text{otherwise}
    \end{cases}
$$
\begin{claim}    \label{clm:full-sat-ael}
    $\textnormal{val}_{S_z, T_z}(\Ins_{\widetilde{y}}, x_z) = |E(S_z, T_z)|$
\end{claim}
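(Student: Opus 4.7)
The plan is to unpack the definitions and verify that every edge in $E(S_z, T_z)$ gives rise to a constraint satisfied by $x_z$. The key observation I would emphasize up front is that the constraint $\psi_{(u,v)} = \{(i,j) : c^i_{u,v} = \widetilde{y}_{u,v}\}$ depends only on the index $i$ assigned to the left endpoint $u$ (and not on $j$); so satisfying $\psi_{(u,v)}$ reduces to showing that $c^{x_z(u)}_{u,v} = \widetilde{y}_{u,v}$.

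Fix an edge $(u,v) \in E(S_z, T_z)$. First I would use $u \in S_z \subseteq S_z^*$: the defining property $\Delta(z_u, \widetilde{y}_u) \leq \delzero d$ places $z_u$ in the local list $\{c^1_u, \ldots, c^\ell_u\}$ returned by list decoding at $u$. Since the candidates $c^i_u$ are pairwise distinct (by construction of the CSP in \cref{subsec:ael-csp-setup}), there is a unique index $i$ with $z_u = c^i_u$, and by definition of $x_z$ on $S_z$ this index equals $x_z(u)$. Restricting the coordinatewise equality $z_u = c^{x_z(u)}_u$ to the entry indexed by $v$ yields $z_{u,v} = c^{x_z(u)}_{u,v}$.

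Second, I would use $v \in T_z$, which by definition gives $z_v = \widetilde{y}_v$; in particular the coordinate indexed by $u$ satisfies $z_{v,u} = \widetilde{y}_{v,u}$. Using the symmetry $z_{u,v} = z_{v,u}$ and $\widetilde{y}_{u,v} = \widetilde{y}_{v,u}$ of edge labels, this rewrites as $z_{u,v} = \widetilde{y}_{u,v}$. Chaining with the previous equality produces $c^{x_z(u)}_{u,v} = \widetilde{y}_{u,v}$, so $\psi_{(u,v)}$ is satisfied; summing over all edges in $E(S_z, T_z)$ yields the claim.

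There is no real obstacle here: the argument is a direct unpacking of the definitions of $S_z$, $T_z$, and $x_z$, combined with the observation that the CSP's constraints ignore the right-side assignment. Notably, the extra condition $|N(u) \setminus T_z| \leq (\delzero - \varepsilon/2) d$ used in defining $S_z$ plays no role at this step; I expect it to enter only later, when lower-bounding $|E(S_z, T_z)|$ and establishing the concentration of $x_z$ on the weak regularity factor via \cref{clm:product-like-expansion}.
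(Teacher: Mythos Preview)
Your proof is correct and follows essentially the same approach as the paper: show that for $u\in S_z$ the assignment $x_z(u)$ picks out $z_u$ from the local list so that $c^{x_z(u)}_{u,v}=z_{u,v}$, then use $v\in T_z$ to get $z_{u,v}=\widetilde{y}_{u,v}$. Your additional remarks (that the constraint ignores the right-side assignment and that the $|N(u)\setminus T_z|$ condition in $S_z$ is unused here) are accurate and helpful, but the core argument is the same.
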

\begin{proof}
    The constraint associated with $(u,v)\in E$ is satisfied by $x_z$ if $c^{x_z(u)}_{u,v} = \widetilde{y}_{u,v}$. If $u \in S_z$, then $c^{x_z(u)}_{u,v} = z_{u,v}$ because $z_u$ is part of the locally decoded list for all $u \in S_z^*$. Also, by definition, $z_{u,v} = \widetilde{y}_{u,v}$ for $v\in T_z$. So, for any $(u,v) \in E(S_z, T_z), \;c^{x_z(u)}_{u,v} = z_{u,v}= \widetilde{y}_{u,v} $.
\end{proof}

\begin{lemma}
    \label{lem:ael-concentration}
    If $z \in C^{\textnormal{AEL}}$ and $\Delta(z^R, \widetilde{y}^R) \leq (\delzero - \varepsilon)nd$, then $x_z$ is $\eta$-concentrated on $\mathcal{B}_{L|S_z}$ for $\eta = (5\ell^2\gamma/\varepsilon)\cdot n/|S_z|$.
\end{lemma}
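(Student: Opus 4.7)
The plan is to exhibit a $\mathcal{B}_{L|S_z}$-measurable assignment $x'_z$ that the weak regularity decomposition cannot distinguish from $x_z$, and then leverage the fact that $x_z$ already saturates all constraints in $E(S_z,T_z)$ (\cref{clm:full-sat-ael}) to force $x'_z$ to agree with $x_z$ on most of $S_z$.

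First, I would sample an atom-constant $x'_z$ on $S_z$ by drawing, independently for each atom $\mathcal{P} \in \mathcal{B}_{L|S_z}$, a value $\alpha_\mathcal{P}\in[\ell]$ according to the empirical distribution of $x_z|_\mathcal{P}$, and setting $x'_z(u)=\alpha_{\mathcal{B}_{L|S_z}(u)}$; on $L\setminus S_z$ and on $R$ the values are irrelevant, since the computations in \cref{subsec:ael-weak-reg} already show that $\text{val}_{S_z,T_z}(\Ins_{\widetilde{y}},\cdot)$ depends only on the assignment on $L$ through the conditional expectations on $\mathcal{B}_{L|S_z}$. By construction, $\Ex{\alpha_\mathcal{P}}{\mathbf{1}_{x'_z,\alpha_L}(u)} = \Ex{}{\mathbf{1}_{x_z,\alpha_L}\,|\,\mathcal{B}_{L|S_z}}(u)$ for every $u\in\mathcal{P}$, and linearity together with the weak regularity expression from \cref{subsec:ael-weak-reg} yields $\Ex{x'_z}{\text{val}_{S_z,T_z}(\Ins_{\widetilde{y}},x'_z)} \geq \text{val}_{S_z,T_z}(\Ins_{\widetilde{y}},x_z) - 2\ell^2\gamma nd$. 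Fix any realization achieving at least this expected value; combining with \cref{clm:full-sat-ael} shows that $x'_z$ leaves at most $2\ell^2\gamma nd$ constraints of $E(S_z,T_z)$ unsatisfied.

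The key step is to extract a per-vertex lower bound on the damage caused when $x'_z(u)\neq x_z(u)$ for some $u\in S_z$. For such a $u$, the constraint at $(u,v)\in E(S_z,T_z)$ is satisfied by $x'_z$ iff $c^{x'_z(u)}_{u,v}=\widetilde{y}_{u,v}=z_{u,v}$ (using $v\in T_z$). Since $c^{x'_z(u)}_u$ and $c^{x_z(u)}_u=z_u$ are distinct $C_{in}$-codewords, they agree on at most $(1-\delta_{in}(C_{in}))d\leq(1-\delzero)d$ coordinates, and because $u\in S_z$ guarantees $|N(u)\cap T_z|\geq(1-\delzero+\varepsilon/2)d$, at least $(\varepsilon/2)d$ constraints at $u$ inside $E(S_z,T_z)$ fail under $x'_z$. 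Summing over the discrepancy set and comparing with the $2\ell^2\gamma nd$ bound gives $|\{u\in S_z:x'_z(u)\neq x_z(u)\}|\leq 4\ell^2\gamma n/\varepsilon$.

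Finally, take $W=\{u\in S_z:x_z(u)=x'_z(u)\}$: on $W$, $x_z$ coincides with the atom-constant $x'_z$, hence is $\mathcal{B}_{L|W}$-measurable, and $|S_z\setminus W|/|S_z|\leq 4\ell^2\gamma n/(\varepsilon|S_z|)\leq \eta$, certifying the required $\eta$-concentration. The main obstacle I anticipate is the per-vertex unsatisfaction bound in the third paragraph: it needs the distance of $C_{in}$ to be at least the list-decoding radius $\delzero$ in order to turn a single disagreement in the assignment into $\bigomega(\varepsilon d)$ broken constraints, which is the quantitative step that ultimately couples $\gamma$ and $\varepsilon$ in $\eta$.
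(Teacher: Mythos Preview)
Your proof is correct and reaches the same bound, but the mechanism differs from the paper's. The paper argues by contradiction: assuming $x_z$ is not $\eta$-concentrated, it builds a deterministic $x'_z$ by an explicit rank-shift permutation inside each atom (send $\mathcal{P}_i\to\text{rank}(i-1)$ and redistribute $\mathcal{P}_1$ to restore the histogram), so that $x'_z$ has \emph{identical} per-atom distributions to $x_z$; weak regularity then forces $|\text{val}_{S_z,T_z}(x_z)-\text{val}_{S_z,T_z}(x'_z)|\le 2\ell^2\gamma nd$, while the at-least-$\eta|S_z|$ non-majority vertices each break $\ge(\varepsilon/2)d$ constraints in $E(S_z,T_z)$, yielding the contradiction. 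You instead go direct: sample an atom-constant $x'_z$ from the empirical distribution, use linearity of the weak-regularity proxy to match expected value, pick a good realization, and then read off the disagreement bound. Both routes pivot on exactly the same per-vertex damage estimate $(\varepsilon/2)d$ coming from $\dist(C_{in})\ge\delzero$ and the definition of $S_z$. Your approach is arguably cleaner (no ad-hoc permutation), while the paper's is fully constructive and makes the ``same-distribution $\Rightarrow$ same proxy value'' step literal rather than in expectation. The concern you flag about needing $\dist(C_{in})\ge\delzero$ is real and is also used (implicitly) in the paper's proof; it is supplied by the ambient AEL setup where $\delzero$ is simultaneously the inner distance and the inner list-decoding radius.
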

\begin{proof}
    We will prove this by contradiction. Let $z \in C^{\text{AEL}}$ such that $\Delta(z^R, \widetilde{y}^R) \leq \delzero - \varepsilon$ but $x_z$ is not $\eta$-concentrated. If this is the case, we will construct a different assignment function $x'_z$ as follows atom-wise. 
    First, order the output values $[\ell]$ of the assignment function by the number of times they are mapped to in an atom $\mathcal{P}$ breaking ties arbitrarily. This gives us a ranking function $rank : [\ell] \to [\ell]$ where $rank(1)$ is a most frequently mapped value and $rank(\ell)$ is a least frequently one. For each $i \in [\ell]$ define $\mathcal{P}_{i} = \{w \;|\; w\in\mathcal{P}, x_z(w) = rank(i)\}$.
    Now, set $x'_z(w) = rank(i-1)$ if $ w \in \mathcal{P}_i$ and $i \in [2\ldots\ell]$. The elements in $\mathcal{P}_1$ are assigned values so that the distribution inside $\mathcal{P}$ does not change as a whole. 
    Specifically, an arbitrary set of $|\mathcal{P}_{i-1}|-|\mathcal{P}_{i}|$ elements get mapped to $rank(i-1)$ for each $i \in [2\ldots\ell]$ and $|\mathcal{P}_\ell|$ elements get mapped to $rank(\ell)$. Notice that this makes $x'_z$ and $x_z$ differ on at least $|\mathcal{P}_2 \sqcup \cdots \sqcup \mathcal{P}_\ell|$ elements for each atom $\mathcal{P}$. We make the following two observations now

    \begin{enumerate}
        \item As $x_z$ is not $\eta$-concentrated, it must not map $\geq \eta |S_z|$ (\textit{bad}) elements to their corresponding atom's majority value. This implies $x_z$ and $x'_z$ must differ on these elements. For such \textit{bad} $u \in S_z$, $\Delta(c^{x_z(u)}_u, c^{x'_z(u)}_u) \geq \delzero d$ but only $\leq(\delzero-\varepsilon/2)d$ of its neighbors are not in $T_z$ by the definition. So, $x_z$ and $x_z'$ differ on $\geq (\varepsilon/2)d\cdot \eta |S_z|$ constraints. As $x_z$ satisfies all constraints in $E(S_z, T_z)$ (\cref{clm:full-sat-ael}), any such difference can only lead to constraints becoming unsatisfied, implying
        $$
        \text{val}_{S_z,T_z}(\Ins_{\widetilde{y}}, x'_z) \;\leq\; \text{val}_{S_z,T_z}(\Ins_{\widetilde{y}}, x_z) - (\eta\varepsilon/2) \cdot |S_z|d
        $$
        \item $x'_z$ and $x_z$ have the same distribution over $[\ell]$ in any atom in $\mathcal{B}_L$, so the weak regularity decomposition tells us that
        $$
            \left|\text{val}_{S_z,T_z}(\Ins_{\widetilde{y}}, x_z) - \text{val}_{S_z,T_z}(\Ins_{\widetilde{y}}, x'_z)\right| \;\leq\; 2 \ell^2\gamma\cdot nd
        $$
    \end{enumerate}
    The above observations lead to a contradiction as $(\eta\varepsilon/2) \cdot |S_z|d > 2 \ell^2\gamma\cdot nd$ when $\eta = (5\ell^2\gamma/\varepsilon)\cdot n/|S_z|$.
\end{proof}

\subsection{Covering relevant codewords}
We will now prove that concentration over the factors enables the enumeration described in \cref{alg:AEL} to cover all the relevant codewords (the ones within the list decoding radius). 

\label{subsec:ael-finish}
\begin{theorem}[\cref{thm:AEL-listdec} restated]
    Let $C_{in}$ be a $(\delzero, \ell)$ list decodable code and $C_{out}$ be uniquely decodable from $\delone$ fraction of errors, then the \text{AEL} code $C^{\textnormal{AEL}}=C(C_{in}, C_{out}, G)$ is $(\delzero-\varepsilon, \ell^{2^{\mathcal{O}(\ell^6/\varepsilon^2)}})$ list decodable for any constant $0<\varepsilon<\delzero$, if $\tfrac{\lambda}{d} < \tfrac{\varepsilon^2\delone^2}{2^{31}\ell^4}$. Furthermore, if $C_{out}$ is uniquely decodable from $\delone$ fraction of errors in time $t$, then this can be done in $\widetilde{\cal{O}}_{\varepsilon,\ell}(n+t)$ time.
\end{theorem}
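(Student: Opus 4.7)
The plan is to show that \cref{alg:AEL} recovers every codeword $z \in C^{\textnormal{AEL}}$ with $\Delta(z^R, \widetilde{y}^R) \leq (\delzero - \varepsilon) n$, and then to bound the list size and running time. The key step is to exhibit, for each such $z$, a $\mathcal{B}_L$-measurable assignment $x'_z$ visited by the enumeration such that feeding $\textnormal{Enc}^{-1}(\widetilde{z}^L)$ (where $\widetilde{z}$ is built from $x'_z$) into the $C_{out}$ unique decoder returns $\textnormal{Enc}^{-1}(z^L) \in C_{out}$; re-encoding then yields $z^L$ and the final distance check passes.

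First I would lower bound $|S_z|$ via expansion. From $|T_z| \geq (1 - \delzero + \varepsilon) n$, \cref{clm:product-like-expansion} with slack $\varepsilon/2$ yields that all but a $4(\lambda/d)^2/\varepsilon^2$ fraction of $u \in L$ satisfy $|N(u) \setminus T_z| \leq (\delzero - \varepsilon/2) d$, and any such $u$ lies in $S_z^*$ (since $\Delta(z_u, \widetilde{y}_u) \leq |N(u) \setminus T_z|$) and hence in $S_z$; so $|S_z| \geq (1 - 4(\lambda/d)^2/\varepsilon^2) n$, very close to $n$ under the theorem's hypothesis on $\lambda/d$. Now \cref{lem:ael-concentration} applies with $\gamma = \varepsilon\delone/(2^4\ell^2)$ and gives concentration parameter $\eta \leq (5\ell^2\gamma/\varepsilon) \cdot n/|S_z|$, which simplifies to essentially $5\delone/16$. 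Thus there is $W \subseteq S_z$ with $|W| \geq (1 - \eta)|S_z|$ on which $x_z$ is constant across each $\mathcal{B}_L$-atom. Define $x'_z$ as the $\mathcal{B}_L$-measurable function equal to $x_z$ on $W$ (with ties broken arbitrarily on atoms disjoint from $W$); this $x'_z$ is visited by the enumeration loop.

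On iteration $x = x'_z$, the constructed $\widetilde{z}^L$ satisfies $\widetilde{z}_u = z_u$ for every $u \in W$, because $u \in S_z^*$ guarantees $z_u$ lies in the local list at $u$ with index $x_z(u)$, so $c^{x_z(u)}_u = z_u$. Hence $\textnormal{Enc}^{-1}(\widetilde{z}^L)$ and $\textnormal{Enc}^{-1}(z^L) \in C_{out}$ disagree on at most $n - |W| \leq 4(\lambda/d)^2/\varepsilon^2 \cdot n + \eta n < \delone n$ positions, so the $C_{out}$ unique decoder returns $\textnormal{Enc}^{-1}(z^L)$; re-encoding recovers $z^L$, the distance check passes by hypothesis, and $z$ is added to $\mathcal{L}$.

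Finally, the list has size at most the number of $\mathcal{B}_L$-measurable assignments, namely $\ell^{|\mathcal{B}_L|}$; since \cref{thm:weak-reg} gives $|\mathcal{B}_L| \leq 2^{O(\ell^2/\gamma^2)} = 2^{O(\ell^6/\varepsilon^2)}$, this matches the claimed bound. For runtime, the weak regularity call costs $\widetilde{\mathcal{O}}_{\varepsilon,\ell}(n)$, the initial local list decodings at left vertices cost $O(n)$ (since $C_{in}$ has constant block length), and each of the $\ell^{|\mathcal{B}_L|}$ enumeration steps does $\widetilde{\mathcal{O}}(n)$ bookkeeping plus one call to the $C_{out}$ unique decoder of cost $t$; the total is $\widetilde{\mathcal{O}}_{\varepsilon,\ell}(n + t)$. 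The main obstacle in this plan is parameter balancing: $\gamma$ must be small enough that $\eta + 4(\lambda/d)^2/\varepsilon^2$ stays safely below $\delone$ so that unique decoding of $C_{out}$ succeeds, yet the expansion requirement $\lambda/d < \gamma^2/2^{23}$ from \cref{thm:weak-reg} must remain compatible with the hypothesized $\lambda/d < \varepsilon^2\delone^2/(2^{31}\ell^4)$. The choice $\gamma = \varepsilon\delone/(2^4\ell^2)$ in \cref{alg:AEL} is exactly what threads this needle.
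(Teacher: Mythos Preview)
Your proposal is correct and follows essentially the same approach as the paper's proof: both invoke \cref{lem:ael-concentration} to obtain $\eta$-concentration of $x_z$ on $\mathcal{B}_{L|S_z}$, bound $|\overline{S_z}|$ via \cref{clm:product-like-expansion}, and verify that the resulting $\Delta(z^L,\widetilde{z}^L)\le \eta|S_z|+|\overline{S_z}|$ falls below $\delone n$ so that the $C_{out}$ unique decoder succeeds, with list size and runtime bounded by the number of $\mathcal{B}_L$-measurable assignments. Your explicit observation that $\Delta(z_u,\widetilde{y}_u)\le |N(u)\setminus T_z|$ (hence $S_z^*$ contains every $u$ with small $|N(u)\setminus T_z|$) is exactly the content the paper uses implicitly when asserting that vertices in $\overline{S_z}$ have few neighbors in $T_z$.
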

\begin{proof}
We will first show that every $z \in C^{\text{AEL}}$ within the list decoding radius, i.e., $\Delta(z^R, \widetilde{y}^R) \leq (\delzero -\varepsilon)n$, belongs to the list $\mathcal{L}$ output by \cref{alg:AEL}.
Specifically, we just need to show that for each $z$ that we want to recover, we have $\widetilde{z}$ constructed from a measurable function on $\mathcal{B}_L$ such that $\Delta(z^L, \widetilde{z}^L) \leq \delone n$. 
This is because, if $\Delta(z^L, \widetilde{z}^L)\leq \delone n$, then we can use $C_{out}$'s unique decoder (after encoding $\widetilde{z}^L$ back to $[\qone]^n$ space) to recover $z$ from $\widetilde{z}$ in time $t$.
Now, from \cref{lem:ael-concentration}, we know that for such $z$, $x_z$ is $\eta$-concentrated over $\calB_{L|S_z}$.
So, the algorithm will enumerate over a $\widetilde{z}^L$ such that $\Delta(z^L, \widetilde{z}^L)\leq \eta|S_z|+\left|\overline{S_z}\right|$.
Vertices in $\overline{S_z}$ have $\leq (1-(\delzero-\varepsilon/2))d$ neighbors in $T_z$, but as $|T_z|\geq(1-(\delzero-\varepsilon))n$, we know that $\overline{S_z}$ must be small from \cref{clm:product-like-expansion}. 
Specifically, putting in the difference parameter $(1-(\delzero-\varepsilon)) - (1-(\delzero-\varepsilon/2)) = \varepsilon/2$, we get $\left|\overline{S_z}\right|\leq (\lambda/d)^2\cdot(4/\varepsilon^2) n$.
We can now substitute the values $\eta = (5\ell^2\gamma/\varepsilon)\cdot n/|S_z|$, $\gamma = \tfrac{\varepsilon\delone}{2^4\ell^2}$ and $\tfrac{\lambda}{d} < \tfrac{\varepsilon^2\delone^2}{2^{31}\ell^4}$ back in
$$\Delta(z^L, \widetilde{z}^L)\leq \eta|S_z|+\left|\overline{S_z}\right| \leq (5\delone/16 + \varepsilon^2\delone^4/2^{60})\cdot n\leq(\delone/3)n$$
Constructing the weak regularity decomposition for each $g_\alpha$ in $\alpha\in[\ell]^2$ takes $\ell^2\cdot\widetilde{\mathcal{O}}(2^{2^{\widetilde{\mathcal{O}}(1/\gamma^2)}} nd)$ time (\cref{thm:weak-reg}) in total.
As the total number of cut functions is $\ell^2\cdot\mathcal{O}(1/\gamma^2)$ which is a constant, constructing the factor (with $2^{\mathcal{O}(\ell^2/\gamma^2)}$ atoms) takes constant time. Also, while enumerating, we go through a constant $\ell^{2^{\mathcal{O}(\ell^2/\gamma^2)}}$ number of functions and invoke the unique decoder each time.
This means that in total \cref{alg:AEL} takes
$$
\mathcal{\widetilde{O}}\left(2^{2^{\mathcal{O}(1/\gamma^2)}}\ell^2d\cdot n + \ell^{2^{\mathcal{O}(\ell^2/\gamma^2)}}\cdot t\right)
$$
time which is $\mathcal{\widetilde{O}}_{\varepsilon,\ell}(n+t)$ if we suppress all the constants.
Substituting the value of $\gamma$ back in, the size of the list output by the algorithm is at most $\ell^{2^{\mathcal{O}(\ell^6/\varepsilon^2)}}$ as we go through, which directly
implies that $C^{\text{AEL}}$ is $(\delzero-\varepsilon, \ell^{2^{\mathcal{O}(\ell^6/\varepsilon^2)}})$ list decodable.
\end{proof}

\subsection{Instantiating the Code}
\label{subsec:ael-instantiate}
In this section, we will describe an appropriate (standard) instantiation of 
a family of AEL codes, which gives us the following parameters.
\begin{theorem}
    \label{thm:ael-main}
    Let $\qzero \ge 2$ be an integer and $\delzero \in (0, 1-1/\qzero)$.
    For any $\epsilon > 0$, there exists an explicit family of AEL codes $\{C_n \subseteq \F_q^n \}_{n}$ such that their
    \begin{enumerate}
        \item alphabet size is $q = \qzero^{\cal{O}(\varepsilon^{-20})}$,
        \item rate is $\geq 1-H_{\qzero}(\delzero)-\varepsilon$,
        \item design distance is $\delzero - \calO(\varepsilon^{10})$, and
        \item it is $(\delzero-\varepsilon, 2^{2^{\calO(\varepsilon^{-9})}})$ list decodable with a $2^{2^{\calO(\varepsilon^{-9})}} \cdot \widetilde{\mathcal{O}}(n)$ time (probabilistic) decoding algorithm 
    \end{enumerate}
\end{theorem}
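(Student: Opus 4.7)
The proof is a parameter-tuning exercise: I instantiate the AEL template $C(C_{in}, C_{out}, G)$ with explicit ingredients and then invoke \cref{thm:AEL-listdec}. The plan is to pick $\delone \in (0,1)$ as an absolute constant (chosen small enough below), set the degree $d = \Theta(\varepsilon^{-20})$, and then pick the three ingredients as follows.

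For the \emph{inner code} $C_{in} \subseteq [\qzero]^d$, I need a $(\delzero, \ell)$ list decodable code of rate $\rzero \geq 1 - H_{\qzero}(\delzero) - \varepsilon/2$ with $\ell = O(1/\varepsilon)$. A standard random-coding argument together with the Johnson-type list-size bound shows that such codes exist, and since $d$ depends only on $\varepsilon$, brute-force search over all $\qzero^{\rzero d^2}$ candidates yields an explicit $C_{in}$ in time independent of $n$. For the \emph{expander} $G$, I take an explicit $d$-regular bipartite Ramanujan-type expander (e.g.\ Lubotzky--Phillips--Sarnak or Morgenstern), which gives $\lambda/d \leq 2/\sqrt{d-1}$. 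With $d = \Theta(\varepsilon^{-20})$, this yields $\lambda/d = O(\varepsilon^{10})$, which for the chosen constant $\delone$ and for $\ell = O(1/\varepsilon)$ satisfies the hypothesis $\lambda/d < \varepsilon^2 \delone^2/(2^{31} \ell^4)$ of \cref{thm:AEL-listdec}. For the \emph{outer code} $C_{out}$ over alphabet $[\qone] = [\qzero^{\rzero d}]$, I need an explicit family of block length $n$ with rate $\rone \geq 1 - \varepsilon/2$ and distance at least $\delone$, uniquely decodable in near-linear time. This can be supplied by a standard linear-time expander-code construction (Spielman or Zémor style) over the alphabet $\qone$, possibly obtained by concatenating a high-rate Reed--Solomon code with a suitable small inner code, giving $t = \widetilde{O}(n)$.

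Once all three ingredients are fixed, the four asserted parameters follow directly. The alphabet is $q = \qzero^d = \qzero^{O(\varepsilon^{-20})}$. The rate is $r = \rzero \rone \geq (1 - H_{\qzero}(\delzero) - \varepsilon/2)(1 - \varepsilon/2) \geq 1 - H_{\qzero}(\delzero) - \varepsilon$. The design distance is $\delzero - (\lambda/d)/\delone = \delzero - O(\varepsilon^{10})$. Finally, \cref{thm:AEL-listdec} yields $(\delzero - \varepsilon, L)$ list decodability in time $\widetilde{O}_{\varepsilon,\ell}(n + t)$, where $L = \ell^{2^{O(\ell^6/\varepsilon^2)}}$; plugging in $\ell = O(1/\varepsilon)$ gives $L \leq 2^{2^{O(\varepsilon^{-9})}}$ and total running time $2^{2^{O(\varepsilon^{-9})}} \cdot \widetilde{O}(n)$, with randomness coming only from the weak regularity step (\cref{thm:weak-reg}).

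The only potentially delicate piece is the choice of $C_{out}$: one must supply an explicit family over the large-but-constant alphabet $\qzero^{\rzero d}$ of rate $1 - \varepsilon/2$, constant distance $\delone$, and near-linear time unique decoding. I expect this to be the main, though fairly routine, obstacle, since standard expander/Tanner or concatenation based constructions are phrased over binary or small alphabets; I plan to handle this by concatenating a high-rate outer Reed--Solomon code (whose alphabet can be taken large enough to cover $n$ symbols) with a short constant-rate inner code and a Sipser--Spielman style unique decoder, re-packaging the symbols so that the effective alphabet matches $\qone = \qzero^{\rzero d}$ without degrading the constants. All remaining checks reduce to verifying that the chain of inequalities $\lambda/d = O(\varepsilon^{10}) \leq \varepsilon^2 \delone^2 / (2^{31} \ell^4)$ holds for the chosen constants, which is automatic for $d = \Theta(\varepsilon^{-20})$ and $\varepsilon$ small.
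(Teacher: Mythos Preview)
Your overall plan matches the paper's: pick a constant-size inner code by brute force, an explicit near-Ramanujan expander of degree $d=\Theta(\varepsilon^{-20})$, a high-rate linear-time decodable outer code, and then invoke \cref{thm:AEL-listdec}. But there is a genuine gap in your parameter accounting for $C_{out}$.

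You fix $\delone$ as an \emph{absolute} constant while simultaneously demanding $\rone \ge 1-\varepsilon/2$. By the Singleton bound this is impossible: any code of rate $1-\varepsilon/2$ has relative distance at most $\varepsilon/2 + o(1)$, so both $\delone$ and the unique decoding radius of $C_{out}$ must shrink with $\varepsilon$. This invalidates two of your computations. First, the hypothesis of \cref{thm:AEL-listdec} reads $\lambda/d < \varepsilon^2\delone^2/(2^{31}\ell^4)$ where $\delone$ is the \emph{unique decoding radius} of $C_{out}$; once $\delone$ depends on $\varepsilon$, the right-hand side is no longer $\Theta(\varepsilon^6)$ as your chain of inequalities implicitly assumes. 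Second, the design distance $\delzero - (\lambda/d)/\delone$ only equals $\delzero - O(\varepsilon^{10})$ if $\delone$ is bounded below by a constant, which it is not.

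The paper resolves this by taking $C_{out}$ from an explicit family (Guruswami--Indyk) with rate $\ge 1-\varepsilon$ and linear-time unique decoding from a $\delone^{dec}=\Omega(\varepsilon^2)$ fraction of errors. Plugging $\delone^{dec}=\Omega(\varepsilon^2)$ and $\ell=O(1/\varepsilon)$ into the hypothesis of \cref{thm:AEL-listdec} gives exactly $\lambda/d = O(\varepsilon^{10})$, which is where the choice $d=\Theta(\varepsilon^{-20})$ and hence the exponents $\varepsilon^{-20}$, $\varepsilon^{10}$, $\varepsilon^{-9}$ in the statement come from. Your proposal is easily repaired along these lines---just let $\delone$ scale with $\varepsilon$ and redo the arithmetic---but as written the ``absolute constant $\delone$'' step does not go through.
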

\begin{proof}
    For any $d$ and $\qzero$, random codes $C_{in} \subseteq [\qzero]^d$ (with high probability) are list decodable up to $\delzero$ fraction of errors if the rate $\rzero = 1-H_{\qzero}(\delzero)-\epsilon_0$ with the list size $\ell = \mathcal{O}(1/\epsilon_0)$.
Also with high probability, random codes lie near the G.V. bound, so if the rate $\rzero = 1-H_{\qzero}(\delzero)-\epsilon_0$ then distance of the code is $\geq \delzero$.
Hence, using a union bound argument, we know that there exists a code $C_{in}$ with distance $\geq \delzero$ that is $(\delzero, \cal{O}(1/\varepsilon_0))$ list decodable with rate $\rzero \geq 1-H_{\qzero}(\delzero)-\varepsilon_0$.
As we take $d$ to be a fixed constant in our construction, we can find such a code by brute force.
The outer code $C_{out}$ is usually taken to be high-rate, and here we choose from the construction in \cite{1512415} which is a family of codes $\{C_{out,n}\}_n$ with parameters that can be set as: alphabet size $\qone = \qzero^{\rzero d}$, distance $\delone$ and rate $\rone \geq 1-\epsilon_1$.
This family of codes is uniquely decodable from $\delone^{dec} = \Omega(\epsilon_1^2)$ fraction of errors in linear time.

Using these parameters, the resulting code is $C^{\text{AEL}} = C(C_{in}, C_{out}, G)$ with (design) distance $\delta =\delzero - \tfrac{\lambda/d}{\delone}$ and rate 
$$r = \rzero\rone \geq (1-H_{\qzero}(\delzero)-\varepsilon_0)(1-\varepsilon_1)\geq 1-H_{\qzero}(\delzero)-\varepsilon_0 -\varepsilon_1$$
We know that $C^{\text{AEL}}$ is $(\delzero - \varepsilon, \ell^{2^{\cal{O}(\ell^6/\varepsilon^2)}})$ list decodable if $\tfrac{\lambda}{d}<\tfrac{\varepsilon^2(\delone^{dec})^2}{2^{31}\ell^4}$ from \cref{thm:AEL-listdec}.
To simplify the parameters, set $\varepsilon_0 = \varepsilon_1 = \varepsilon$ and replace $\varepsilon$ by $\varepsilon/2$.
This gives us that for any $\delzero>\varepsilon>0$, $C^{\text{AEL}}$ is $(\delzero-\varepsilon, 2^{2^{\calO(\varepsilon^{-9})}})$ list decodable with $r \geq 1-H_{\qzero}(\delzero)-\varepsilon$ if $\lambda/d < \cal{O}(\varepsilon^{10})$.
For these parameters, the design distance is $\delzero-\calO(\varepsilon^{10})$ for comparison.

We complete the construction by choosing graph $G$ from a family of bipartite expanders $\{G_n\}_n$ constructed by taking double cover of the graphs from \cite{Alon21} with a constant degree $d = \Theta(1/\varepsilon^{20})$ such that $\tfrac{\lambda}{d} < \cal{O}(1/\sqrt{d}) = \mathcal{O}(\varepsilon^{10})$.
With these parameters, the family of AEL codes $\{C(C_{in}, C_{out,n}, G_n)\}_n$ can be shown to be list decodable up to $\delzero-\varepsilon$ fraction of errors in $2^{2^{\calO(\varepsilon^{-9})}}\mathcal{\widetilde{O}}(n)$ or equivalently $\mathcal{\widetilde{O}}_\varepsilon(n)$ time by \cref{thm:AEL-listdec}.
\end{proof}

The inner code $C_{in}$ in the construction above can also be replaced by an explicit folded Reed-Solomon codes~\cite{GuruswamiR06}. For such codes with rate $\rzero$, the distance $\delzero = 1-\rzero-\varepsilon_0$, $\qzero = d^{\cal{O}_{\varepsilon_0}(1)}$ and the list size $\ell=O(1/\varepsilon_0)$ \cite{CZ25}. In~\cite{JMST25}, the code $C^{\text{AEL}}$ is proven to have the rate $r = \rzero(1-\varepsilon_1) \geq \rzero -\varepsilon_1$, design distance $\delta = 1-\rzero-\varepsilon_0 - \tfrac{\lambda/d}{\delone}$ such that it is $(\delta-\varepsilon, O(1/\epsilon_0))$ list decodable. Replacing $\delta$ with $1-R$ and using $\varepsilon_0=\varepsilon_1=\varepsilon$ with $\varepsilon$ replaced by $\varepsilon/3$ we retrieve the following corollary.

\begin{corollary}[\cref{cor:ael-opt-list} restated]
    Let $R \in (0,1)$.
    For any $\epsilon > 0$, there exists an explicit family of AEL codes $\{C_n \subseteq \F_q^n \}_{n}$ such that their
    \begin{enumerate}
        \item alphabet size is $q=2^{2^{\varepsilon^-\Theta(1)}}$,
        \item rate is $\geq R - \varepsilon$ while distance $\geq 1-R$ (near MDS), and
        \item it is $(1-R-\varepsilon, O(1/\epsilon))$ list decodable with a $\widetilde{\mathcal{O}}_{\epsilon}(n)$ time (probabilistic) decoding algorithm.
    \end{enumerate}
\end{corollary}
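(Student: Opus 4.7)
The plan is to follow the proof of~\cref{thm:ael-main} with two substitutions designed to sharpen the list size from the double-exponential $2^{2^{\mathcal{O}(\varepsilon^{-9})}}$ down to the near-optimal $O(1/\varepsilon)$. First, I would replace the brute-forced random inner code with an explicit folded Reed-Solomon inner code~\cite{GuruswamiR06}, which by~\cite{CZ25} is list decodable with small list size $O(1/\varepsilon_0)$; second, I would invoke the improved list-size analysis of AEL from~\cite{JMST25} in place of the crude bound produced by~\cref{thm:AEL-listdec}.

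Concretely, I take $C_{in}$ to be a folded Reed-Solomon code of rate $\rzero$ and distance $\delzero = 1 - \rzero - \varepsilon_0$, over an alphabet of size $\qzero = d^{\mathcal{O}_{\varepsilon_0}(1)}$, which is $(\delzero, O(1/\varepsilon_0))$ list decodable. The outer code $C_{out}$ and the expander $G$ are kept as in~\cref{thm:ael-main}: a high-rate family from~\cite{1512415} with rate $\rone \geq 1 - \varepsilon_1$ and linear-time unique decoding from $\delone^{dec} = \Omega(\varepsilon_1^2)$ fraction of errors, paired with a bipartite expander of constant degree $d$ and $\lambda/d$ small enough to both satisfy the hypothesis of~\cref{thm:AEL-listdec} and to absorb the distance defect.

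With these ingredients, $C^{\textnormal{AEL}} = C(C_{in}, C_{out}, G)$ has rate $r = \rzero(1 - \varepsilon_1) \geq \rzero - \varepsilon_1$ and design distance $\delta = 1 - \rzero - \varepsilon_0 - (\lambda/d)/\delone^{dec}$, and by~\cite{JMST25} it is $(\delta - \varepsilon, O(1/\varepsilon_0))$ list decodable. To hit the corollary's parameters I would identify the design distance target with $1 - R$, i.e.\ choose $\rzero = R - \varepsilon_0 - (\lambda/d)/\delone^{dec}$, then set $\varepsilon_0 = \varepsilon_1 = \varepsilon$ and rescale $\varepsilon$ to $\varepsilon/3$, absorbing the $(\lambda/d)/\delone^{dec}$ contribution into the remaining slack by taking the expander degree $d$ large enough in $1/\varepsilon$. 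This yields rate $\geq R - \varepsilon$, distance $\geq 1 - R$, and list decoding radius $1 - R - \varepsilon$ with list size $O(1/\varepsilon)$.

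The only genuinely non-routine ingredient is the list-size upgrade itself, which I import as a black box from~\cite{JMST25}; the algorithmic running time $\widetilde{\mathcal{O}}_\varepsilon(n)$ still comes from~\cref{thm:AEL-listdec}, since the weak-regularity decomposition, the enumeration over factor-measurable assignments, and the inner/outer decoders all contribute only $\varepsilon$-dependent constants times $\widetilde{\mathcal{O}}(n)$. The doubly exponential alphabet $q = 2^{2^{\varepsilon^{-\Theta(1)}}}$ in the statement is forced by the combination of the folded Reed-Solomon alphabet $\qzero = d^{\mathcal{O}_\varepsilon(1)}$ raised to the power $\rzero d$ demanded by the outer code, together with the additional blow-up that the~\cite{JMST25} analysis requires in order to drive the list size all the way down to $O(1/\varepsilon)$.
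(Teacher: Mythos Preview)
Your proposal is correct and follows essentially the same approach as the paper: replace the inner code by a folded Reed--Solomon code (invoking~\cite{CZ25} for the $O(1/\varepsilon_0)$ local list size), import the $O(1/\varepsilon_0)$ global list-size bound from~\cite{JMST25}, keep the outer code and expander as in~\cref{thm:ael-main}, then set $\varepsilon_0=\varepsilon_1=\varepsilon$, identify $\delta$ with $1-R$, and rescale $\varepsilon\to\varepsilon/3$. Your write-up is in fact slightly more explicit than the paper's one-paragraph argument in justifying the running time (via~\cref{thm:AEL-listdec}) and the doubly-exponential alphabet size.
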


\section{List Decoding Tanner Codes}

Consider a Tanner code $C^\textnormal{TAN} = C(C_{0}, G) \subseteq \mathbb{F}_q^{nd}$ where $C_{0}$ is the base code and $G = (L, R, E)$ is a $(n, d, \lambda)$ bipartite expander.

In this section, we will show that there exists a family of Tanner codes that are list decodable with a radius close to the distance by a near-linear time algorithm (\cref{thm:tanner-main}).
This will be shown by first proving \cref{thm:tan-listdec} and then later describing an appropriate (standard) instantiation of codes in \cref{subsec:tan-instantiate}.
\begin{theorem}
    \label{thm:tan-listdec}
    If $C_{0}$ is $(\delta_{0}, \ell)$ list decodable, then the \text{Tanner} code $C^{\textnormal{TAN}}=C(C_{0}, G)$ is $(\delta_{0}(\delta_{0}-\varepsilon), \ell^{2^{\cal{O}(\ell^6/\varepsilon^6)}})$ list decodable for any constant $0<\varepsilon<\delta_{0}$ if $\tfrac{\lambda}{d} < \tfrac{\varepsilon^6}{2^{33}\ell^4}$.
    Moreover, this can be done in time $\widetilde{\cal{O}}_{\varepsilon, \ell}(n)$.
\end{theorem}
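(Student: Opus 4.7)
The plan is to mirror the pipeline developed for AEL codes in \cref{thm:AEL-listdec}: set up an agreement CSP from local list decoding, apply weak regularity to produce constant-sized factors, argue that every codeword in the list-decoding ball induces a concentrated assignment, and then enumerate constant-per-atom assignments followed by a linear-time Tanner unique decoder. The main novelty relative to the AEL case is that the edge constraints here are genuinely symmetric between $L$ and $R$, and there is no outer code, so the final cleanup must rely on a Tanner unique decoder (e.g., an expander-based flip decoder of Sipser--Spielman type).

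First I would locally list decode $\widetilde{y}_w$ against $C_0$ at every vertex $w \in L \cup R$ to obtain lists $c^{1}_w,\ldots,c^{\ell}_w$ (padded arbitrarily to $\ell$ elements so that domain values index distinct base codewords). The $2$-CSP $\Ins_{\widetilde{y}}$ has variables $\{x_w : w \in L \cup R\}$ with domain $[\ell]$, and one constraint per edge $(u,v) \in E$ asserting $c^{x_u}_{u,v} = c^{x_v}_{v,u}$. For each $\alpha = (\alpha_L, \alpha_R) \in [\ell]^2$, define $g_\alpha(u,v) = \mathbf{1}[(u,v) \in E \text{ and } \alpha \text{ satisfies the edge}]$ and invoke \cref{thm:weak-reg} with $\gamma = \Theta(\varepsilon^3/\ell^2)$ to obtain $h_\alpha = \sum_t k_{\alpha,t}\, \mathbf{1}_{\mathcal{S}_{\alpha,t}} \otimes \mathbf{1}_{\mathcal{T}_{\alpha,t}}$. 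The families $\{\mathbf{1}_{\mathcal{S}_{\alpha,t}}\}$ and $\{\mathbf{1}_{\mathcal{T}_{\alpha,t}}\}$ define constant-size factors $\mathcal{B}_L$ on $L$ and $\mathcal{B}_R$ on $R$, each of size $2^{O(\ell^2/\gamma^2)} = 2^{O(\ell^6/\varepsilon^6)}$. This choice of $\gamma$ meets the weak regularity precondition $\lambda/d < \gamma^2/2^{23}$ under the hypothesis $\lambda/d < \varepsilon^6/(2^{33}\ell^4)$.

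For any target $z \in C^{\textnormal{TAN}}$ with $\Delta(z, \widetilde{y}) \le \delta_0(\delta_0-\varepsilon)\,nd$, an averaging argument gives $|T_z^*| \ge (1-\delta_0+\varepsilon)n$, where $T_z^* = \{v \in R : \Delta(z_v, \widetilde{y}_v) \le \delta_0 d\}$; by \cref{clm:product-like-expansion}, the set $S_z \subseteq L$ of vertices with local view $\delta_0 d$-close to $\widetilde{y}_u$ and $|N(u)\setminus T_z^*| \le (\delta_0 - \varepsilon/2)d$ omits only a $(\lambda/d)^2/(\varepsilon/2)^2$ fraction. The assignment $x_z$ indexing $z$'s local codewords fully satisfies every edge in $E(S_z, T_z^*)$, exactly as in \cref{clm:full-sat-ael}. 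The concentration step proceeds as in \cref{lem:ael-concentration}: any distribution-preserving perturbation $x'_z$ of $x_z$ on $\mathcal{B}_{L|S_z}$ that flips at least $\eta|S_z|$ vertices replaces their local codeword with a distinct one in $C_0$ (differing on at least $\delta_0 d$ coordinates), of which at least $(\varepsilon/2)d$ lie on edges to $T_z^*$ and thus swing from satisfied to violated. Weak regularity caps the net damage by $O(\ell^2 \gamma nd)$, forcing $\eta = O(\ell^2 \gamma / \varepsilon) \cdot n/|S_z| = O(\varepsilon^2)$.

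Finally I would enumerate all $\mathcal{B}_L$-measurable $x_L : L \to [\ell]$ (a total of $\ell^{2^{O(\ell^6/\varepsilon^6)}}$ options), build $\widetilde{z}_{u,v} = c^{x_L(u)}_{u,v}$, and run a linear-time Tanner unique decoder on $\widetilde{z}$. For the ``correct'' $x_L$ encountered in the enumeration (matching $x_z$ on each atom up to $\eta|S_z|$ exceptions plus the $|\overline{S_z}|$ additional left vertices), the resulting $\widetilde{z}$ differs from $z$ on at most $(\eta + |\overline{S_z}|/n) \cdot nd = O(\varepsilon^2) \cdot nd$ edges, well inside the unique decoding radius $\Omega(\delta_0^2) \cdot nd$ supported by the design distance $\delta_0(\delta_0 - \lambda/d)$. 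The running time is dominated by the $\ell^2$ weak regularity calls, each $\widetilde{O}_{\varepsilon,\ell}(n)$ by \cref{thm:weak-reg}, plus constantly many unique decoder invocations, giving the claimed $\widetilde{O}_{\varepsilon,\ell}(n)$ bound. The principal obstacle is the absence of an outer code: the cleanup can only tolerate edge error up to a constant fraction of $\delta_0^2$ rather than up to the much larger outer distance available in the AEL case, which is what forces $\gamma$ down to $\Theta(\varepsilon^3/\ell^2)$ and correspondingly tightens $\lambda/d$; keeping the three constants ($\gamma$, $\eta$, $\lambda/d$) simultaneously compatible with the Tanner unique decoding radius is the most delicate bookkeeping in the argument.
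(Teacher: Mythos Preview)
There is a genuine gap in the cleanup step. You assert that $S_z$ ``omits only a $(\lambda/d)^2/(\varepsilon/2)^2$ fraction'' of $L$, but $S_z$ carries two conditions: (a) $\Delta(z_u,\widetilde{y}_u)\le \delta_0 d$, and (b) $|N(u)\setminus T_z^*|\le(\delta_0-\varepsilon/2)d$. Only (b) is controlled by \cref{clm:product-like-expansion}; condition (a) is bounded merely by averaging, which gives $|\{u:\Delta(z_u,\widetilde{y}_u)>\delta_0 d\}|\le(\delta_0-\varepsilon)n$. Thus $|\overline{S_z}|$ is of order $\delta_0 n$, not $(\lambda/d)^2\varepsilon^{-2}n$. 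With this correction your single enumeration over $\mathcal{B}_L$-measurable assignments yields $\widetilde{z}$ with $\Delta(z,\widetilde{z})\le(\eta|S_z|+|\overline{S_z}|)\,d\approx \delta_0\,nd$, which is far outside the Tanner unique-decoding radius $\Theta(\delta_0^2)\,nd$. The one-round pipeline therefore cannot close, no matter how small you take $\gamma$ and $\lambda/d$.

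The paper fixes this with a \emph{two-round} scheme. Round one runs your concentration/enumeration argument on the $R$ side and aims only for the weaker conclusion that some enumerated $\widetilde{z}$ satisfies $z_v=\widetilde{z}_v$ on at least $(1-(\delta_0-\varepsilon/2))n$ right vertices; here the $\Theta(\delta_0)$-sized loss from $|\overline{T_z^*}|$ is budgeted for. Round two (\textsc{CustomDecode}) then rebuilds an AEL-style CSP against $\widetilde{z}$ rather than $\widetilde{y}$. The point is that if $v$ is good then $z_{u,v}=\widetilde{z}_{u,v}$, so $\Delta(z_u,\widetilde{z}_u)\le |N(u)\setminus T_z|$; hence the neighbourhood condition alone already forces $u\in S_z^*$. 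Expansion from the $(\delta_0-\varepsilon/2)n$ bad right vertices then makes the new $|\overline{S_z}|$ genuinely tiny, $O((\lambda/d)^2\varepsilon^{-2})n$, and the second enumeration lands at $z'$ with $\Delta(z,z')=O(\varepsilon^2)\,nd$, inside the unique-decoding radius. Your identification of ``no outer code'' as the obstacle is right in spirit; the missing idea is that one manufactures a surrogate for the outer decoder by running the weak-regularity pipeline a second time.
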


\subsection{Setting up the CSP}

\label{subsec:tan-csp-setup}
At each vertex $w \in L \cup R$, we can use the list decoding property of $C_{0}$ to retrieve a list of $\leq \ell$ candidate local codewords at most $\delta_{0} d$ away from the local view $\widetilde{y}_{w}$.
We will again set up a 2-CSP $\Ins_{\widetilde{y}} = (V,D,\Psi)$ with $2n$ variables, i.e., $V = \{x_w \;|\; w \in L\cup R\}$ and domain $D = [\ell]$. The locally recovered list elements at each vertex $w \in L\cup R$ are arbitrarily ordered like $c^{1}_w, c^{2}_w, \ldots, c^{\ell}_w$ so that they match with the domain values. If there are $<\ell$ candidate local codewords the remaining domain values represent some arbitrary entry in $C_{0}$.

The central point of difference from the setup for list decoding AEL is the nature of constraints used here. Each edge in the graph $G$ again represents a constraint in the CSP, i.e., $\Psi = \{\psi_e \; | \; e \in E\}$. But at an edge $(u, v) \in E$, the assignment $x_u = i, x_v = j$ satisfies the constraint iff $c^{i}_{u}$ and $c^{j}_{v}$  agree on the shared index, i.e., $\psi_{(u,v)} = \{(i,j) \; | \; c^{i}_{u, v} = c^{j}_{v, u}\}$.

\begin{algorithm}
\caption{List Decoding of Tanner code}
\label{alg:tanner}
\begin{algorithmic}
\Require $\widetilde{y} \in \mathbb{F}_q^E$
\Ensure $\mathcal{L} \subseteq C^\text{TAN}$
\\
\\
\begin{itemize}
    \item Construct a CSP $\Ins_{\widetilde{y}} = (V,D=[\ell],\Psi = \{\psi_e\;|\; e \in E\})$ as described in \cref{subsec:tan-csp-setup}
    \item For each $\alpha \in [\ell]^2$, obtain the weak regularity approximation $h_\alpha$ of $g_\alpha : [n]^2 \to \{0,1\}$, as defined below, with $\gamma = \frac{\varepsilon^3}{2^5\ell^2}$
\[g_\alpha(u, v) = \begin{cases}
    1 &\quad \text{if }\,(u,v) \in E \,\text{ and }\,\alpha \in \psi_{(u,v)} \\
    0 &\quad \text{otherwise}
\end{cases}\]
    \item Let $h_\alpha = \sum_{t=1}^{p_\alpha} k_{\alpha,t} \cdot \mathbf{1}_{\mathcal{S}_{\alpha, t}} \otimes \mathbf{1}_{\mathcal{T}_{\alpha, t}}$. Then, compute the factor $\mathcal{B}_R$  determined by the set of cut functions $\{\mathbf{1}_{\mathcal{T}_{\alpha, t}} \;|\; \alpha\in[\ell]^2, t\in[p_\alpha]\}$
    \item $\mathcal{L} \gets \{\}$
    \item For every $\mathcal{B}_R$ measurable function $\widetilde{x}:R \to[\ell]$ 
    \begin{itemize}
        \item[-] Obtain $\widetilde{z} \in \mathbb{F}_q^E$ such that
        for each $(u,v)\in E$, $\widetilde{z}_{u,v} = c^{\widetilde{x}(v)}_{v,u}$
        \item[-] $\mathcal{L}_{\widetilde{z}} \gets \textsc{CustomDecode}(\widetilde{z})$
        \item[-] For each $z \in \mathcal{L}_{\widetilde{z}}$, if $\Delta(z, \widetilde{y}) \leq \delta_{0}(\delta_{0} - \varepsilon) nd$
        \begin{itemize}
            \item[-] $\mathcal{L}\gets\mathcal{L} \cup \{z\}$
        \end{itemize}
    \end{itemize}
    \item Return $\mathcal{L}$
\end{itemize}
\end{algorithmic}
\end{algorithm}

\begin{algorithm}
\caption{\textsc{CustomDecode}}
\label{alg:cust-dec}
\begin{algorithmic}
\Require $\widetilde{z} \in \mathbb{F}_q^E$
\Ensure $\mathcal{L}_{\widetilde{z}} \subseteq C^\text{TAN}$
\\
\\
\begin{itemize}
    \item Construct a CSP $\Ins_{\widetilde{z}} = (V,D=[\ell],\Psi = \{\psi_e\;|\; e \in E\})$ as described in \cref{subsec:ael-csp-setup}
    \item For each $\alpha \in [\ell]^2$, obtain the weak regularity approximation $h_\alpha$ of $g_\alpha : [n]^2 \to \{0,1\}$, as defined below, with $\gamma = \frac{\varepsilon^3}{2^5\ell^2}$
\[g_\alpha(u, v) = \begin{cases}
    1 &\quad \text{if }\,(u,v) \in E \,\text{ and }\,\alpha \in \psi_{(u,v)} \\
    0 &\quad \text{otherwise}
\end{cases}\]
    \item Let $h_\alpha = \sum_{t=1}^{p_\alpha} k_{\alpha,t} \cdot \mathbf{1}_{\mathcal{S}_{\alpha, t}} \otimes \mathbf{1}_{\mathcal{T}_{\alpha, t}}$. Then, compute the factor $\mathcal{B}_L$  determined by the set of cut functions $\{\mathbf{1}_{\mathcal{S}_{\alpha, t}} \;|\; \alpha\in[\ell]^2, t\in[p_\alpha]\}$
    \item $\mathcal{L}_{\widetilde{z}} \gets \{\}$
    \item For every $\mathcal{B}_L$ measurable function $x':L \to[\ell]$ 
    \begin{itemize}
        \item[-] Obtain $z' \in \mathbb{F}_q^E$ such that
        for each $(u,v)\in E$, $z'_{u,v} = c^{x'(u)}_{u,v}$
        \item[-] $z \gets \textsc{UniqueDecode}(z')$
        \item[-] If $z \neq \textsc{Reject}$
        \begin{itemize}
            \item[-] $\mathcal{L}_{\widetilde{z}}\gets\mathcal{L}_{\widetilde{z}} \cup \{z\}$
        \end{itemize}
    \end{itemize}
    \item Return $\mathcal{L}_{\widetilde{z}}$
\end{itemize}
\end{algorithmic}
\end{algorithm}

\subsection{Weak Regularity Decomposition}
We will now use the weak regularity decomposition to analyze the structure of a relevant subportion of the CSP.
Note that with $g_\alpha$ as defined in \cref{alg:tanner}, the number of satisfied constraints of the form $\{\psi_e\;|\;e \in E(S, T)\;\}$ for any $S\subseteq L$ and $T\subseteq R$ can be approximated as follows.
\begin{align*}
    \text{val}_{S,T}(\Ins_{\widetilde{y}}, x) &= \sum_{\alpha=(\alpha_L, \alpha_R) \in [\ell]^2} \left\langle g_\alpha, \mathbf{1}_{x,{\alpha_L}} \cdot \mathbf{1}_{S} \otimes \mathbf{1}_{x,{\alpha_R}} \cdot \mathbf{1}_{T} \right\rangle \\
    &= \sum_{\alpha=(\alpha_L, \alpha_R) \in [\ell]^2} \left\langle h_\alpha, \mathbf{1}_{x,{\alpha_L}} \cdot \mathbf{1}_{S} \otimes \mathbf{1}_{x,{\alpha_R}} \cdot \mathbf{1}_{T} \right\rangle \;\pm\; \ell^2 \cdot \gamma \cdot nd
\end{align*}
Assuming $h_\alpha$ has the form $\sum_{t=1}^{p_\alpha} k_{\alpha,t} \cdot \mathbf{1}_{\mathcal{S}_{\alpha, t}} \otimes \mathbf{1}_{\mathcal{T}_{\alpha, t}}$, the factor $\mathcal{B}_L$ is determined by the set $\{\mathbf{1}_{\mathcal{S}_{\alpha, t}} \;|\; \alpha\in[\ell]^2, t\in[p_\alpha]\}$ and $\mathcal{B}_R$ by $\{\mathbf{1}_{\mathcal{T}_{\alpha, t}} \;|\; \alpha\in[\ell]^2, t\in[p_\alpha]\}$.
So,
\begin{align*}
    \text{val}_{S,T}(\Ins_{\widetilde{y}}, x) &= \sum_{\alpha \in [\ell]^2} \sum_{t=1}^{p_\alpha} k_{\alpha,t} \left\langle \mathbf{1}_{\mathcal{S}_{\alpha, t}} \otimes \mathbf{1}_{\mathcal{T}_{\alpha, t}}, \mathbf{1}_{x,{\alpha_L}} \cdot \mathbf{1}_{S} \otimes \mathbf{1}_{x,{\alpha_R}} \cdot \mathbf{1}_{T} \right\rangle \;\pm\; \ell^2 \gamma nd \\
    &= \sum_{\alpha, t} k_{\alpha,t} \left\langle \mathbf{1}_{\mathcal{S}_{\alpha, t}} , \mathbf{1}_{x,{\alpha_L}} \right\rangle_S\left\langle \mathbf{1}_{\mathcal{T}_{\alpha, t}}, \mathbf{1}_{x,{\alpha_R}} \right\rangle_T \;\pm\; \ell^2 \gamma nd \\
    &=\sum_{\alpha, t} k_{\alpha,t} \left\langle \mathbf{1}_{\mathcal{S}_{\alpha, t}} , \E[\mathbf{1}_{x,{\alpha_L}} \,| \,\mathcal{B}_{L|S}] \right\rangle_S\left\langle \mathbf{1}_{\mathcal{T}_{\alpha, t}}, \E[\mathbf{1}_{x,{\alpha_R}} \,|\, \mathcal{B}_{R|T}] \right\rangle_T \;\pm\; \ell^2 \gamma nd
\end{align*}
The last step is possible because the set of functions $\{\mathbf{1}_{\mathcal{S}_{\alpha, t}}\}_{\alpha,t}$ and $\{\mathbf{1}_{\mathcal{T}_{\alpha, t}}\}_{\alpha,t}$ is measurable on $\mathcal{B}_{L|S}$ and $\mathcal{B}_{R|T}$ respectively.
This implies that $\text{val}_{S,T}(\Ins_{\widetilde{y}}, x)$ is dependent only on the distribution induced by the assignment function $x$ on the factor $\mathcal{B}_{L|S}$ and $\mathcal{B}_{R|T}$.

\subsection{Concentration over Factors}
For any $z$ that we want to recover, define the subsets $S_z\subseteq L$ and $T_z \subseteq R$ as
\begin{align*}
    &T^*_z = \{v \;|\; v \in R, \Delta(z_v, \widetilde{y}_v)\leq\delta_{0} d\} \\
    &T_z = \{v \;|\; v\in T^*_z, |N(v)
    \setminus S_z| \leq (\delta_{0}-\varepsilon/2)d\} \quad S_z = \{u \;|\; u \in L, \Delta(z_u, \widetilde{y}_u)\leq\delta_{0} d\}
\end{align*}
The assignment function $x_z : V \to [\ell]$ is
$$
    x_z(w) = \begin{cases}
    i &\quad \text{if } w \in S_z\cup T_z \text{ and } z_{w} = c^i_w\\
    \text{arbitrary} &\quad \text{otherwise}
    \end{cases}
$$
\begin{claim}
    \label{clm:full-sat-tan}
    $\textnormal{val}_{S_z, T_z}(\Ins_{\widetilde{y}}, x_z) = |E(S_z, T_z)|$
\end{claim}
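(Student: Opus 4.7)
The plan is to mirror the proof of \cref{clm:full-sat-ael}, but exploit the key structural difference between Tanner and AEL constraints: in the Tanner CSP, the constraint at edge $(u,v)$ is an internal agreement check $c^{x_u}_{u,v} = c^{x_v}_{v,u}$ between two local views, rather than an agreement check against $\widetilde{y}_{u,v}$. This means that satisfaction can be witnessed by any single globally consistent assignment to the edges, in particular by the codeword $z$ itself.

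First, I would fix an arbitrary edge $(u,v) \in E(S_z, T_z)$ and show that it is satisfied by $x_z$. Since $u \in S_z$, the local view $z_u$ is within Hamming distance $\delta_0 d$ of $\widetilde{y}_u$, so by the $(\delta_0, \ell)$-list-decodability of $C_0$ the codeword $z_u$ appears in the local list $c^1_u, \ldots, c^\ell_u$ produced at $u$. The definition of $x_z$ then forces $c^{x_z(u)}_u = z_u$, and in particular the projection onto the shared edge gives $c^{x_z(u)}_{u,v} = z_{u,v}$. By the same argument applied to $v \in T_z \subseteq T^*_z$ (the extra expansion requirement in the definition of $T_z$ is not needed here, it will be used only in the concentration step), we obtain $c^{x_z(v)}_{v,u} = z_{v,u}$.

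Finally, since $z$ is a well-defined element of $\mathbb{F}_q^E$, the label it assigns to the edge $\{u,v\}$ is the same viewed from either endpoint, i.e., $z_{u,v} = z_{v,u}$. Chaining the three equalities yields $c^{x_z(u)}_{u,v} = z_{u,v} = z_{v,u} = c^{x_z(v)}_{v,u}$, which is precisely the condition for $(x_z(u), x_z(v)) \in \psi_{(u,v)}$. Since this holds for every edge in $E(S_z, T_z)$, we conclude $\textnormal{val}_{S_z, T_z}(\Ins_{\widetilde{y}}, x_z) = |E(S_z, T_z)|$.

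There is no real obstacle here; the statement is essentially a definitional unfolding once one observes that (i) the list-decoding radius of $C_0$ is exactly what is needed to guarantee that $z_w$ sits inside the locally enumerated list at every $w \in S_z \cup T_z$, and (ii) the edge-agreement constraints are intrinsically compatible with any global edge-assignment, so no appeal to closeness with $\widetilde{y}$ is required at this step. The nontrivial content of the decoding argument is deferred to the subsequent concentration lemma, where the expansion condition in the definition of $T_z$ will come into play.
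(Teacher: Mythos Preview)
Your proof is correct and follows essentially the same approach as the paper's own proof: both argue that for $w \in S_z \cup T_z$ the local view $z_w$ lies in the locally decoded list (hence $c^{x_z(w)}_w = z_w$), and then chain $c^{x_z(u)}_{u,v} = z_{u,v} = z_{v,u} = c^{x_z(v)}_{v,u}$ to satisfy the constraint. Your write-up is slightly more detailed in explicitly noting that only the $T_z^*$ part of the definition of $T_z$ is used here, but the argument is the same.
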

\begin{proof}
    For any $w \in S_z \cup T_z$, $z_w$ belongs to the locally decoded list. So, for any $(u,v) \in E(S_z, T_z)$, $c^{x_z(u)}_u = z_u$ and $c^{x_z(v)}_v = z_v$. As $c^{x_z(u)}_{u,v} = z_{u,v} = z_{v,u} = c^{x_z(v)}_{v,u}$, any constraint associated to such an edge is satisfied.
\end{proof}
\begin{lemma}
    \label{lem:tanner-concentration}
    If $z \in C^{\textnormal{TAN}}$ and $\Delta(z, \widetilde{y}) \leq \delta_{0}(\delta_{0} - \varepsilon)nd$, then $x_z$ is $\eta$-concentrated on $\mathcal{B}_{R|T_z}$ for $\eta = (5\ell^2\gamma/\varepsilon)\cdot n/|T_z|$.
\end{lemma}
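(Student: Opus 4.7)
The plan is to mirror the structure of \cref{lem:ael-concentration}, but now apply the concentration argument to the right side of the bipartite graph via the factor $\mathcal{B}_{R|T_z}$. I will argue by contradiction: assume $x_z$ is not $\eta$-concentrated on $\mathcal{B}_{R|T_z}$, construct a distribution-preserving ``swap'' assignment $x'_z$, and show that on one hand $x'_z$ dissatisfies many constraints in $E(S_z, T_z)$ that $x_z$ satisfied (by \cref{clm:full-sat-tan}), while on the other hand weak regularity forces the two values $\textnormal{val}_{S_z,T_z}(\Ins_{\widetilde{y}},\cdot)$ to be close because the two assignments induce the same atom-wise distributions.

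First I will build $x'_z$ exactly as in the AEL proof, but atom-by-atom on $\mathcal{B}_{R|T_z}$ rather than on $\mathcal{B}_{L|S_z}$: within each atom $\mathcal{P}$, rank the $[\ell]$ values by frequency and cyclically relabel so that at least one bad vertex per non-majority class is reassigned, while the total multiset of values on $\mathcal{P}$ is preserved. If $x_z$ is not $\eta$-concentrated, then by definition at least $\eta |T_z|$ vertices in $T_z$ are assigned something other than the atom's majority value, and these are precisely the vertices where $x_z$ and $x'_z$ can be taken to disagree. Crucially, $x'_z$ and $x_z$ agree on $L$, so only the right-hand side of every $\langle \mathbf{1}_{\mathcal{T}_{\alpha,t}}, \mathbb{E}[\mathbf{1}_{x,\alpha_R} \,|\, \mathcal{B}_{R|T_z}]\rangle_{T_z}$ factor changes, and by construction it does not change at all.

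The main combinatorial step, and the place where the right-side concentration proof differs from the AEL one, is translating a pointwise change at a bad $v \in T_z$ into lost constraints inside $E(S_z, T_z)$. For $v \in T_z$ the inclusion $v \in T_z^*$ guarantees $c^{x_z(v)}_v = z_v$, and since any other locally decoded codeword $c^{x'_z(v)}_v$ differs from $c^{x_z(v)}_v$ in at least $\delta_0 d$ positions (by the distance of $C_0$), and since $|N(v) \setminus S_z| \leq (\delta_0 - \varepsilon/2)d$ by the definition of $T_z$, I can conclude that $c^{x_z(v)}_v$ and $c^{x'_z(v)}_v$ disagree in at least $(\varepsilon/2)d$ positions $u \in N(v) \cap S_z$. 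At each such $u$, the edge $(u,v)$ was satisfied under $x_z$ by \cref{clm:full-sat-tan}, so updating the $v$-side label forces it to be unsatisfied under $x'_z$. Summing over the $\eta |T_z|$ bad vertices (with no double counting since these are losses across distinct edges incident to distinct $v$), I obtain
\[
\textnormal{val}_{S_z, T_z}(\Ins_{\widetilde{y}}, x'_z) \;\leq\; \textnormal{val}_{S_z, T_z}(\Ins_{\widetilde{y}}, x_z) \,-\, (\eta\varepsilon/2)\cdot|T_z| d.
\]

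Finally, I will invoke the weak regularity expansion of $\textnormal{val}_{S_z,T_z}(\Ins_{\widetilde{y}}, \cdot)$ derived in the preceding subsection: since $x_z$ and $x'_z$ induce the same conditional distributions on every atom of $\mathcal{B}_{R|T_z}$ (and agree on $L$), the main term $\sum_{\alpha,t} k_{\alpha,t}\, \langle \mathbf{1}_{\mathcal{S}_{\alpha,t}}, \mathbb{E}[\mathbf{1}_{\cdot,\alpha_L}\,|\,\mathcal{B}_{L|S_z}]\rangle_{S_z}\langle \mathbf{1}_{\mathcal{T}_{\alpha,t}}, \mathbb{E}[\mathbf{1}_{\cdot,\alpha_R}\,|\,\mathcal{B}_{R|T_z}]\rangle_{T_z}$ is identical for the two assignments, and the two $\pm \ell^2 \gamma nd$ error terms combine to give $|\textnormal{val}_{S_z,T_z}(\Ins_{\widetilde{y}}, x_z) - \textnormal{val}_{S_z,T_z}(\Ins_{\widetilde{y}}, x'_z)| \leq 2\ell^2\gamma\cdot nd$. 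Comparing with the previous display yields $(\eta\varepsilon/2)\,|T_z|\,d \leq 2\ell^2\gamma\,nd$, which is violated by the stated choice $\eta = (5\ell^2\gamma/\varepsilon)\cdot n/|T_z|$, giving the contradiction. The main obstacle, and the only point where one must be careful, is the pointwise accounting in the third paragraph: one needs to verify that the $(\varepsilon/2)d$-many ``forced disagreements'' inside $N(v) \cap S_z$ correspond to edges in $E(S_z, T_z)$ that were counted in $\textnormal{val}_{S_z,T_z}$ and become strictly unsatisfied, and that these losses across distinct bad $v$'s do not overlap (which is automatic since each $(u,v) \in E$ has a unique $v$-endpoint).
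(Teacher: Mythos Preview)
Your proposal is correct and follows essentially the same argument as the paper's own proof: a contradiction via a distribution-preserving reassignment on the atoms of $\mathcal{B}_{R|T_z}$, using \cref{clm:full-sat-tan} together with the distance of $C_0$ and the $|N(v)\setminus S_z|\le(\delta_0-\varepsilon/2)d$ bound to force $\ge(\varepsilon/2)d$ newly unsatisfied constraints per changed vertex, and then comparing against the $2\ell^2\gamma nd$ slack from weak regularity. Your extra care in noting that $x'_z$ agrees with $x_z$ on $L$ (so the left factors are unchanged) and that losses across distinct bad $v$'s cannot overlap is exactly right and makes the argument slightly more explicit than the paper's.
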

\begin{proof}    
    The proof here is similar to the proof of \cref{lem:ael-concentration}.
    So, for the sake of being concise we will refer to parts of \cref{lem:ael-concentration}'s proof here.
    As we need to prove concentration on the factor $\mathcal{B}_{R|T_z}$, consider (by the method of contradiction) what happens if $x_z$ is not concentrated.
    The assignment function is kept the same on all the variables except the ones in $T_z$, where it is moved around inside the partitions according to the procedure in \cref{lem:ael-concentration}.
    This has the effect of changing the assignment at $\geq \eta|T_z|$ vertices in $T_z$.
    Now notice that at any such changed vertex $v \in T_z$, the assignment $x_z$ satisfies all the connected constraints $(u,v)\in E(S_z, T_z)$  (\cref{clm:full-sat-tan}).
    Any change to the assignment $x_z(v) = j \neq j'$ here will have an effect that for $\geq \delta_{0} d$ edges $(u,v)\in E$, $c^{j}_{v, u} \neq c^{j'}_{v,u}$.
    But, by definition, for all $v\in T_z$, only $\leq (\delta_{0}-\varepsilon/2)d$ edges are not in $E(S_z, T_z)$.
    So, in total, $\geq (\varepsilon/2)d\cdot\eta|T_z|$ constraints in $E(S_z, T_z)$ will become unsatisfied by the specified change (as the constraints are of the form $c^{x_z(u)}_{u,v} = c^{j}_{v, u}$ and $c^{j}_{v, u} \neq c^{j'}_{v,u}$).
    This leads to a contradiction because the difference in the number of satisfied constraints by these two assignments should be $\geq 2.5\ell^2\gamma\cdot nd$ due to the above analysis, but $\leq 2\ell^2\gamma\cdot nd$ from the weak regularity decomposition.
\end{proof}
\subsection{Covering relevant codewords}
We will now prove that due to the concentrated structure shown above, the enumeration in \cref{alg:tanner} covers all the relevant codewords.

\begin{theorem}[\cref{thm:tan-listdec} restated]
    If $C_{0}$ is $(\delta_{0}, \ell)$ list decodable, then the \text{Tanner} code $C^{\textnormal{TAN}}=C(C_{0}, G)$ is $(\delta_{0}(\delta_{0}-\varepsilon), \ell^{2^{\cal{O}(\ell^6/\varepsilon^6)}})$ list decodable for any constant $0<\varepsilon<\delta_{0}$ if $\tfrac{\lambda}{d} < \tfrac{\varepsilon^6}{2^{33}\ell^4}$.
    Moreover, this can be done in time $\widetilde{\cal{O}}_{\varepsilon, \ell}(n)$.
\end{theorem}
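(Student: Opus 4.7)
The plan is to follow the template of \cref{thm:AEL-listdec}, but now account for the two-sided nature of the Tanner construction: unlike AEL, where an outer code carries the decoding, here both endpoints of each edge are constrained by $C_0$, so we must recover each codeword through two rounds of factor-based enumeration---first on the right via \cref{alg:tanner}, then on the left via \cref{alg:cust-dec}---and finish with a standard near-linear-time unique decoder for Tanner codes.

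First I would fix a target codeword $z \in C^{\textnormal{TAN}}$ with $\Delta(z,\widetilde{y}) \leq \delta_{0}(\delta_{0}-\varepsilon) nd$ and argue that $|\overline{S_z}|$ and $|\overline{T_z}|$ are small. By an averaging argument the number of vertices with more than $\delta_0 d$ local errors is at most $(\delta_0-\varepsilon) n$, so $|T_z^*|, |S_z|\geq (1-(\delta_0-\varepsilon))n$; then \cref{clm:product-like-expansion} applied with slack $\varepsilon/2$ gives $|\overline{T_z}| \leq (\lambda/d)^2 \cdot (4/\varepsilon^2)\cdot n$. By \cref{lem:tanner-concentration}, $x_z$ is $\eta$-concentrated on $\mathcal{B}_{R|T_z}$ with $\eta=(5\ell^2\gamma/\varepsilon)\cdot n/|T_z|$, so the outer enumeration produces a $\mathcal{B}_R$-measurable $\widetilde{x}$ that agrees with $x_z$ on all but $\eta|T_z|+|\overline{T_z}|$ right vertices. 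The induced edge string $\widetilde{z}$, obtained by reading $\widetilde{z}_{u,v}=c^{\widetilde{x}(v)}_{v,u}$, then satisfies $\Delta(\widetilde{z},z)\leq (\eta|T_z|+|\overline{T_z}|)\cdot d$, since $z_{v,u}=c^{x_z(v)}_{v,u}$ whenever $v\in T_z$.

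Next I would feed this $\widetilde{z}$ to \cref{alg:cust-dec} and rerun the exact same argument on the left side. The key observation is that \textsc{CustomDecode} sets up its CSP with respect to $\widetilde{z}$, so its local lists are $(\delta_0,\ell)$-lists around $\widetilde{z}_u$. Because $\widetilde{z}$ is already close to $z$, the left analogue of $S_z$ (namely the set of $u\in L$ where $z_u$ falls in the $\widetilde{z}$-based local list) remains large, by one more application of \cref{clm:product-like-expansion}. The mirror image of \cref{lem:tanner-concentration}, obtained by swapping the roles of $L$ and $R$ and now measuring distance from $\widetilde{z}$, gives $\eta$-concentration of $x_z$ on $\mathcal{B}_{L|S_z}$ inside the $\widetilde{z}$-instance. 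Hence one of the enumerated $x'$ yields a $z'\in\mathbb{F}_q^E$ with $z'_{u,v}=c^{x'(u)}_{u,v}=z_{u,v}$ on almost all edges incident to the good subset of $L$, and combined with the earlier right-side agreement this forces $z'$ to lie well within the Sipser--Spielman-style unique decoding radius of $z$; the final \textsc{UniqueDecode} call then recovers $z$ in near-linear time.

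To close, I would collect parameters: with $\gamma=\varepsilon^3/(2^5\ell^2)$ and the stated bound $\lambda/d<\varepsilon^6/(2^{33}\ell^4)$, both $\eta |W|+|\overline{W}|$ error terms are comfortably below the unique decoding threshold of $C^{\textnormal{TAN}}$; the enumeration count is $\ell^{2^{O(\ell^2/\gamma^2)}}$ per stage, and doubling across the two stages still fits in $\ell^{2^{O(\ell^6/\varepsilon^6)}}$; the running time is dominated by two invocations of \cref{thm:weak-reg} at cost $\widetilde{O}(2^{2^{O(1/\gamma^2)}} nd)$ plus the near-linear unique decoder, giving $\widetilde{\mathcal{O}}_{\varepsilon,\ell}(n)$. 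The main obstacle I anticipate is the two-stage bookkeeping: arguing that the residual error after the outer enumeration is still small enough that $z$ survives inside the local lists built from $\widetilde{z}$ (so that the mirror concentration lemma applies), and then that the combined left--right error after \cref{alg:cust-dec} stays inside the Tanner unique-decoding radius, which is precisely why $\gamma$ must scale with $\varepsilon^3$ (versus $\varepsilon$ for AEL) and why the exponents in both the list size and the expansion requirement inflate by a factor of three relative to the AEL analysis.
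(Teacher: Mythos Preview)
Your two-stage plan matches the paper's proof, but there is a real quantitative slip in the first stage that you should fix. You claim that \cref{clm:product-like-expansion} gives $|\overline{T_z}| \leq (\lambda/d)^2\cdot(4/\varepsilon^2)\,n$. It does not: that bound applies only to $|T_z^*\setminus T_z|$. Since $T_z \subseteq T_z^*$, one has $\overline{T_z}=\overline{T_z^*}\cup(T_z^*\setminus T_z)$, and the averaging argument you invoke gives $|\overline{T_z^*}|\le(\delta_0-\varepsilon)n$, which is the \emph{dominant} term. Altogether $\eta|T_z|+|\overline{T_z}|\le(\delta_0-\varepsilon/2)n$, not something tiny. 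Consequently your later assertion that ``both $\eta|W|+|\overline{W}|$ error terms are comfortably below the unique decoding threshold of $C^{\textnormal{TAN}}$'' is wrong for the first stage: after the right-side enumeration, $\widetilde z$ may disagree with $z$ on up to $(\delta_0-\varepsilon/2)nd$ edges, far outside the unique-decoding radius. This is exactly why the second stage exists, not just a convenience.

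What actually makes the second stage work is that the first stage guarantees $z_v=\widetilde z_v$ on at least $(1-(\delta_0-\varepsilon/2))n$ right vertices; for any left vertex $u$ with more than $(1-(\delta_0-\varepsilon/4))d$ neighbors in that set one gets $\Delta(z_u,\widetilde z_u)<\delta_0 d$ automatically, so $z_u$ lands in the $\widetilde z$-based local list and $u$ sits in the second-stage $S_z$. \cref{clm:product-like-expansion} (slack $\varepsilon/4$) then gives $|\overline{S_z}|\le(\lambda/d)^2\cdot(16/\varepsilon^2)\,n$ with no additive $(\delta_0-\varepsilon)n$ term, and now $\eta|S_z|+|\overline{S_z}|\le(\varepsilon^2/6)n$, which \emph{is} inside the unique-decoding radius. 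There is also no ``combined left--right error'': $z'$ is read entirely from the left side, so $\Delta(z,z')\le d\cdot(\eta|S_z|+|\overline{S_z}|)$. Once you correct these two points your argument is the paper's.
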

\begin{proof}
First, we will show that every $z \in C^\text{TAN}$ such that $\Delta(z, \widetilde{y})\leq \delta_{0}(\delta_{0} -\varepsilon)nd$ is in the list output by \cref{alg:tanner}.
This means that the maximum size of the output list explicitly bounds the number of codewords within the list decoding radius.
So, it is sufficient to prove that \cref{alg:tanner} can run in $\widetilde{\cal{O}}_{\varepsilon, \ell}(n)$ and that it outputs a list with size $\leq \ell^{2^{\cal{O}(\ell^6/\varepsilon^6)}}$.
To this end, note that from \cref{lem:tanner-concentration} we know that for all the $z$'s we want to recover, $x_z$ is concentrated on $\mathcal{B}_{R|T_z}$.
This means that $x_z$ differs from a measurable assignment $\widetilde{x}_z$ (which gets enumerated over in the algorithm) at most at $\eta|T_z|+\left|\overline{T_z}\right|$ vertices on $R$.
To analyze the sizes of these sets, observe that 
$$\overline{T_z} = (\overline{T_z^*}\setminus T_z) \cup (T_z^*\setminus T_z) =
\overline{T_z^*} \cup (T_z^*\setminus T_z)$$
For each vertex $v\in\overline{T^*_z}, \;\Delta(z_v, \widetilde{y}_v)> \delta_{0} d$ but $\Delta(z, \widetilde{y})\leq \delta_{0}(\delta_{0}-\varepsilon)nd$. So, $\left|\overline{T^*_z}\right| \leq (\delta_{0}-\varepsilon)n$.
By the same reasoning $\left|\overline{S_z}\right|\leq (\delta_{0}-\varepsilon)n$. But as vertices in $T_z^*\setminus T_z$ have less than $(1-(\delta_{0}-\varepsilon/2))d$ neighbors in $S_z$, combining it with \cref{clm:product-like-expansion} gives us that $|T_z^*\setminus T_z| \leq (\lambda/d)^2\cdot(4/\varepsilon^2) n$. Putting all constants $\eta = (5\ell^2\gamma/\varepsilon)\cdot n/|T_z|$, $\gamma = \tfrac{\varepsilon^3}{2^5\ell^2}$ and $\tfrac{\lambda}{d} < \tfrac{\varepsilon^6}{2^{33}\ell^4}$ back into the equation, we get
\begin{align*}
    \eta|T_z|+\left|\overline{T_z}\right| &\leq \eta |T_z| + \left|\overline{T_z^*}\right| + \left|T_z^*\setminus T_z\right| \\
    &\leq (5\ell^2\gamma/\varepsilon) n + (\delta_{0}-\varepsilon)n + (\lambda/d)^2\cdot(4/\varepsilon^2)n \\
    &\leq (5\varepsilon^2/32)n + (\delta_{0}-\varepsilon)n + (\varepsilon^{10}/2^{64})n = (\delta_{0}-\varepsilon/2)n
\end{align*}
This means that for $\geq \left(1-(\delta_{0}-\varepsilon/2)\right)n$ vertices $v \in R$, $\widetilde{x}_z(v) = x_z(v)$ and hence $z_v = c^{x_z(v)}_{v} = c^{\widetilde{x}_z(v)}_{v} = \widetilde{z}_v$.
If $\widetilde{z}$ is constructed from $\widetilde{x}_z$ as done in \cref{alg:tanner}, establishing the following claim completes the proof of the fact that it recovers all the relevant $z$'s.
\begin{claim}
    $z \in \textnormal{\textsc{CustomDecode}}(\widetilde{z})$
\end{claim}
\begin{proof}
    The CSP setup in \cref{alg:cust-dec} is the one described in \cref{subsec:ael-csp-setup}, where a constraint associated to $(u,v)\in E$ is satisfied by an assignment $x$ if $c^{x(u)}_{u,v} = \widetilde{z}_{u,v}$.
    In this setup, we can again define the subsets $S_z\subseteq L$ and $T_z \subseteq R$ as
\begin{align*}
    &S_z^* = \{u \;|\; u \in L,\Delta(z_u, \widetilde{z}_u)\leq \delta_{0} d\} \\
    &S_z = \{u \;|\; u \in S_z^*, |N(u) \setminus T_z|\leq(\delta_{0}-\varepsilon/4)d\} \quad\quad T_z = \{v \;|\; v \in R, z_v = \widetilde{z}_v\}
\end{align*}
And the assignment function $x_z : V \to [\ell]$ as
$$
    x_z(w) = \begin{cases}
    i &\quad \text{if } w \in S_z \text{ and } z_w = c^i_w\\
    \text{arbitrary} &\quad \text{otherwise}
    \end{cases}
$$
Following the same reasoning, we know that $x_z$ is $\eta$-concentrated on $\mathcal{B}_{L|S_z}$.
So, while enumerating over measurable assignment functions, we will go through a $x'_{z}$ which only differs from $x_z$ on $\leq \eta|S_z| + \left|\overline{S_z}\right|$ vertices in $L$.
Now, we know that $|T_z|\geq(1-(\delta_{0}-\varepsilon/2))n$ from the previous discussion.
But as $\overline{S_z}$ have $\leq (1-(\delta_{0}-\varepsilon/4))d$ neighbors in $T_z$ it must be that $\left|\overline{S_z}\right|\leq (\lambda/d)^2\cdot(16/\varepsilon^2) n$ from \cref{clm:product-like-expansion} (as the difference in parameters here is $(1-(\delta_{0}-\varepsilon/2))-(1-(\delta_{0}-\varepsilon/4)) = \varepsilon/4$).
Substituting the values $\eta = (5\ell^2\gamma/\varepsilon)\cdot n/|S_z|$, $\gamma = \tfrac{\varepsilon^3}{2^5\ell^2}$ and $\tfrac{\lambda}{d} < \tfrac{\varepsilon^6}{2^{33}\ell^4}$, we see that
$$\eta|S_z| + \left|\overline{S_z}\right| \leq (5\varepsilon^2/32 + \varepsilon^{10}/2^{62})\cdot n \leq (\varepsilon^2/6)n$$
This means that the $z'$ recovered from $x'_{z}$ is such that $\Delta(z,z')\leq (\varepsilon^2/6)nd \leq (\delta_{0}^2/6)nd$ which falls inside the unique decoding threshold. So, $z = \textsc{UniqueDecode}(z')$ and hence $z \in \mathcal{L}_{\widetilde{z}}$.
\end{proof}
To complete the proof of \cref{thm:tan-listdec}, we discuss the running time of \cref{alg:tanner} and the list size. 
Constructing the weak regularity decomposition for each $g_\alpha$ in $\alpha\in[\ell]^2$ takes $\ell^2\cdot\widetilde{\mathcal{O}}(2^{2^{\widetilde{\mathcal{O}}(1/\gamma^2)}} nd)$ time (\cref{thm:weak-reg}).
Enumerating over measurable functions, we invoke the custom decoder a constant $\ell^{2^{\mathcal{O}(\ell^2/\gamma^2)}}$  number of times.
As the custom decoder again builds a weak regularity decomposition (which takes $\ell^2\cdot\widetilde{\mathcal{O}}(2^{2^{\widetilde{\mathcal{O}}(1/\gamma^2)}} \cdot nd)$ time) and invokes the unique decoder of $C_{0}$ (which takes $\mathcal{O}(n)$ time to run each time) a constant  $\ell^{2^{\mathcal{O}(\ell^2/\gamma^2)}}$ number of times, it runs in $\mathcal{\widetilde{O}}(\ell^{2^{\mathcal{O}(\ell^2/\gamma^2)}} \cdot n)$ time overall.
This means that overall, \cref{alg:tanner} takes $\mathcal{\widetilde{O}}(\ell^{2^{\mathcal{O}(\ell^2/\gamma^2)}} \cdot n)$ time to run too, which can be expressed as $\mathcal{\widetilde{O}}_{\varepsilon, \ell}(n)$ if we suppress the constants.
The output list size is at most $\ell^{2^{\mathcal{O}(\ell^2/\gamma^2)}}\cdot \ell^{2^{\mathcal{O}(\ell^2/\gamma^2)}}$ which is $\ell^{2^{\mathcal{O}(\ell^6/\varepsilon^6)}}$ asymptotically if we substitute the value of $\gamma$.
\end{proof}
\subsection{Instantiating the Code}\label{subsec:tan-instantiate}

We now proceed to concrete instantiations of Tanner codes.

\begin{theorem}
    \label{thm:tanner-main}
    Let $q \ge 2$ be a prime power and $\delta_{0} \in (0, 1-1/q)$.
    For any $\epsilon > 0$, there exists an explicit family of Tanner codes $\{C_n \subseteq \F_q^n \}_{n}$ such that their
    \begin{enumerate}
        \item rate is $\geq 1-2H_q(\delta_{0}) - \varepsilon$,
        \item design distance is $\delta_0^2 - \calO(\varepsilon^{10})$, and
        \item it is $(\delta_{0}^2-\varepsilon, 2^{2^{\calO(\varepsilon^{-13})}})$ list decodable with a $2^{2^{\calO(\varepsilon^{-13})}} \cdot \widetilde{\mathcal{O}}(n)$ time (probabilistic) decoding algorithm.
    \end{enumerate}
\end{theorem}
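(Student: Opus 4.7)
The plan is to parallel the instantiation argument used for the AEL codes in \cref{thm:ael-main}, replacing the inner/outer code pair with a single local code $C_0$ and invoking \cref{thm:tan-listdec} in place of \cref{thm:AEL-listdec}. The three ingredients needed are: (i) a local code $C_0$ simultaneously meeting the Gilbert--Varshamov and list decoding capacity bounds; (ii) a bipartite expander with a sufficiently small $\lambda/d$; and (iii) a parameter setting so that the conclusions of \cref{thm:tan-listdec} assemble into the claimed rate, design distance, list decoding radius, list size, and running time.

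First, I would fix $\varepsilon_0 = \varepsilon/2$ and select $C_0 \subseteq \F_q^d$ of rate $r_0 \geq 1 - H_q(\delta_{0}) - \varepsilon_0$ with distance $\geq \delta_{0}$ that is $(\delta_{0}, \ell)$-list-decodable with $\ell = \mathcal{O}(1/\varepsilon_0) = \mathcal{O}(1/\varepsilon)$. A random code satisfies both properties with positive probability (the first by the GV existence result, the second by list decoding capacity), so by a union bound such a $C_0$ exists; since $d$ is a constant depending only on $\varepsilon$, it can be located by brute force. This yields Tanner rate $r \geq 2 r_0 - 1 \geq 1 - 2 H_q(\delta_{0}) - \varepsilon$ as required.

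Next, I would take $\{G_n\}_n$ to be the double covers of the explicit expanders from \cite{Alon21}, with constant degree $d = \Theta(\varepsilon^{-20})$ so that $\lambda/d = \mathcal{O}(1/\sqrt{d}) = \mathcal{O}(\varepsilon^{10})$. This ensures design distance $\delta_{0}(\delta_{0} - \lambda/d) \geq \delta_{0}^2 - \mathcal{O}(\varepsilon^{10})$, and also makes $\lambda/d$ well below the threshold $\varepsilon'^{6}/(2^{33} \ell^4)$ required by \cref{thm:tan-listdec} once we take the local radius parameter $\varepsilon' = \varepsilon/\delta_{0}$ so that the recovered radius $\delta_{0}(\delta_{0} - \varepsilon')$ equals $\delta_{0}^2 - \varepsilon$.

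Finally, substituting into \cref{thm:tan-listdec} yields $(\delta_{0}^2 - \varepsilon, \ell^{2^{\mathcal{O}(\ell^6/\varepsilon^6)}})$-list-decodability in time $\widetilde{\mathcal{O}}_{\varepsilon, \ell}(n)$. With $\ell = \Theta(1/\varepsilon)$, the double exponent simplifies to $2^{\mathcal{O}(\varepsilon^{-12}) \log(1/\varepsilon)} = 2^{\mathcal{O}(\varepsilon^{-13})}$, giving list size $2^{2^{\mathcal{O}(\varepsilon^{-13})}}$ and runtime $2^{2^{\mathcal{O}(\varepsilon^{-13})}} \cdot \widetilde{\mathcal{O}}(n)$. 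The main bookkeeping obstacle is tracking how the $\ell^4$ factor in the spectral-gap threshold of \cref{thm:tan-listdec} combines with $\ell = \Theta(1/\varepsilon)$: it is this coupling that forces $\lambda/d$ to scale like $\varepsilon^{10}$ rather than $\varepsilon^6$, and hence dictates the $\mathcal{O}(\varepsilon^{10})$ slack in the design distance. Once this is pinned down, the rest of the instantiation is routine substitution, and no new ideas beyond those in the AEL proof are required.
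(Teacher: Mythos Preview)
Your proposal is correct and follows essentially the same instantiation as the paper: a random local code near capacity, Alon's expanders at degree $\Theta(\varepsilon^{-20})$ in double cover, and a direct appeal to \cref{thm:tan-listdec}. The only point to tighten is that the rate bound $r \geq 2r_0 - 1$ for Tanner codes relies on $C_0$ being \emph{linear} (it is a dimension count), so you should take $C_0$ to be a random linear code over $\F_q$---these also meet both the GV and list-decoding capacity bounds with positive probability---rather than an arbitrary random code; the paper does exactly this, and otherwise your parameter bookkeeping matches (the paper uses the cruder inequality $\delta_0(\delta_0-\varepsilon)\geq \delta_0^2-\varepsilon$ in place of your rescaling $\varepsilon'=\varepsilon/\delta_0$, but either works).
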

\begin{proof}
For any prime power $q$ and block length $d$, we know that with high probability a random linear code $C_{0} \subseteq [q]^d$ is list decodable up to $\delta_{0} \in (0,1-1/q)$ fraction of errors if rate $r_{0} = 1-H_{q}(\delta_0)-\epsilon_0$ with list size $\ell = \mathcal{O}(1/\epsilon_0)$ \cite{DBLP:journals/corr/abs-1001-1386}.
It is also known that random linear codes lie near the G.V. bound, so with rate $r_{0} = 1-H_{q}(\delta_{0})-\epsilon_0$  the distance is $\geq \delta_{0}$.
By a union bound, codes with both these properties must exist. Thus, we can brute force the constant-sized search space to find a local code with distance $\geq \delta_{0}$ that is $(\delta_{0}, \mathcal{O}(1/\varepsilon_0))$ list decodable with rate $r_{0} \geq 1-H_q(\delta_{0}) - \varepsilon_0$.

In terms of the parameters defined above, the Tanner code $C^{\text{TAN}} = C(C_{0}, G)$ has rate $r \geq 2r_{0} - 1 \geq 1-2H_q(\delta_0)-2\varepsilon_0$ and (design) distance $\delta =\delta_{0}(\delta_{0} - \lambda/d)$. 
We know that $C^{\text{TAN}}$ can be list decoded from 
$\delta_{0}(\delta_{0} - \varepsilon) \geq \delta_{0}^2 -\varepsilon$ with a list size of $\ell^{2^{\mathcal{O}({\ell^6/\varepsilon^{6}})}}$ if $\tfrac{\lambda}{d} < \tfrac{\varepsilon^6}{2^{33}\ell^4}$.
Setting $\varepsilon_0 = \varepsilon$ and then replacing $\varepsilon$ with $\varepsilon/2$ gives us a cleaner form of the statement in \cref{thm:tan-listdec}: for any $\varepsilon>0$, $C^{\text{TAN}}$ is $(\delta_{0}^2-\varepsilon, 2^{2^{\calO(\varepsilon^{-13})}})$ list decodable if $\lambda/d < \cal{O}(\varepsilon^{10})$.
Note that in these terms, the design distance is $\delta_0^2 - \calO(\varepsilon^{10})$.

To finish our construction, we need to find an appropriate family of graphs which have expansion better than $\cal{O}(\varepsilon^{10})$.
For this, we use the construction from \cite{Alon21}, which achieves $\lambda/d = \cal{O}(1/\sqrt{d})$ for every $n$ and take its double cover to get a bipartite version $\{G_n\}_n$.
We use $d = \Theta(1/\varepsilon^{20})$ to satisfy the above expansion requirements.
With these parameters, the family of Tanner codes $\{C(C_{0}, G_n)\}_n$ can be shown to be list decodable up to $\delta_{0}^2-\varepsilon$ fraction of errors in $2^{2^{\calO(\varepsilon^{-13})}}\mathcal{\widetilde{O}}(n)$ or $\mathcal{\widetilde{O}}_\varepsilon(n)$ time by \cref{thm:tan-listdec}.
\end{proof}

\bibliographystyle{alpha}
\bibliography{macros,references}

\newcommand{\etalchar}[1]{$^{#1}$}
\begin{thebibliography}{DHK{\etalchar{+}}19}

\bibitem[ABN{\etalchar{+}}92]{ABNNR92}
N.~Alon, J.~Bruck, J.~Naor, M.~Naor, and R.~Roth.
\newblock Construction of asymptotically good, low-rate error-correcting codes
  through pseudo-random graphs.
\newblock {\em {IEEE} Transactions on Information Theory}, 28:509--516, 1992.

\bibitem[AEL95]{AEL95}
N.~Alon, J.~Edmonds, and M.~Luby.
\newblock Linear time erasure codes with nearly optimal recovery.
\newblock In {\em Proceedings of IEEE 36th Annual Foundations of Computer
  Science}, pages 512--519, 1995.

\bibitem[AJQ{\etalchar{+}}20]{AJQST19}
Vedat~Levi Alev, Fernando~Granha Jeronimo, Dylan Quintana, Shashank Srivastava,
  and Madhur Tulsiani.
\newblock List decoding of direct sum codes.
\newblock In {\em Proceedings of the 31st ACM-SIAM Symposium on Discrete
  Algorithms}, pages 1412--1425. SIAM, 2020.

\bibitem[AJT19]{AJT19}
Vedat~Levi Alev, Fernando~Granha Jeronimo, and Madhur Tulsiani.
\newblock Approximating constraint satisfaction problems on high-dimensional
  expanders.
\newblock In {\em Proceedings of the 60th IEEE Symposium on Foundations of
  Computer Science}, pages 180--201, 2019.

\bibitem[Alo21]{Alon21}
Noga Alon.
\newblock Explicit expanders of every degree and size.
\newblock {\em Combinatorica}, February 2021.

\bibitem[BRS11]{BarakRS11}
Boaz Barak, Prasad Raghavendra, and David Steurer.
\newblock Rounding semidefinite programming hierarchies via global correlation.
\newblock In {\em Proceedings of the 52nd IEEE Symposium on Foundations of
  Computer Science}, pages 472--481, 2011.

\bibitem[CZ25]{CZ25}
Yeyuan Chen and Zihan Zhang.
\newblock {Explicit Folded Reed-Solomon and Multiplicity Codes Achieve Relaxed
  Generalized Singleton Bounds}.
\newblock In {\em Proceedings of the 57th ACM Symposium on Theory of
  Computing}, 2025.

\bibitem[DEL{\etalchar{+}}22]{DELLS22}
Irit Dinur, Shai Evra, Ron Livne, Alexander Lubotzky, and Shahar Mozes.
\newblock Locally testable codes with constant rate, distance, and locality.
\newblock In {\em Proceedings of the 54th ACM Symposium on Theory of
  Computing}, STOC 2022, 2022.

\bibitem[DHK{\etalchar{+}}18]{DinurHKNT18}
Irit Dinur, Prahladh Harsha, Tali Kaufman, Inbal~Livni Navon, and Amnon
  Ta{-}Shma.
\newblock List decoding with double samplers.
\newblock {\em Electronic Colloquium on Computational Complexity {(ECCC)}},
  25:136, 2018.

\bibitem[DHK{\etalchar{+}}19]{DinurHKNT19}
Irit Dinur, Prahladh Harsha, Tali Kaufman, Inbal~Livni Navon, and Amnon
  Ta{-}Shma.
\newblock List decoding with double samplers.
\newblock In {\em Proceedings of the 30th ACM-SIAM Symposium on Discrete
  Algorithms}, pages 2134--2153, 2019.

\bibitem[FK96]{FK96:focs}
A.~Frieze and R.~Kannan.
\newblock The regularity lemma and approximation schemes for dense problems.
\newblock In {\em Proceedings of the 37th IEEE Symposium on Foundations of
  Computer Science}, 1996.

\bibitem[GHK11]{DBLP:journals/corr/abs-1001-1386}
Venkatesan Guruswami, Johan Håstad, and Swastik Kopparty.
\newblock On the list-decodability of random linear codes.
\newblock {\em IEEE Transactions on Information Theory}, 57(2):718--725, 2011.

\bibitem[GI05]{1512415}
V.~Guruswami and P.~Indyk.
\newblock Linear-time encodable/decodable codes with near-optimal rate.
\newblock {\em IEEE Transactions on Information Theory}, 51(10):3393--3400,
  2005.

\bibitem[GR06]{GuruswamiR06}
Venkatesan Guruswami and Atri Rudra.
\newblock Explicit capacity-achieving list-decodable codes.
\newblock In {\em Proceedings of the 38th ACM Symposium on Theory of
  Computing}, pages 1--10, 2006.

\bibitem[HLW06]{HooryLW06}
Shlomo Hoory, Nathan Linial, and Avi Wigderson.
\newblock Expander graphs and their applications.
\newblock {\em Bull. Amer. Math. Soc.}, 43(04):439–562, August 2006.

\bibitem[Jer23]{Jer23}
Fernando~Granha Jeronimo.
\newblock {Fast Decoding of Explicit Almost Optimal eps-Balanced q-Ary Codes
  And Fast Approximation of Expanding k-CSPs}.
\newblock In {\em Approximation, Randomization, and Combinatorial Optimization.
  Algorithms and Techniques (APPROX/RANDOM 2023)}, 2023.

\bibitem[JMST25]{JMST25}
Fernando~Granha Jeronimo, Tushant Mittal, Shashank Srivastava, and Madhur
  Tulsiani.
\newblock Explicit codes approaching generalized singleton bound using
  expanders.
\newblock In {\em Proceedings of the 57th ACM Symposium on Theory of
  Computing}, 2025.

\bibitem[JQST20]{JQST20}
Fernando~Granha Jeronimo, Dylan Quintana, Shashank Srivastava, and Madhur
  Tulsiani.
\newblock Unique decoding of explicit $\epsilon$-balanced codes near the
  {G}ilbert--{V}arshamov bound.
\newblock In {\em Proceedings of the 61st IEEE Symposium on Foundations of
  Computer Science}, 2020.

\bibitem[JST21]{JST21}
Fernando~Granha Jeronimo, Shashank Srivastava, and Madhur Tulsiani.
\newblock {Near-Linear Time Decoding of Ta-Shma’s Codes via Splittable
  Regularity}.
\newblock In {\em Proceedings of the 53rd ACM Symposium on Theory of
  Computing}, 2021.

\bibitem[JST23]{JST23}
Fernando~Granha Jeronimo, Shashank Srivastava, and Madhur Tulsiani.
\newblock { List Decoding of Tanner and Expander Amplified Codes from Distance
  Certificates }.
\newblock In {\em Proceedings of the 64th IEEE Symposium on Foundations of
  Computer Science}, 2023.

\bibitem[PK22]{PK22}
Pavel Panteleev and Gleb Kalachev.
\newblock Asymptotically good quantum and locally testable classical ldpc
  codes.
\newblock In {\em Proceedings of the 54th ACM Symposium on Theory of
  Computing}, 2022.

\bibitem[RR23]{RR23}
S.~Richelson and S.~Roy.
\newblock Gilbert and {V}arshamov meet {J}ohnson: List-decoding explicit
  nearly-optimal binary codes.
\newblock In {\em Proceedings of the 64th IEEE Symposium on Foundations of
  Computer Science}, 2023.

\bibitem[RS10]{RodlS10}
Vojt{\v{e}}ch R{\"o}dl and Mathias Schacht.
\newblock {\em Regularity Lemmas for Graphs}.
\newblock Springer Berlin Heidelberg, 2010.

\bibitem[RZWZ21]{RWZ20}
Noga Ron-Zewi, Mary Wootters, and Gilles Zémor.
\newblock Linear-time erasure list-decoding of expander codes.
\newblock {\em IEEE Transactions on Information Theory}, 67(9), 2021.

\bibitem[SS96]{SS96}
M.~Sipser and D.~Spielman.
\newblock Expander codes.
\newblock {\em IEEE Transactions on Information Theory}, 42(6):1710--1722,
  1996.
\newblock Preliminary version in {\em Proc. of FOCS'94}.

\bibitem[ST25]{ST25}
Shashank Srivastava and Madhur Tulsiani.
\newblock List decoding expander-based codes up to capacity in near-linear
  time, 2025.

\bibitem[TS17]{Ta-Shma17}
Amnon Ta-Shma.
\newblock Explicit, almost optimal, epsilon-balanced codes.
\newblock In {\em Proceedings of the 49th ACM Symposium on Theory of
  Computing}, STOC 2017, pages 238--251, New York, NY, USA, 2017. ACM.

\bibitem[Vad12]{Vadhan12}
Salil~P. Vadhan.
\newblock {\em Pseudorandomness}.
\newblock Now Publishers Inc., 2012.

\end{thebibliography}


\end{document}